%%% fix the plain theorem style to have numbers upright
\RequirePackage[thmmarks]{ntheorem}
\makeatletter
\renewtheoremstyle{plain} 
{\item[\hskip\labelsep \theorem@headerfont ##1\ \textup{##2}\theorem@separator]} 
{\item[\hskip\labelsep \theorem@headerfont ##1\ \textup{##2}\ (##3)\theorem@separator]}
\makeatother

%%% save \document and \arabic to be reinstated after loading the class
\let\latexdocument\document
\let\latexarabic\arabic

%%% load the class (use the option you need)
\documentclass[lineo]{article}
%\documentclass[lineno]{biometrika}

%\linespread{1.5}

%%% reinstate the original \document and \arabic
\let\document\latexdocument
\let\arabic\latexarabic

%%% make \rm into a no-op

\usepackage{amsmath}
\usepackage{amssymb,amsfonts,bbm,graphics,rotating,graphicx,stmaryrd, natbib,url, xcolor,ulem}
%% Please use the following statements for
%% managing the text and math fonts for your papers:
\usepackage{times}
\usepackage{bm}
\usepackage{natbib}

\graphicspath{{./art/}}

\usepackage[plain,noend]{algorithm2e}

\makeatletter
\renewcommand{\algocf@captiontext}[2]{#1\algocf@typo. \AlCapFnt{}#2} % text of caption
% default definition
\def\@algocf@capt@plain{top}
\renewcommand{\algocf@makecaption}[2]{%
	\addtolength{\hsize}{\algomargin}%
	\sbox\@tempboxa{\algocf@captiontext{#1}{#2}}%
	\ifdim\wd\@tempboxa >\hsize%     % if caption is longer than a line
	\hskip .5\algomargin%
	\parbox[t]{\hsize}{\algocf@captiontext{#1}{#2}}% then caption is not centered
	\else%
	\global\@minipagefalse%
	\hbox to\hsize{\box\@tempboxa}% else caption is centered
	\fi%
	\addtolength{\hsize}{-\algomargin}%
}
\makeatother

\newcommand{\bomega}{{\boldsymbol\omega}}
\renewcommand{\L}{\mathbb{L}}
\newcommand{\vvvert}{\rvert\hspace{-0.12em}\rvert\hspace{-0.12em}\rvert}

\newcommand{\R}{\ensuremath{\mathbb{R}}}
\newcommand{\C}{\ensuremath{\mathcal{C}}}

\newcommand{\var}{\mbox{Var}}

\newcommand{\argmin}{\mathop{\mathrm{arg\,\min}}}

\newcommand{\1}{\mathbf{1}}

\newcommand{\MD}[1]{\text{M}_{\Delta}}

  %ancien \red{C_p}

\newtheorem{theorem}{Theorem}
\newtheorem{definition}{Definition}
\newtheorem{remark}{Remark}
\newtheorem{proposition}{Proposition}
\newtheorem{corollary}{Corollary}
\newtheorem{assumption}{Assumption}
\newtheorem*{proof}{Proof}
\newtheorem{lemma}{Lemma}
\usepackage{authblk}
\usepackage{xr}
%\makeatletter
%\newenvironment{pHyp}[2]
%{%
%	\renewcommand{\assumptionnumber}{\sc Assumption #1($#2$)}%
%	\begin{assumption*}%
%		\protected@edef\@currentlabel{#1}%
%	}
%	{%
%	\end{assumption*}
%}
%\makeatother
%\newcommand{\asref}[2]{\ref{#1}($#2$)}

\providecommand{\keywords}[1]
{
  \small	
  \textbf{\textit{Keywords---}} #1
}

\begin{document}

%% The left and right page headers are defined here:
\markboth{R. Belhakem et~al.}{}

%% Here are the title, author names and addresses
\title{Minimax estimation of Functional Principal Components from noisy discretized functional data}

\author[1]{Ryad Belhakem \thanks{belhakem@ceremade.dauphine.fr}}
\affil[1,3,4]{CEREMADE, CNRS, Universit\'e Paris-Dauphine, Universit\'e PSL, 75016 PARIS, FRANCE}

\author[2]{Franck Picard \thanks{franck.picard@ens-lyon.fr}}
\affil[2]{Centre National de la Recherche Scientifique, Laboratoire de Biologie et Mod\'elisation de la Cellule\\
46, all\'ee d'Italie
69007 Lyon}

\author[3]{Vincent Rivoirard \thanks{rivoirard@ceremade.dauphine.fr}}

\author[4]{Angelina Roche 
\thanks{roche@ceremade.dauphine.fr}}

\maketitle

\newpage

\begin{abstract}
Functional Principal Component Analysis is a reference method for dimension reduction of curve data. Its theoretical properties are now well understood in the simplified case where the sample curves are fully observed without noise. However, functional data are noisy and necessarily observed on a finite discretization grid.
Common practice consists in smoothing the data and then to compute the functional estimates, but the impact of this denoising step on the procedure's statistical performance are rarely considered. Here we prove new convergence rates for functional principal component estimators. We introduce a double asymptotic framework: one corresponding to the sampling size and a second  to the size of the grid. We prove that estimates based on projection onto histograms show optimal rates in a minimax sense. Theoretical results are illustrated on simulated data and the method is  applied to the visualization of genomic data. 
\end{abstract}

\keywords{
 Functional data analysis, Principal Components Analysis, minimax rates.
}

\newpage

\section{Introduction}

Functional Data Analysis (FDA) is a statistical framework dedicated to curve data that are supposed to be the realizations of random functions \citep{FV06,Ram,FR11}. %The infinite-dimensional nature of the data under study has to be taken into account in order to develop efficient estimation procedures. 
Hence, in this framework, the infinite-dimensional nature of the process that generated the data is central to develop efficient estimation procedures. Functional Principal Components Analysis (fPCA) is a common method to reduce dimensionality of curve data and has been considered either as an exploration tool \citep{Ram} or as a pre-processing step for many statistical procedures \citep{Kalogridis,Goode,T.V.E,Wookyeong,won,seb}. In most theoretical works (see e.g. \cite{CJ10,CY12,Mas}) it is assumed that the functional data $Z_i(t)$ is observed for individual $i$ at all points $t$ in an interval (e.g. $[0,1]$). However, in practice, the $Z_i$'s are observed on a finite grid $t_{i,0},\hdots,t_{i,p_i-1}$ and can be corrupted by noise. Few theoretical works in the literature of FDA study the effect of the sampling scheme on the performance of the estimation procedures. 

Here, we consider the case where data are observed on a fixed and regular design (fixed grid), and we focus on the estimation of the elements of the functional principal components, from a non-asymptotic point of view. The case of a fixed regular grid actually corresponds to a large number of applications in FDA, for instance electricity consumption curves~\citep{DGP20}, temperature or precipitation curves~\citep{Ram}, or spectrometric datasets~\citep{Pham+10}) to name a few. 
\textcolor{black}{More precisely, in this paper, we observe  $(Y_i(t_h))_{i=1,\hdots,n, h=0,\hdots,p-1}$
generated from the following statistical model 
\begin{equation*}
	Y_i(t_h)=Z_i(t_h)+\varepsilon_{i,h}, \quad \quad i=1,\hdots,n,
\end{equation*}
where $\{\varepsilon_{i,h}\}_{i=1,\hdots,n; h=0,\hdots,p-1}$ is an i.i.d. sequence of centered  errors, the $Z_i$'s are i.i.d. random elements of the space of continuous functions on the interval $[0,1]$ and $t_h=h/(p-1)$, $h=0,\hdots,p-1$. See Section~\ref{setting} for the introduction of the precise setting.
}

Two statistical frameworks are often considered: longitudinal data analysis (LDA) or FDA and it seems important to clarify at this stage the links and differences between these two approaches.  %In LDA, the sampling points depends from an individual to the other and are usually small. 
In LDA, data are often observed at random, hence the sampling points are random and depend on each individual. From a theoretical point of view, the number of sampling points can be supposed bounded, and results concerning convergence rates only involve the number of individuals $n$. There have been a lot of methodological and theoretical works in the case where the observations are observed on a random grid with a small number of observations per subject (sparse longitudinal data), we refer to \cite{YMW05,Hall,DMT18,ZSHJ22} and references therein. In this paper, we consider the case where the number of sampling points is usually considered to be large and shared by all curves, which is frequently the case in FDA and motivates the characterization of a double asymptotic in $n$ and $p$.

At first sight, inference for fPCA  and classical PCA on the $n\times p$ matrix of observations is comparable. Following this idea, when $p \gg n $, functional PCA would be confronted to inconsistency problems as in standard multivariate PCA (see \citealt{Johnstone} and references therein). However, this is counter-intuitive in the functional framework: when $p$ increases, more and more information is recorded on the underlying process, which should improve statistical performance. Indeed, in the continuous non-noisy case, where $Z_i(t)$ is observed at all points $t$, corresponding to $p=+\infty$ and $\sigma=0$, it is known from the works of \citet{DPR82} that the estimation of the principal components is consistent. Moreover, an optimal $n^{-1}$ parametric rate (up to a logarithmic factor) can be achieved for the risk associated to the $\mathbb L^2$-error \citep[Theorem 4.5, p. 106]{Bosq} or to the operator norm of the projector \citep{Mas}. Considering our data as multivariate data would resume to ignore the underlying regularity of the processes $Z_i$ and the fact that, when $p$ is large, $Z_i(t_h)$ is close to $Z_i(t_{h+1})$. Then, specific attention should be paid.

From a theoretical perspective, the main challenge is to assess the rates of convergence of the estimators of the elements of the principal components basis, in a very specific framework. FDA combines two very different convergence settings: a first one associated with the sampling of $n$ independent processes $Z_1,\hdots,Z_n$, and a non-parametric setting since data are functions, here observed at $p$ points. 
\cite{Hall} investigate the estimation of the elements of the fPCA basis in the case where the number of discretization points by individual is bounded and the grid is random (i.e. $p_i\leq p$ with $p$ fixed and $n\to+\infty$). They obtain non-parametric rates for a kernel smoothing estimator which are optimal under the assumption that the function to estimate is exactly two-times differentiable. However, the authors themselves point out that their results are no longer valid in the case of a fixed regular grid, where consistent estimation is not possible when $p$ is fixed. In the context of the estimation of the mean function of a functional data sample, \citet{CY11} found that the optimal rates of convergence are completely different if we consider a fixed grid or a random grid. It appears that, in the case of a fixed grid, the minimax estimation rates of the function principal components basis remain unknown. Hence we propose to investigate the joint impact of noise and discretization (sampling scheme) on the estimation of the eigenelements of the covariance operator.

In addition, a common practice is to first smooth the data, usually by projecting it into a splines basis with a roughness penalty or via kernel smoothing (see e.g. \citealt{Ram}). However, the statistical implications of the smoothing step are rarely debated, whereas it raises some concern, mainly related to the level of regularity of the underlying process versus the choice of the smoothing basis, and the capacity of distinguishing noise from signal through this method. \citep[Section 2.2, p. 2336]{CY11} point out that, in the case of mean estimation, there is no benefit from smoothing in terms of convergence rates when the observation grid is fixed. This implies that usual splines or kernel smoothing step can lead to suboptimal estimators if the smoothing parameter is not well chosen (see Theorem 2.2 of \cite{CY11}).

To fully understand the statistical complexity of functional principal component analysis, it is necessary to compute the minimax rates of estimation and to compare it with the parametric bounds obtained by \citet{Mas}. Upper bounds in the case of noisy discretized data have also been proposed \citep{BX15,DP19}. \cite{BX15} established results under strong conditions of the eigenvalues of the operator. \cite{DP19} studied a generalization to heterogeneous noise with possible time dependency at the price of two strong assumptions: analyticity of the eigenfunctions and finite rank of the covariance operator  of the signal; the achieved rate is then $n^{-1}+p^{-2}$.

Here we study convergence rates for the estimation of the eigenelements of the covariance operator $\Gamma$ under a mild regularity assumption on the process $Z$. Denoting by $\alpha$ this regularity, our assumption is equivalent to assuming that the kernel $K$ is a bivariate $\alpha$-H\"older continuous function. Under a moment assumption for process $Z$, we obtain rates of the form
\[
n^{-1}+p^{-2\alpha}.
\]
These rates, which are new, are, moreover, optimal in the minimax sense for the estimation of the first eigenfunction (we prove a lower bound). We illustrate these rates in practice on simulations.

These rates tell us a lot about the behavior of the estimated eigenfunctions under the double asymptotic in $p$ and $n$. When $p$ is large compared to $n^{1/(2\alpha)}$, we find the optimal parametric rate $n^{-1}$ obtained by \citet{DPR82,Mas} when the curves $Z_i$ are fully observed and without noise. Moreover, even though the problem is intrinsically non-parametric, and in the presence of noisy observations,  the simple estimator obtained by projection on the $p$-bins histogram system reaches the optimal minimax rate. Therefore, we do not need regularization, which may be counter-intuitive. The knowledge of $\alpha$ is also not necessary. These results are confirmed by the simulation study we have conducted which also suggests that the same conclusion applies for the estimation of the eigenvalues for which we do not know so far whether the rate obtained is minimax optimal. Lastly, at the end of Section~\ref{sec:upperspecific}, we discuss our results and compare them with the most recent ones.  Our results show that the underlying regularity of the data plays a special role in the theoretical developments, with important implications in practice. In our setting with equispaced deterministic observations, the sampling scheme cannot be too sparse to obtain parametric rates and even consistency. This is a main difference with the random sampling scheme considered by \citep{Hall}. We complete our theoretical and empirical study by two original applications of functional principal component analysis, on single-cell expression data analysis to characterize the immune response to viral infection, and to genomic data for the characterization of replication origins along the human genome with respect to the spatial distribution of particular sequence motifs called G-quadruplexes \citep{ZZH20}.

\textbf{Notations:} We denote $\|\cdot\|$ the $\L^2$-norm associated with the scalar product $\langle\cdot,\cdot\rangle$ and $\|\cdot\|_{\ell_2}$ the $\ell_2$-norm for a vector. For any continuous operator $T$ on $\L_2[0,1]$, we denote $\vvvert T\vvvert $ the operator norm of $T$ associated to $\|\cdot\|$ and defined by $\vvvert T\vvvert =\sup_{f\in\L_2[0,1], \|f\|=1}\|Tf\|$.
For $P$ a probability measure, we denote $E$ the associated expectation.  We denote $P_Z$ the distribution of the process $Z$ and $E_Z$ the associated expectation. The set of continuous functions on $[0,1]$ is denoted $\C^0$. We adopt the following notation: for two sequences $a=(a_{n,p})_{n,p\geq 1}$, $b=(b_{n,p})_{n,p\geq 1}$ of real fixed quantities or random variables, we denote $a\lesssim b$ if there exists \textcolor{black}{a universal positive constant $c$} such that $a_{n,p}\leq c b_{n,p}$ a.s. for all $n,p\geq 1$. We define $\text{sign}(u)=\1_{\{u\geq 0\}}-\1_{\{u<0\}}$ for any $u\in\R$.

%%%%%%%%%%%%%%%%%%%%%%%%%%%%%%%%
%%%%%%%%%%%%%%%%%%%%%%%%%%%%%%%%
\section{Functional PCA for discretely observed random functions}\label{sec:estim}
\textcolor{black}{Section~\ref{setting} introduces the setting considered along this paper. We then define in Section~\ref{section:estimator} the statistical procedure studied in Sections~\ref{sec:theory} and \ref{sec:simus}.}
%%%%%%%%%%%%%%%%%%%%%%%%%%%%%%%%
\textcolor{black}{\subsection{Setting of the paper}\label{setting}
Along this paper, as mentioned in Introduction, we consider the following statistical model 
\begin{equation}\label{model}
	Y_i(t_h)=Z_i(t_h)+\varepsilon_{i,h}, \quad \quad i=1,\hdots,n,\quad h=0,\hdots,p-1,
\end{equation}
where 
\begin{equation}\label{th}
t_h=\frac{h}{p-1},\quad h=0,\hdots,p-1,
\end{equation}
the $Z_i$'s are i.i.d. random elements of the space of continuous functions on the interval $[0,1]$ with the same distribution as $Z$ such that 
\begin{equation}\label{moments-Z}
E(\|Z\|^2)<\infty.
\end{equation}
In Equation~\eqref{model}, we assume that 
$\{\varepsilon_{i,h}\}_{i=1,\hdots,n; h=0,\hdots,p-1}$ is an i.i.d. sequence of centered Gaussian errors with variance $\sigma^2$ and assume that the $\varepsilon_{i,h}$'s are independent from the $Z_i$'s. Observe that the Gaussian assumption, which avoids tedious technicalities, can be relaxed. See subsequent Remarks~\ref{rem1:Gaussian} and \ref{rem2:Gaussian}.
Under \eqref{moments-Z}, we can introduce the covariance operator $\Gamma$ defined by
$$
\Gamma(f)(\cdot) = E (\langle f,Z\rangle Z(\cdot) ), \quad f \in {\mathbb L}^2. 
$$
Let  $\eta^*=\{ \eta^*_d, d\in \mathbb{N}^*\}$ the eigenfunctions of $
\Gamma$ and $\mu^*=\{\mu^*_d,d \in \mathbb{N}^* \,; \, \mu^*_1>\mu^*_2> \hdots\}$ the associated eigenvalues. We assume that these eigenvalues are distinct. The Karhunen-Lo\`eve representation of $Z$ is then \citep{Bosq}:
\begin{equation}\label{eq:KL}
Z = \sum_{d \in \mathbb{N}^*} \zeta^*_d \mu^{*1/2}_d\eta_d^*,
\end{equation}
where  $\zeta^*=\{ \zeta^*_d, d\in \mathbb{N}^*\}$ is a sequence of non-correlated centered random variables of variance~1, usually called the principal components scores and the family $(\eta_j^*)_{j\geq 1}$ is an orthonormal basis of $\mathbb L^2$.
 Then, for any integer $\mathcal D$, the best $\mathcal D$-dimensional approximation  of process $Z$ is
spanned by the first $\mathcal D$ eigenfunctions. Our aim is to provide estimators of these eigenelements based on noisy discretized data, and to assess their statistical performance.}
%%%%%%%%%%%%%%%%%%%%%%%%%%%%%%%%
\subsection{Estimator of the covariance operator}\label{section:estimator}
In \citep{DPR82,Bosq,Ram,Hall}, the $Z_i(t)$'s are observed for all $t\in[0,1]$ without noise and the estimator of $\eta_d^*$ is the eigenfunction $\widehat\eta_d$ associated to the $d$-th largest eigenvalue of the empirical covariance operator
$$
\widehat\Gamma(f)(\cdot)  = \frac1n\sum_{i=1}^n\langle f,Z_i\rangle Z_i(\cdot) , \quad f \in \L^2.
$$ 
However, when process $Z$ is observed on a grid, the empirical covariance operator $\widehat\Gamma$ can not be calculated and must be approximated. 
%\enleverp{Following the usual approach of functional data analysis , the estimation procedure for functional principal component analysis consists in estimating the covariance operator $\Gamma$ from the data:
%\begin{eqnarray*}
%\Gamma(f)(\cdot) &=& \int_0^1 K(s, \cdot)f(s)ds, \quad  f \in \L^2,
%\end{eqnarray*}
%that \enlever{is well defined provided $E(\|Z\|^2)<+\infty$, which is assumed in the following.}}
%\enleverp{Then, a natural approach consists in estimating $K$ in a first step using the empirical covariance kernel:
%$$
%\widehat K(s,t)=\frac1n\sum_{i=1}^n Z_i(t)Z_i(s), \quad  (s,t) \in [0,1]^2.
%$$ }
%\enleverp{Since this estimator cannot be calculated directly from the data, w}We introduce $(\phi_\lambda)_{\lambda \in \Lambda_D}$, an orthonormal system of $\L^2$ with $\Lambda_D$ a finite set of size $D$. \enlever{In the following, we will consider histograms and Haar wavelets. }

Then, in the setting of Model~\eqref{model}, we first reconstruct the observed curves on the entire interval and we define, for $i = 1, \hdots, n,$
\[
\widetilde Y_i(t) = \sum_{\lambda\in\Lambda_D}\widetilde y_{i,\lambda} \phi_\lambda(t),
\quad 
\widetilde y_{i,\lambda}=\frac1p\sum_{h=0}^{p-1}Y_i(t_h)\phi_\lambda(t_h), \quad t \in [0,1],
\]
where $\{\phi_\lambda,\lambda\in\Lambda_D\}$ is an orthonormal system of $\mathbb L^2([0,1])$ of cardinality $D\geq 1$ and $\widetilde y_{i,\lambda}$ an approximation of $\langle Y_i,\phi_\lambda\rangle$. Similarly, we define $\widetilde Z_i(t)$, $\widetilde z_{i,\lambda}$, $\widetilde E_i(t)$, $\widetilde \varepsilon_{i,\lambda}$ by replacing $Y_i(t_h)$ in the previous expressions by $Z_i(t_h)$, and $\varepsilon_{i,h}$.  %\enleverp{A natural estimator of the covariance kernel $K$ is then
%\begin{equation*}%\label{eq:Kestim}
%\widehat{K}_\phi(s,t)=\frac1n\sum_{i=1}^n \widetilde Y_i(t)\widetilde Y_i(s),\quad  (s,t) \in [0,1]^2,
%\end{equation*}
%and the covariance operator $\Gamma$ can be estimated by
%\begin{equation*}
%\widehat{\Gamma}_\phi(f)(\cdot) = \int_0^1 \widehat{K}_\phi(s,\cdot)f(s)ds, \quad f \in \L^2.
%\end{equation*}
%Since $\widehat{K}_\phi$ is symmetric, the operator $\widehat{\Gamma}_\phi$ is self-adjoint and is also finite-rank since $\operatorname{Im}(\widehat{\Gamma}_\phi)\subset\text{span}(\widetilde Y_1,\hdots,\widetilde Y_n)$. Therefore, $\widehat{\Gamma}_\phi$ is a compact operator.
%From the spectral theorem, we know that there exists a $\L^2$-basis of eigenfunctions of $\widehat{\Gamma}_\phi$, denoted by $ \widehat{\eta}_\phi = \{ \widehat{\eta}_{\phi,d}, d \in \mathbb{N}^* \}$, with associated eigenvalues $ \widehat{\mu}_\phi = \{ \widehat{\mu}_{\phi,d}, d \in \mathbb{N}^*;  \widehat{\mu}_{\phi,1}\geq\widehat{\mu}_{\phi,2} \geq\hdots \}$. We then obtain estimates of the principal components that are analyzed in the minimax setting.}

A natural estimator of the covariance operator is then 
\begin{equation*}
\widehat{\Gamma}_\phi(f)(\cdot) = \frac1n\sum_{i=1}^n\langle f,\widetilde Y_i\rangle\widetilde Y_i(\cdot), \quad f \in \L^2.
\end{equation*}
It is easily seen from the definition above that the operator $\widehat{\Gamma}_\phi$ is self-adjoint. It is also finite-rank hence compact. Then, by the diagonalization theorem for self-adjoint compact operators~\citep[see][Theorem 6.11, p. 167]{Brezis11}, there exists a basis $(\widehat{\eta}_{\phi,d})_{d\geq 1}$ of $\mathbb L^2$ made of eigenfunctions of $\widehat{\Gamma}_\phi$. In the following, we study the $\mathbb L^2$-risk of the estimator $\widehat{\eta}_{\phi,d}$ for $d=1,\hdots,\mathcal D$.
%%%%%%%%%%%%%%%%%%%%%%%%%%%%%%%%
%%%%%%%%%%%%%%%%%%%%%%%%%%%%%%%%
\section{Minimax rates of the eigenfunction estimator}\label{sec:theory}
\textcolor{black}{Along this section, we consider the setting of Section~\ref{setting}.}
%%%%%%%%%%%%%%%%%%%%%%%%%%%%%%%%
\subsection{Smoothness class for the functional curve $Z$}\label{sec:regularity}
Minimax rates of convergence depend on the underlying smoothness of the process of interest. In the sequel, for any $\alpha\in(0,1]$ and $L>0$ we consider the regularity class
\begin{eqnarray*}
\mathcal R_\alpha(L)&=&\Big\{ P, \text{
probability measure on }\C^0\text{ such that } \nonumber\\
&& \hspace{1.5cm}\int_{\C^0}\{z(t)-z(s)\}^2dP(z)\leq L|t-s|^{2\alpha}, \quad (s,t) \in[0,1]^2
\Big\}. 
%\label{hyp:regZ}
\end{eqnarray*}
This regularity set is natural. Indeed, we can for instance remark that $P_Z$, the distribution of $Z$, satisfies
\[
P_Z\in  \mathcal R_\alpha(L)\Leftrightarrow E_Z[\{Z(t)-Z(s)\}^2]\leq L|t-s|^{2\alpha},\quad  (s,t)\in[0,1]^2.
\]
This condition can be seen as a regularity assumption on the covariance kernel 
\begin{equation*}
	K(s,t)= E\big\{Z(s)Z(t)\big\}, \quad (s,t) \in [0,1]^2.
\end{equation*}
Indeed, our regularity condition $E(\|Z\|^2)<+\infty$ combined with the condition $P_Z\in  \mathcal R_\alpha(L)$ imply that kernel $K$ is  bounded  
\[\|K\|_\infty=\sup_{(s,t)\in[0,1]^2}|K(s,t)|<\infty,\]
%\vinc{donner une vraie borne ?\\} 
and is an $\alpha$-H\"older continuous function. More precisely, for any $(s,s',t,t')\in[0,1]^4$,
\begin{equation}\label{Kernelregularity}
P_Z\in  \mathcal R_\alpha(L)\Rightarrow|K(s,t)-K(s',t')|\leq (\|K\|_\infty L)^{1/2}\left(|s-s'|^\alpha+|t-t'|^\alpha\right).
\end{equation}
Conversely, if $K$ is a bivariate $\alpha$-H\"older continuous function, we know that there exists $L'>0$ such that 
\[
|K(s,t)-K(s',t')|\leq L'\left(|s-s'|^2+|t-t'|^2\right)^{\alpha/2}.
\]
Then 
\[
 E_Z[\{Z(t)-Z(s)\}^2]=K(s,s)-2K(s,t)+K(t,t)\leq 2L'|s-t|^\alpha,
\]
and $P_Z\in\mathcal R_\alpha(2L')$.

Classical Gaussian processes belong to $\mathcal R_\alpha(L)$ for $\alpha$ and $L$ well chosen. For instance, if $Z$ is a standard Brownian motion or a Brownian bridge then $P_Z\in\mathcal R_{1/2}(1)$. More generally, fractional Brownian motions with Hurst exponent $\alpha$ and Hurst index $C_\alpha$ belong to $\mathcal R_\alpha( C_\alpha)$. If $Z$ is an Ornstein-Uhlenbeck process, its covariance function is $K(s,t)=\exp(-|t-s|/2)$, then it verifies 
\[
E_Z[\{Z(t)-Z(s)\}^2]=2\big(1-e^{-|t-s|/2}\big)\leq |t-s|, \quad (s,t)\in {\mathbb R}^2,
\]
which implies $P_Z\in\mathcal R_{1/2}(1)$. We refer to \cite{Lifshits95} for the precise definitions and properties of these processes.\\
\begin{remark}\label{rk:L}We also remark that $L$ depends on the eigenvalues sequence $(\mu_d^*)_{d\geq 1}$. Indeed, suppose e.g. that, for all $d\geq 1$, the eigenfunction $\eta_d^*$ is $\alpha$-holdérian (i.e. there exists $L_d>0$ such that $|\eta_d^*(t)-\eta_d^*(s)|\leq L_d|t-s|^\alpha$, for all $t,s\in[0,1]$), then, if $\sum_{d\geq 1}\mu_d^*L_d^2<+\infty$, from the Karhunen-Loève decomposition~\eqref{eq:KL},%\\ we can write,
%\begin{eqnarray*}
%E_Z[{Z(t)-Z(s)}^2]&=&E_Z\left[\left\{\sum_{d\geq 1}\zeta_d^*\mu_d^{*1/2}(\eta_d^*(t)-\eta_d^*(s))\right\}^2\right]\nonumber\\
%&\leq& E_Z\left[\left\{\sum_{d\geq 1}L_d\zeta_d^*\mu_d^{*1/2}\right\}^2\right]\times |t-s|^{2\alpha}
%\end{eqnarray*}
%now remark that $\sum_{d\geq 1}L_d\zeta_d^*\mu_d^{*1/2}$ is a centered random variable, with variance $\sum_{d\geq 1}L_d^2\mu_d^*$ (since we recall that the $\zeta_d^*$'s are uncorrelated. Then $Z\in\mathcal R_{\alpha}(L)$ with $L=\sum_{d\geq 1}L_d^2\mu_d^*$.
%\\
we can write, since the $\zeta_d^*$'s are centered and uncorrelated,
\begin{eqnarray*}
E_Z[\{Z(t)-Z(s)\}^2]&=&E_Z\Bigg[\Big\{\sum_{d\geq 1}\zeta_d^*\mu_d^{*1/2}(\eta_d^*(t)-\eta_d^*(s))\Big\}^2\Bigg]\nonumber\\
&=&\sum_{d\geq 1}\mu_d^*(\eta_d^*(t)-\eta_d^*(s))^2\\
&\leq&\sum_{d\geq 1}\mu_d^*L_d^2|t-s|^{2\alpha}.
\end{eqnarray*}
Then $Z\in\mathcal R_{\alpha}(L)$ with $L=\sum_{d\geq 1}L_d^2\mu_d^*$.
\end{remark}
%%%%%%%%%%%%%%%%%%%%%%%%%%%%%%%%
\subsection{Lower bound}\label{sec:lowerbound}
The lower bound of the risk for estimating  eigenfunctions can be viewed as a benchmark to achieve. We focus on the first eigenfunction, but a similar result, though more technical, could be obtained for the other eigenfunctions. However, since the estimation of higher order eigenfunctions is a more complex statistical problem, it seems intuitively reasonable to us that the lower bound on these eigenfunctions is (at worst) of the same order.
\begin{theorem}\label{thm:borne_inf_fp}
Let $\alpha\in(0,1]$ and $L>0$.  \textcolor{black}{Assume that the rank of the covariance operator $\Gamma$ is larger than 2.} Then, for any $n\geq 1$ and $p\geq 1$, we have:
\[
\inf_{\widehat\eta_1}\sup_{P_Z\in\mathcal R_\alpha(L)}E(\|\widehat\eta_1-\eta_1^*\|^2)\geq c \Big(p^{-2\alpha}+n^{-1}\Big),
\] 
where $c$ is a positive constant depending on $L$, $\alpha$ and $\sigma$ and the infimum is taken over all estimators i.e. all measurable functions of the observations $\{Y_i(t_h), h=0,\hdots,p-1, i=1,\hdots,n\}$. 
\end{theorem}
Theorem~\ref{thm:borne_inf_fp} is obtained by combining Propositions~\ref{prop:borne_inf_fpn} and ~\ref{prop:borne_inf_fpp} stated in Appendix~\ref{proof:A}.% Both propositions rely on particular Gaussian processes whose parameters are chosen so that the associated distributions belong to $\mathcal R_\alpha(L)$. \textcolor{red}{However, since the supremum is evaluated on the whole set $\mathcal R_\alpha(L)$, this Gaussian assumption is not necessary in the statement of Theorem~\ref{thm:borne_inf_fp}. Indeed,...}

Proposition~\ref{prop:borne_inf_fpn} provides the parametric rate $n^{-1}$, which is expected in our setting where we observe $n$ curves. This rate has been proven to be optimal, up to logarthmic terms, in the case where the curves is supposed to be observed at all points (see e.g.~\citealt{Mas}). More precisely, %assuming the very mild assumption $p\geq 4$, 
Proposition~\ref{prop:borne_inf_fpn} gives
\[
\inf_{\widehat\eta_1}\sup_{P_Z\in\mathcal R_\alpha(L)}\mathbb E[\|\widehat\eta_1-\eta_1^*\|^2]\geq c_1n^{-1},\quad \inf_{\widehat\eta_2}\sup_{P_Z\in\mathcal R_\alpha(L)}\mathbb E[\|\widehat\eta_2-\eta_2^*\|^2]\geq c_1n^{-1},
\] 
where $c_1$ depends on $L$, $\alpha$ and $\sigma$. The key point to establish the lower bound is the explicit form of the Kullback-Leibler divergence between two Gaussian distributions. The first two eigenvalues introduced in the building of models of the proof of Proposition~\ref{prop:borne_inf_fpn} provide a constant spectral gap (i.e. $\mu_2^*-\mu_1^*$ is a constant independent of both $n$ and $p$). 
 
The lower bound of the minimax risk by $p^{-2\alpha}$ relies on the construction of two processes $Z_0,Z_1\in \mathcal R_\alpha(L)$ with first eigenfunctions distant of $p^{-\alpha}$ from each other and such that $Z_0(t_h)=Z_1(t_h)$ almost surely for all $h=0,\hdots,p-1$ (see Appendix~\ref{proof:borneinfp} and in particular Equation~\eqref{Z:def}).

Observe that if $p$, the number of observations per individual, is bounded, then rates cannot go to 0. In particular, consistency cannot be achieved by any estimate in our statistical model if $p$ is a constant. Our results corroborate arguments of the discussion section 3.2 of \citet{Hall}. Parametric rates can be achieved only if $p$ is large enough, namely larger than $n^{1/(2\alpha)}$.

We can remark that the constant $L$ appearing in the regularity class is strongly linked with the eigenvalues sequence $(\mu_d^*)_{d\geq 1}$. This can be seen via Remark~\ref{rk:L} but also via the proofs of Propositions~\ref{prop:borne_inf_fpn} and~\ref{prop:borne_inf_fpp} where the first eigenvalues of the processes we construct  are upper-bounded, up to multiplicative constants, by $L$.
%%%%%%%%%%%%%%%%%%%%%%%%%%%%%%%%
\subsection{General upper bounds}\label{sec:upperbound}

We now derive upper bounds for estimates $\widehat{\eta}_{\phi,d}$. For this purpose, we set 
\begin{equation*}%\label{margin1}
b_1=8(\mu_1^*-\mu_2^*)^{-2}
\end{equation*}
and for $d= 2,\hdots,\mathcal  D$,
\begin{equation*}%\label{margin}
b_d=8/\min(\mu_d^*-\mu_{d+1}^*,\mu_{d-1}^*-\mu_d^*)^2.
\end{equation*}
Since we supposed that all the true eigenvalues $\mu_d^*$'s are distinct, the quantities $b_d$'s are well defined and finite.

The eigenfunction $\eta_d^*$ being defined up to a sign change ($-\eta_d^*$ is also an eigenfunction associated to the eigenvalue $\mu_d^*$), we cannot assess our procedure by using the classical risk $E(\|\widehat{\eta}_{\phi,d}-\eta_d^*\|^2)$. Following \citet{Bosq}, we evaluate the risk of
\[\eta_{\pm,d}^*=\text{sign}(\langle\widehat{\eta}_{\phi,d},\eta_d^*\rangle)\times\eta_d^*.\] 
We consider the following mild assumption on the 4th moment of the vector 
\[{\bf Z}=\{Z(t_0),\ldots,Z(t_{p-1})\}^T.\]
\textcolor{black}{\begin{assumption}\label{Ass:4}
We assume that there exists $C_1>0$ such that 
\begin{equation}\label{hyp-4}
E\{(v^T{\bf Z})^4\}\leq C_1[E\{(v^T{\bf Z})^2\}]^2, \quad v\in\R^p.
\end{equation}
\end{assumption}
Assumption~\ref{Ass:4} ensures a control of the fourth moment of $\tilde z_{1,\lambda}$. It is satisfied with $C_1=3$ if ${\bf Z}$ is Gaussian.} Then we obtain the following result: 
\begin{theorem}\label{theo:GeneralEsp}
Let $d$ be fixed. Under Assumption~\ref{Ass:4}, we have
\begin{eqnarray*}
 E(\|\widehat{\eta}_{\phi,d}-\eta_{\pm,d}^*\|^2)&\leq& 5b_d \left[\vvvert\Pi_D\Gamma\Pi_D-\Gamma\vvvert^2+\frac{\max(C_1+3;6)}{n}\left\{\sum_{\lambda\in\Lambda_D}\Big(\sigma_\lambda^2+s_\lambda^2\Big)\right\}^2\right.\\
   &&\hspace{5cm} \left.+A_p^{(K)}(\phi,D)+A_p^{(\sigma)}(\phi,D)+\frac{\sigma^4}{p^2}\right],
\end{eqnarray*}
where $\Pi_{D}$ is the orthogonal projection onto $S_D=\text{span}(\phi_\lambda,\lambda\in\Lambda_D)$,  
\[\sigma_\lambda^2=\var(\tilde\varepsilon_{1,\lambda})=\frac{\sigma^2}{p^2}\sum_{h=0}^{p-1}\phi_\lambda^2(t_h),\quad s_\lambda^2=\var(\tilde z_{1,\lambda})=\frac1{p^2}\sum_{h,h'=0}^{p-1}K(t_h,t_{h'})\phi_\lambda(t_h)\phi_\lambda(t_{h'})\]
and 
\[
A_p^{(K)}(\phi,D)=\sum_{\lambda,\lambda'\in\Lambda_D}\left\{\frac1{p^2}\sum_{h,h'=0}^{p-1}K(t_h,t_{h'})\phi_\lambda(t_h)\phi_{\lambda'}(t_{h'})-\int_0^1\int_0^1K(s,t)\phi_{\lambda}(s)\phi_{\lambda'}(t)dsdt\right\}^2,
\]
\[
A_p^{(\sigma)}(\phi,D)=\frac{\sigma^4}{p^2}\sum_{\lambda,\lambda'\in\Lambda_D}\left\{\frac1p\sum_{h=0}^{p-1}\phi_{\lambda}(t_h)\phi_{\lambda'}(t_{h})-\1_{\{\lambda=\lambda'\}}\right\}^2.
\]
\end{theorem}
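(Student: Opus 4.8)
The plan is to combine a classical perturbation inequality for the eigenfunctions of compact self-adjoint operators with a bias--variance decomposition of the estimated covariance operator. First I would pass from the eigenfunction error to the operator-norm error on $\widehat\Gamma_\phi$. Since $\widehat\Gamma_\phi$ and $\Gamma$ are both compact and self-adjoint and the eigenvalues $\mu_d^*$ are distinct, the perturbation bound of \citet{Bosq} gives, pathwise,
\[
\|\widehat\eta_{\phi,d}-\eta_{\pm,d}^*\|^2\le b_d\,\|\widehat\Gamma_\phi-\Gamma\|_\infty^2,
\]
where $b_d=8/\delta_d^2$ with $\delta_d$ the relevant spectral gap ($\delta_1=\mu_1^*-\mu_2^*$, $\delta_d=\min(\mu_d^*-\mu_{d+1}^*,\mu_{d-1}^*-\mu_d^*)$ for $d\ge2$); this is exactly the $b_d$ defined above. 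Taking expectations reduces the problem to bounding $E\|\widehat\Gamma_\phi-\Gamma\|_\infty^2$.

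Second, I would decompose $\widehat\Gamma_\phi-\Gamma$ into five operators, each represented by its coefficient matrix in the orthonormal system $(\phi_\lambda)_{\lambda\in\Lambda_D}$, starting from $[\widehat\Gamma_\phi]_{\lambda\lambda'}=n^{-1}\sum_i\widetilde y_{i,\lambda}\widetilde y_{i,\lambda'}$. Writing $\widetilde y_{i,\lambda}=\widetilde z_{i,\lambda}+\widetilde\varepsilon_{i,\lambda}$ and using that the noise is centered and independent of $Z$, the mean $E\widehat\Gamma_\phi$ has entries $K^{\mathrm{disc}}_{\lambda\lambda'}+\Sigma^{\mathrm{disc}}_{\lambda\lambda'}$, where $K^{\mathrm{disc}}_{\lambda\lambda'}=\frac1{p^2}\sum_{h,h'}K(t_h,t_{h'})\phi_\lambda(t_h)\phi_{\lambda'}(t_{h'})$ is the Riemann-sum discretization of $\langle\Gamma\phi_{\lambda'},\phi_\lambda\rangle=[\Pi_D\Gamma\Pi_D]_{\lambda\lambda'}$ and $\Sigma^{\mathrm{disc}}_{\lambda\lambda'}=\frac{\sigma^2}{p^2}\sum_h\phi_\lambda(t_h)\phi_{\lambda'}(t_h)$. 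The decomposition is then
\[
\widehat\Gamma_\phi-\Gamma=(\widehat\Gamma_\phi-E\widehat\Gamma_\phi)+(K^{\mathrm{disc}}-\Pi_D\Gamma\Pi_D)+(\Pi_D\Gamma\Pi_D-\Gamma)+(\Sigma^{\mathrm{disc}}-\tfrac{\sigma^2}{p}\Pi_D)+\tfrac{\sigma^2}{p}\Pi_D,
\]
and the elementary inequality $\|\sum_{i=1}^5 T_i\|_\infty^2\le 5\sum_{i=1}^5\|T_i\|_\infty^2$ produces the overall factor $5b_d$ and the five summands. The four deterministic terms are handled with $\|T\|_\infty\le\|T\|_{\mathrm{HS}}$ and the identity $\|T\|_{\mathrm{HS}}^2=\sum_{\lambda,\lambda'}M_{\lambda\lambda'}^2$ for the coefficient matrix: the second term yields $A_p^{(K)}(\phi,D)$, the fourth yields $A_p^{(\sigma)}(\phi,D)$, the fifth satisfies $\|\frac{\sigma^2}{p}\Pi_D\|_\infty=\frac{\sigma^2}{p}$ and hence contributes $\sigma^4/p^2$, while the projection term is kept as the operator-norm bias $\|\Pi_D\Gamma\Pi_D-\Gamma\|_\infty^2$.

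The main obstacle is the variance term $E\|\widehat\Gamma_\phi-E\widehat\Gamma_\phi\|_\infty^2$. I would bound it by the Hilbert--Schmidt norm and exploit independence across $i$,
\[
E\|\widehat\Gamma_\phi-E\widehat\Gamma_\phi\|_{\mathrm{HS}}^2=\frac1n\sum_{\lambda,\lambda'\in\Lambda_D}\var(\widetilde y_{1,\lambda}\widetilde y_{1,\lambda'})\le\frac1n\sum_{\lambda,\lambda'}E[\widetilde y_{1,\lambda}^2\widetilde y_{1,\lambda'}^2].
\]
The delicate point is to produce the constant $\max(C_1+3;6)$ together with the separable structure $\{\sum_\lambda(\sigma_\lambda^2+s_\lambda^2)\}^2$. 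Writing $\widetilde y_{1,\lambda}=v_\lambda^T{\bf Z}+\widetilde\varepsilon_{1,\lambda}$ with $v_\lambda=(\phi_\lambda(t_h)/p)_h$, Assumption~\ref{Ass:4} controls the fourth moment of the signal part through $E[(v_\lambda^T{\bf Z})^4]\le C_1 s_\lambda^4$, while the Gaussian errors contribute their explicit fourth and cross moments (via Wick's formula). A Cauchy--Schwarz step $E[\widetilde y_{1,\lambda}^2\widetilde y_{1,\lambda'}^2]\le(E\widetilde y_{1,\lambda}^4)^{1/2}(E\widetilde y_{1,\lambda'}^4)^{1/2}$ combined with a fourth-moment bound of the form $E\widetilde y_{1,\lambda}^4\le\max(C_1+3;6)(\sigma_\lambda^2+s_\lambda^2)^2$ then factorizes the double sum into $\{\sum_\lambda(\sigma_\lambda^2+s_\lambda^2)\}^2$, which is precisely the variance contribution. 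Assembling the five terms under the common factor $5b_d$ gives the stated inequality.
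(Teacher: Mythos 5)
Your proposal is correct and follows essentially the same route as the paper's proof: Bosq's perturbation inequality, the same five-term decomposition of $\widehat\Gamma_\phi-\Gamma$ (the paper phrases it via the kernel $K_\phi=\Pi_{S_D^2}K+\tfrac{\sigma^2}{p}\sum_\lambda\phi_\lambda\otimes\phi_\lambda+R^{(K)}+R^{(\sigma)}$, which is your coefficient-matrix decomposition), operator-norm-by-Hilbert--Schmidt bounds for the deterministic terms, and the identical variance argument (independence, Cauchy--Schwarz, and the fourth-moment bound $E[\tilde y_{1,\lambda}^4]\le (C_1+3)s_\lambda^4+6\sigma_\lambda^4$ yielding $\max(C_1+3;6)$). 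No gaps to report.
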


The first term of the upper bound is a bias term corresponding to the projection step, that decreases with $D$, the dimension of the approximation space. The second term is a variance term that increases with $D$ but contrary to what happens generally in non-parametric statistics, it is bounded by $n^{-1}$ up to a constant under mild assumptions on the orthonormal system (details in Section~\ref{sec:upperspecific}). Indeed, heuristically, when $p$ grows, the term 
$\sigma_\lambda^2$ is of order $\sigma^2/p$ (the variance due to the noise is tempered by the repetition of the observations) and the term $s_\lambda^2$ is of order $\iint K(s,t)\phi_\lambda(s)\phi_\lambda(t)dsdt=E(\langle Z,\phi_\lambda\rangle^2)$ so $\sum_{\lambda\in\Lambda_D} s_\lambda^2$ is bounded by a constant (independent of $D$) of order $E(\|Z\|^2)<+\infty$. By taking $D=\text{card}(\Lambda_D)\leq p$, the second term is of order $n^{-1}$. The third and fourth terms are linked to the discretization and are usually negligible with respect to both the bias and variance terms (see Section~\ref{sec:upperspecific}). The term $\sigma^4/p^2$ is also negligible. 
\textcolor{black}{
\begin{remark}\label{rem1:Gaussian}
We assume that $\{\varepsilon_{i,h}\}_{i=1,\hdots,n; h=0,\hdots,p-1}$ is a sequence of Gaussian errors. But  the result of Theorem~\ref{theo:GeneralEsp} still holds if the vectors $\{\varepsilon_{i,h}\}_{h=0,\hdots,p-1}$ only satisfy Assumption~\ref{Ass:4} with $C_1$ a constant not depending on $i$ (see the proof of Lemma~\ref{Moment}).
\end{remark}
}

We can refine the previous result and obtain similar upper bounds in probability. To state them, we first recall the definition of sub-Gaussian variables. We refer to Section~2.5 of \cite{Vershynin2018} %\cite{Vershynin2012}
for more details.
\begin{definition}
We say that a random variable $W$ is \emph{sub-Gaussian} if 
\[\|W\|_{\psi_2}:=\sup_{q\geq 1}\left\{q^{-1/2}\Big\{E(|W|^q)\Big\}^{1/q}\right\}<\infty.\]
In this case, $\|W\|_{\psi_2}$ is called the \emph{sub-Gaussian norm of $W$}.\\
%for some constant $c$, it satisfies for any $q\geq 1$,
%\begin{equation}\label{SG}
%\Big(\E[|W|^q]\Big)^{1/q}\leq c\sqrt{q}.
%\end{equation}
%The sub-Gaussian norm of $W$, denoted $\|W\|_{\psi_2}$, is defined to be the smallest constant $c$ such that \eqref{SG} is satisfied:
%\[
%\|W\|_{\psi_2}=\sup_{q\geq 1}\left\{q^{-1/2}\Big(\E[|W|^q]\Big)^{1/q}\right\}.
%\]
%A vector $\textbf{W}\in\R^d$ is sub-Gaussian if for any $u\in\R^d$, $u^T\textbf{W}$ is a sub-Gaussian random variable and
%\[
%\|\textbf{W}\|_{\psi_2}:=\sup_{u\in\R^d}\frac{\|u^T\textbf{W}\|_{\psi_2}}{\|u\|_{\ell_2}}.
%\]
\end{definition}
Assumption~\ref{Ass:4} is extended to $p$-dimensional vectors as follows.
\textcolor{black}{\begin{assumption}\label{Ass:q}
 We assume that there exists $C_2>0$ such that 
\begin{equation*}%\label{hyp-SG}
\|v^T{\bf Z}\|_{\psi_2}^2\leq C_2E{(v^T{\bf Z})^2}, \quad v\in\R^p.
\end{equation*}
\end{assumption}}
\textcolor{black}{Assumption~\ref{Ass:q} of Theorem~\ref{theo:Generalproba} is stronger than Assumption~\ref{Ass:4} of Theorem~\ref{theo:GeneralEsp} but it allows us to obtain an inequality in probability, which is stronger than in expectation;} the price to pay is the logarithmic factor in the variance term as show in the following. Using quantities introduced in Theorem~\ref{theo:GeneralEsp}, we then obtain the following result.\\
\begin{theorem}\label{theo:Generalproba}
Let $d$ be fixed.  Then, under Assumption~\ref{Ass:q}, for all $\gamma>0$, with probability larger than $1-2\exp(-1/64\min(\gamma^2,16\gamma\sqrt{n})),$
\begin{eqnarray*}
\|\widehat{\eta}_{\phi,d}-\eta_{\pm,d}^*\|^2&\leq& 5b_d \left[\vvvert\Pi_D\Gamma\Pi_D-\Gamma\vvvert^2+\frac{ (e^{1/2}+\gamma)^2\bar C^2(C_2+1)^2}{n}\left\{\sum_{\lambda\in\Lambda_D}\Big(\sigma_\lambda^2+s_\lambda^2\Big)\right\}^2\right.\\
   &&\hspace{5.5cm} \left.+A_p^{(K)}(\phi,D)+A_p^{(\sigma)}(\phi,D)+\frac{\sigma^4}{p^2}\right],
\end{eqnarray*}
where $\bar C$ is an absolute constant.
\end{theorem}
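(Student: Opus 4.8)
The plan is to follow the same architecture as the proof of Theorem~\ref{theo:GeneralEsp}, simply replacing the moment bound on the stochastic term by a concentration inequality; indeed the two statements differ only in the constant multiplying the variance term, the four remaining summands being identical. The starting point is a perturbation bound for spectral projectors (as in \citet{Bosq}): because the eigenvalues are simple and the sign of $\widehat\eta_{\phi,d}$ has been aligned through $\eta_{\pm,d}^*$, one has the deterministic inequality $\|\widehat{\eta}_{\phi,d}-\eta_{\pm,d}^*\|^2\leq b_d\,\|\widehat{\Gamma}_\phi-\Gamma\|_\infty^2$, which is exactly why $b_d$ is proportional to the inverse squared spectral gap. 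This reduces the theorem to a high-probability bound on $\|\widehat{\Gamma}_\phi-\Gamma\|_\infty$. I would then split $\widehat{\Gamma}_\phi-\Gamma$ into five pieces and use $\|\sum_{k=1}^5 T_k\|_\infty^2\leq 5\sum_{k=1}^5\|T_k\|_\infty^2$, which yields the prefactor $5b_d$ and the five bracketed summands: the projection bias $\Pi_D\Gamma\Pi_D-\Gamma$, a purely stochastic fluctuation term, and the three deterministic discretization/noise terms captured by $A_p^{(K)}(\phi,D)$, $A_p^{(\sigma)}(\phi,D)$ and $\sigma^4/p^2$.

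The four deterministic terms arise from $E[\widetilde Y_1\otimes\widetilde Y_1]-\Gamma=\big(E[\widetilde Z_1\otimes\widetilde Z_1]-\Gamma\big)+E[\widetilde E_1\otimes\widetilde E_1]$ (the cross terms vanish since the noise is centered and independent of $Z$), and are bounded exactly as in Theorem~\ref{theo:GeneralEsp}; no randomness is involved and those estimates carry over verbatim. All the novelty is concentrated in the stochastic term, namely the centered empirical operator $\widehat{\Gamma}_\phi-E[\widehat{\Gamma}_\phi]=\frac1n\sum_{i=1}^n V_i$ with $V_i:=\widetilde Y_i\otimes\widetilde Y_i-E[\widetilde Y_1\otimes\widetilde Y_1]$, an average of $n$ i.i.d.\ centered rank-one operators on $S_D$. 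To obtain a bound in probability I would control the operator norm by the Hilbert--Schmidt norm, $\|\cdot\|_\infty\leq\|\cdot\|_{\mathrm{HS}}$, view the $V_i$ as i.i.d.\ centered elements of the Hilbert space of Hilbert--Schmidt operators, and note $\|V_i\|_{\mathrm{HS}}\leq\|\widetilde Y_i\|^2+E[\|\widetilde Y_1\|^2]$.

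The decisive input is Assumption~\ref{Ass:q}. Writing $\widetilde z_{1,\lambda}=v_\lambda^T{\bf Z}$ with $v_\lambda=p^{-1}(\phi_\lambda(t_h))_{h}$, the assumption gives $\|\widetilde z_{1,\lambda}\|_{\psi_2}^2\leq C_2\,s_\lambda^2$, while the Gaussian noise part satisfies $\|\widetilde\varepsilon_{1,\lambda}\|_{\psi_2}^2\lesssim\sigma_\lambda^2$; hence $\widetilde y_{1,\lambda}$ is sub-Gaussian with $\|\widetilde y_{1,\lambda}\|_{\psi_2}^2\lesssim(C_2+1)(\sigma_\lambda^2+s_\lambda^2)$. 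Since $\|\widetilde Y_1\|^2=\sum_{\lambda\in\Lambda_D}\widetilde y_{1,\lambda}^2$ (Parseval), the triangle inequality for the $\psi_1$-norm together with $\|X^2\|_{\psi_1}=\|X\|_{\psi_2}^2$ (see \citealt{Vershynin2012}) give the \emph{dimension-free} bound $\big\|\,\|\widetilde Y_1\|^2\big\|_{\psi_1}\lesssim\sum_{\lambda\in\Lambda_D}\|\widetilde y_{1,\lambda}\|_{\psi_2}^2\lesssim(C_2+1)\sum_{\lambda\in\Lambda_D}(\sigma_\lambda^2+s_\lambda^2)$. Thus $\|V_i\|_{\mathrm{HS}}$ is sub-exponential with scale $K\asymp(C_2+1)\sum_{\lambda\in\Lambda_D}(\sigma_\lambda^2+s_\lambda^2)$, and a Bernstein inequality for sums of independent sub-exponential Hilbert-space-valued variables yields $\mathbb{P}\big(\|\tfrac1n\sum_i V_i\|_{\mathrm{HS}}\geq t\big)\leq 2\exp\big(-c\,n\min(t^2/K^2,\,t/K)\big)$.

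Calibrating the threshold so that its typical (expectation-level) magnitude is $e^{1/2}K/\sqrt n$ while the excess deviation is $\gamma K/\sqrt n$, the two regimes $t^2/K^2$ and $t/K$ of the Bernstein exponent translate into $\gamma^2$ and $\gamma\sqrt n$, producing the stated probability $1-2\exp(-\tfrac1{64}\min(\gamma^2,16\gamma\sqrt n))$ once the absolute constants are tracked. On that event $\|\tfrac1n\sum_i V_i\|_{\mathrm{HS}}\leq(e^{1/2}+\gamma)K/\sqrt n$, and squaring with $K=\bar C(C_2+1)\sum_{\lambda\in\Lambda_D}(\sigma_\lambda^2+s_\lambda^2)$ reproduces the variance term $(e^{1/2}+\gamma)^2\bar C^2(C_2+1)^2n^{-1}\{\sum_{\lambda\in\Lambda_D}(\sigma_\lambda^2+s_\lambda^2)\}^2$. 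I expect the main obstacle to be precisely this Hilbert-space concentration step: one must invoke a genuinely dimension-free Bernstein inequality for the Hilbert--Schmidt norm of $\tfrac1n\sum_i V_i$, so that no factor $D=\mathrm{card}(\Lambda_D)$ leaks into the bound, and propagate the sub-Gaussian constant $C_2$ cleanly through the quadratic structure of $\widetilde Y_i\otimes\widetilde Y_i$ to land on the exact factor $(C_2+1)^2$ with a universal constant $\bar C$.
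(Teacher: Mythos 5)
Your proposal is correct and follows essentially the same route as the paper: the authors likewise start from Bosq's perturbation inequality, split $\widehat{\Gamma}_\phi-\Gamma$ into the same five pieces (yielding the $5b_d$ prefactor), bound the stochastic piece by the Frobenius norm of the coefficient matrix $\widehat G_\phi-G_\phi$, show its summands have sub-exponential Frobenius norm with scale $\bar C(C_2+1)\sum_{\lambda\in\Lambda_D}(\sigma_\lambda^2+s_\lambda^2)$, and then apply a dimension-free vector-valued deviation inequality. The only cosmetic differences are that the paper establishes the sub-exponential bound via an exponential-moment condition $E[\exp(t^{-1}\|X\|_F)]\leq e$ (splitting into signal and noise parts and invoking a quadratic-form concentration result of Bunea and Xiao) rather than your $\psi_1$-triangle-inequality argument, and the ``genuinely dimension-free Bernstein inequality'' you call for is supplied concretely by Theorem~4.1 of Juditsky and Nemirovski for martingale differences in smooth Banach spaces, which is precisely where the constants $1/64$, $16$ and the factor $e^{1/2}+\gamma$ originate.
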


Observe that if we take $\gamma=8(\beta\log n)^{1/2}$, then for $n$ large enough, the upper bound holds with probability larger than $1-2n^{-\beta}$. In this case, the order of the variance term is the same as for Theorem~\ref{theo:GeneralEsp} up to the $\log n$-factor. 

Theorem~\ref{theo:Generalproba} is based on Assumption~\ref{Ass:q}, namely a control of the sub-Gaussian norm of $v^T{\bf Z}$ for all vectors $v$. Such controls are standard to obtain concentration inequalities which are at the core of the proof of Theorem~\ref{theo:Generalproba}; see for instance  %\cite{Vershynin2012}, 
\cite{Vershynin2018}, \cite{KL2017} or Section~2.3 of \cite{BLM-book}. This assumption enables us to apply large deviation bounds for martingale differences established by \cite{JN} to specific covariance matrices. See Proposition~\ref{concentration} in the Appendix for more details. Observe that Assumption~\ref{Ass:q} is satisfied if ${\bf Z}$ is Gaussian and in this case $C_2$ is an absolute constant %(see Example 5.8 of \cite{Vershynin2012}). 
(see Example 2.5.8 of \cite{Vershynin2018}). 
In a more general setting, we have the following lemma proved in Appendix~\ref{Appendix:A2}.
\textcolor{black}{
\begin{lemma}\label{lemma:A2}
We consider the Karhunen-Lo\`eve representation of $Z$ given by \eqref{eq:KL} and assume that the scores $\zeta^*_d$' are independent and that there exists $M<\infty$ such that
\begin{equation}\label{M}
\sup_{d\in\mathbb N^*}\|\zeta^*_d\|_{\psi_2}\leq M.
\end{equation}
Then Assumption~\ref{Ass:q} is satisfied with $C_2=\kappa M^2$, where $\kappa$ is an absolute constant.
\end{lemma}
}
Under the assumptions of Lemma~\ref{lemma:A2} (and others), \cite{Mas} obtain rates of convergence for the eigenvectors and eigenprojectors in expectation in the case where the curves are observed at all points. In the context of functional Principal Components Regression, these assumptions are classical, we refer e.g. to \cite{hall_theory_2009,crambes_asymptotics_2013}.
%\begin{remark}
%The classical functional assumption (see e.g. \citealt{Mas}) amounts to bound the moments of the variable $\zeta_d^*$ appearing in the Karhunen-Lo\`eve decomposition~\eqref{eq:KL} of $Z$ as follows:
%\[
%\sup_{q\geq 1}\sup_{d\in\mathbb N^*}E\big(|\zeta_d^*|^{2q}\big)\leq q! b^{q-1}
%\]
%for $b>0$ a constant. This type of assumption is not well adapted to the case where the data are discretized but shows strong similarities with ours since $\zeta_d^*=\langle Z,\eta_d^*\rangle \times V(\langle Z,\eta_d^*\rangle)^{-1/2}$. 
%\end{remark}

\textcolor{black}{We have the analog of Remark~\ref{rem1:Gaussian}.
\begin{remark}\label{rem2:Gaussian}
We assume that $\{\varepsilon_{i,h}\}_{i=1,\hdots,n; h=0,\hdots,p-1}$ is a sequence of Gaussian errors. But  the result of Theorem~\ref{theo:Generalproba} still holds if the vectors $\{\varepsilon_{i,h}\}_{h=0,\hdots,p-1}$ only satisfy Assumption~\ref{Ass:q} with $C_2$ a constant not depending on $i$ (see the proofs of Lemmas~\ref{SG1} and \ref{SG2}).
\end{remark}
}
\begin{remark}\label{rem:valeurspropres}
The first step of the proof of our results consists in applying Bosq inequalities to bound 
$\|\widehat{\eta}_{\phi,d}-\eta_{\pm,d}^*\|$. Similar bounds also hold for $|\widehat\mu_d-\mu_d^*|$. See Section \ref{proof:theo2-3} of the Appendix for more details. Therefore, bounds of the previous theorems also hold for $|\widehat\mu_d-\mu_d^*|^2$. Obtaining lower bounds for the estimation of the eigenvalues remains an open interesting question. 
\end{remark}
%%%%%%%%%%%%%%%%%%%%%%%%%%%%%%%%
\subsection{Upper bound for histograms}\label{sec:upperspecific}
In this paragraph, we specify our results for the case of histograms. The histogram system is defined as follows (see Section~7.3 of \cite{Massart} for instance).
\begin{definition}\label{histo:def}
Let $\Lambda_D=\{0,\ldots,D-1\}.$ For any $\lambda\in\Lambda_D$,
\begin{equation*}
\phi_\lambda(t)=D^{1/2}\times\1_{I_\lambda}(t),\quad t \in [0,1], 
\end{equation*}
with $I_\lambda=(\lambda/D,(\lambda+1)/D]$. For any $(\lambda,\lambda')\in\Lambda^2$, $\langle \phi_\lambda, \phi_{\lambda'}\rangle=\1_{\{\lambda=\lambda'\}}$. 
\end{definition}
In the sequel, we consider the following assumption.
\textcolor{black}{\begin{assumption}\label{Ass:histo}
The integer $D$ is such that $D$ divides $p$.
\end{assumption}}
In this framework, all terms appearing in upper bounds of Theorems~\ref{theo:GeneralEsp} and \ref{theo:Generalproba} can be easily controlled.
\begin{proposition}\label{cor:boundH}
Under Assumption~\ref{Ass:histo}, if $P_Z\in  \mathcal R_\alpha(L)$, we have
\[
\vvvert\Pi_D\Gamma\Pi_D-\Gamma\vvvert^2\leq \frac{16L\|K\|_\infty}{(\alpha+1)^2} D^{-2\alpha}, 
\] 
\begin{eqnarray*}
A_p^{(K)}(\phi,D)&\leq& \frac{16\|K\|_\infty L}{(\alpha+1)^2}p^{-2\alpha},\quad A_p^{(\sigma)}(\phi,D)=0
\end{eqnarray*}
and
\[
\sum_{\lambda\in\Lambda_D}\Big(\sigma_\lambda^2+s_\lambda^2\Big)\leq\|K\|_\infty+ \frac{\sigma^2 D}{p}.
\]
\end{proposition}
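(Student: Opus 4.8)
The plan is to establish the four bounds separately, all resting on one combinatorial observation: since $D$ divides $p$ and the grid is regular, each bin $I_\lambda$ contains exactly $p/D$ of the points $t_0,\dots,t_{p-1}$ and the bins are pairwise disjoint; this makes the histogram system \emph{exactly} orthonormal for the discrete inner product. Two of the claims are then immediate. Because $\phi_\lambda(t_h)\phi_{\lambda'}(t_h)=D\,\1_{I_\lambda}(t_h)\1_{I_{\lambda'}}(t_h)$, disjointness gives $\tfrac1p\sum_h\phi_\lambda(t_h)\phi_{\lambda'}(t_h)=0$ for $\lambda\neq\lambda'$ while the count $p/D$ gives $\tfrac Dp\cdot\tfrac pD=1$ for $\lambda=\lambda'$, so every bracket in $A_p^{(\sigma)}(\phi,D)$ vanishes. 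The same count yields $\sigma_\lambda^2=\tfrac{\sigma^2}{p^2}\sum_h\phi_\lambda^2(t_h)=\sigma^2/p$, hence $\sum_\lambda\sigma_\lambda^2=\sigma^2D/p$; and bounding $|K(t_h,t_{h'})|\le\|K\|_\infty$ in $s_\lambda^2$ and counting the $(p/D)^2$ pairs in each bin gives $\sum_\lambda s_\lambda^2\le\tfrac D{p^2}\|K\|_\infty\cdot D\,(p/D)^2=\|K\|_\infty$.

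For $A_p^{(K)}(\phi,D)$ I would read each bracket as a quadrature error. Using $\phi_\lambda\phi_{\lambda'}=D\,\1_{I_\lambda\times I_{\lambda'}}$, the bracket for the pair $(\lambda,\lambda')$ equals $D\big[\tfrac1{p^2}\sum K(t_h,t_{h'})-\iint_{I_\lambda\times I_{\lambda'}}K\big]$, the sum running over the $(p/D)^2$ grid points of the cell. Splitting the cell into the pixels $J_h\times J_{h'}$ of area $1/p^2$ centred at $(t_h,t_{h'})$ rewrites this as $D\sum_{h,h'}\iint_{J_h\times J_{h'}}\{K(t_h,t_{h'})-K(s,t)\}\,ds\,dt$; applying the Hölder bound \eqref{Kernelregularity} and integrating $|t_h-s|^\alpha$ and $|t_{h'}-t|^\alpha$ over a pixel produces a factor $(\alpha+1)^{-1}p^{-\alpha}$. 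Squaring and summing over the $D^2$ pairs cancels the powers of $D$ and leaves the announced bound $\lesssim \|K\|_\infty L\,(\alpha+1)^{-2}p^{-2\alpha}$.

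For the bias term I would bound the operator norm by the Hilbert--Schmidt norm, which is the $\L^2([0,1]^2)$-norm of the kernel of $\Pi_D\Gamma\Pi_D-\Gamma$. For the histogram projection this kernel is, on each cell $I_\lambda\times I_{\lambda'}$, the difference between $K(s,t)$ and its cell average $D^2\iint_{I_\lambda\times I_{\lambda'}}K$. Writing that difference as $D^2\iint_{I_\lambda\times I_{\lambda'}}\{K(s,t)-K(s',t')\}\,ds'\,dt'$ and using \eqref{Kernelregularity} with $|s-s'|,|t-t'|\le 1/D$ bounds it pointwise by a multiple of $(\alpha+1)^{-1}(\|K\|_\infty L)^{1/2}D^{-\alpha}$; squaring and integrating over $[0,1]^2$ gives the claimed $D^{-2\alpha}$ rate.

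The only analytic inputs are the elementary integrals of $u^\alpha$ over a pixel or a bin, which affect constants alone; the genuine ingredients are the combinatorial count that makes the system orthonormal on the grid (killing $A_p^{(\sigma)}$) and the identification of $\Pi_D\Gamma\Pi_D$ with the two-dimensional cell-averaging operator, which lets \eqref{Kernelregularity} be applied cell by cell. I expect the main pitfall to be the bookkeeping in $A_p^{(K)}$---keeping the powers of $D$ and $p$ straight through the pixel decomposition---rather than any conceptual difficulty.
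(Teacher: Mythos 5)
Your proposal is correct and takes essentially the same route as the paper's proof in all four parts: the counting/disjointness argument under $D\mid p$ (exact discrete orthonormality of the bins) for $A_p^{(\sigma)}(\phi,D)=0$ and for $\sum_{\lambda}(\sigma_\lambda^2+s_\lambda^2)$, the pixel-by-pixel quadrature-error decomposition combined with the H\"older bound \eqref{Kernelregularity} for $A_p^{(K)}(\phi,D)$, and the reduction of $\|\Pi_D\Gamma\Pi_D-\Gamma\|_\infty$ to the $\L^2$-norm of the kernel difference, with $\Pi_D\Gamma\Pi_D$ identified as the cell-averaging of $K$, for the bias term. The differences are purely presentational (the paper spells out the cell-averaging identity and tracks the pixel endpoints $b_h=h/p$ versus the grid points $t_h=h/(p-1)$ explicitly), and both arguments share the same harmless convention-level glossing of the single boundary point $t_0=0\notin I_0=(0,1/D]$.
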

Combining Proposition~\ref{cor:boundH} with Theorems~\ref{theo:GeneralEsp} and \ref{theo:Generalproba}, we finally deduce the following corollary.
\begin{corollary}\label{cor:final}
Let $d$ be fixed. Assume that $P_Z\in  \mathcal R_\alpha(L)$ and \textcolor{black}{$D=~p$}.
Under Assumption~\ref{Ass:4},
 \begin{eqnarray*}
E\Big(\|\widehat{\eta}_{\phi,d}-\eta_{\pm,d}^*\|^2\Big)&\leq& b_d  \left\{\frac{B(L,K,\alpha)}{p^{2\alpha}}+\frac{5\sigma^4}{p^2}+\frac{V_1(K,\sigma, C_1)}{n}\right\}
\end{eqnarray*}
and under Assumption~\ref{Ass:q}, for any $\beta>0$, for $n$ large enough,  with probability larger than $1-2n^{-\beta}$,
 \begin{eqnarray*}
\|\widehat{\eta}_{\phi,d}-\eta_{\pm,d}^*\|^2&\leq& b_d  \left\{\frac{B(L,K,\alpha)}{p^{2\alpha}}+\frac{5\sigma^4}{p^2}+\frac{V_2(K,\sigma, C_2,\beta)\log n}{n}\right\},
 \end{eqnarray*}
 where $B(L,K,\alpha)$ depends on $L$, $\|K\|_\infty$ and $\alpha$ and $V_1(K,\sigma, C_1)$ (resp. $V_2(K,\sigma, C_2,\beta)$) depends on $\|K\|_\infty$, $\sigma$ and $C_1$ (resp. $\|K\|_\infty$, $\sigma$, $C_2$ and $\beta$) (the constants $B(L,K,\alpha)$, $V_1(K,\sigma, C_1)$ and $V_2(K,\sigma, C_2,\beta)$ are deterministic).
\end{corollary}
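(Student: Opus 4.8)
The plan is to combine the term-by-term estimates of Proposition~\ref{cor:boundH} with the general upper bounds of Theorems~\ref{theo:GeneralEsp} and \ref{theo:Generalproba}, specialized to the histogram system of Assumption~\ref{Ass:histo} with the choice $D=p$. No new argument is required: all the analytic work has already been done in Proposition~\ref{cor:boundH}, and the remaining task is purely to substitute the five bounds and collect constants once $D=p$.

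First I would establish the expectation bound. Starting from Theorem~\ref{theo:GeneralEsp}, I substitute the four estimates of Proposition~\ref{cor:boundH}. Since $D=p$, the projection bias satisfies $\|\Pi_D\Gamma\Pi_D-\Gamma\|_\infty^2\leq 16L\|K\|_\infty(\alpha+1)^{-2}p^{-2\alpha}$, and the discretization term $A_p^{(K)}(\phi,D)\leq 16\|K\|_\infty L(\alpha+1)^{-2}p^{-2\alpha}$ is of exactly the same order; I would group these two into a single term $B(L,K,\alpha)\,p^{-2\alpha}$, with $B(L,K,\alpha)$ depending only on $L$, $\|K\|_\infty$ and $\alpha$. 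The term $A_p^{(\sigma)}(\phi,D)$ vanishes identically. For the variance part, the bound $\sum_{\lambda\in\Lambda_D}(\sigma_\lambda^2+s_\lambda^2)\leq\|K\|_\infty+\sigma^2 D/p=\|K\|_\infty+\sigma^2$ (here $D=p$) gives, after squaring, $\frac{\max(C_1+3;6)}{n}(\|K\|_\infty+\sigma^2)^2$, which I would denote $V_1(K,\sigma,C_1)/n$. The residual $\sigma^4/p^2$ is kept as a separate (negligible, since it is dominated by $p^{-2\alpha}$ for $\alpha\in(0,1]$) term. Collecting the leading numerical factor $5$ into the constants $B$ and $V_1$ then yields precisely the stated expectation bound.

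The probability bound follows by the identical substitution, now starting from Theorem~\ref{theo:Generalproba}. The only additional step is to fix the free parameter $\gamma$. Taking $\gamma=8(\beta\log n)^{1/2}$, so that $\gamma^2=64\beta\log n$ while $16\gamma\sqrt n=128\sqrt{\beta}\,\sqrt{n\log n}$, I would verify that for $n$ large enough the active term in the minimum is $\gamma^2$, because $\sqrt{n\log n}$ eventually dominates $\log n$; hence $\exp(-\tfrac1{64}\min(\gamma^2,16\gamma\sqrt n))=\exp(-\beta\log n)=n^{-\beta}$, giving probability at least $1-2n^{-\beta}$. For the same $\gamma$, the variance prefactor $(e^{1/2}+\gamma)^2\bar C^2(C_2+1)^2$ is, for large $n$, of order $\log n$ up to a constant depending on $\beta$, $C_2$ (and $\bar C$); multiplying by $(\|K\|_\infty+\sigma^2)^2/n$ produces the factor $V_2(K,\sigma,C_2,\beta)\log n/n$. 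The only point demanding any care is this elementary check that $\gamma^2$ governs the minimum for $n$ large, together with the bookkeeping of constants; there is no genuine analytical obstacle, since the difficult estimates are already supplied by Proposition~\ref{cor:boundH} and Theorems~\ref{theo:GeneralEsp}--\ref{theo:Generalproba}.
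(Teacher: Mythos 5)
Your proposal is correct and follows exactly the paper's own route: the corollary is obtained precisely by substituting the five bounds of Proposition~\ref{cor:boundH} (with $D=p$) into Theorems~\ref{theo:GeneralEsp} and~\ref{theo:Generalproba}, and the choice $\gamma=8(\beta\log n)^{1/2}$ with the check that $\gamma^2$ realizes the minimum for large $n$ is the same as the remark the paper makes after Theorem~\ref{theo:Generalproba}. The only caveat is cosmetic bookkeeping: absorbing the factor $5$ into $B$ and $V_1$ (resp.\ $V_2$) while leaving $\sigma^4/p^2$ bare, which is the same convention the paper adopts in stating the corollary.
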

Since $\alpha\leq 1$, the term $5\sigma^4/p^2$ is not larger than the first term $B(L,K,\alpha)/p^{2\alpha}$ (up to a constant), and the noise of the observations has no influence on the rates (as soon as the noise level is a constant). In particular, under Assumption~\ref{Ass:4},
 \[
 \sup_{P_Z\in\mathcal R_\alpha(L)}E\Big(\|\widehat{\eta}_{\phi,d}-\eta_{\pm,d}^*\|^2\Big)\leq C\Big(p^{-2\alpha}+n^{-1}\Big),
 \]
 for $C$ a constant. This upper bound and the lower bound of Theorem~\ref{thm:borne_inf_fp} match, meaning that our estimation procedure is optimal in our setting. Observe that Assumption~\ref{Ass:4} is very mild. If we replace it with the stronger Assumption~\ref{Ass:q}, we obtain a control in probability, coming from  exponential bounds on probabilities of large deviations (for the Frobenius norm) for specific matrices. The price to pay is a logarithmic term in the variance term. 
 
As expected, parameters $p$ and $n$ have a strong influence on rates. In our framework with two asymptotics very different in nature, we note that if $p$ is large enough (depending on $n$ and $\alpha$), then our procedure achieves the parametric rate $n^{-1}$ already obtained by \citet{DPR82} and \cite{Mas} when the curves $(Z_1, \hdots,Z_n)$ are fully observed and without noise. It means that discretization has no influence on theoretical performances. Conversely, if $p$ is not very large with respect to $n$, discretization  has a deep impact and rates depend strongly on the underlying smoothness of the curves observed in a noisy setting. The obtained rate $p^{-2\alpha}+n^{-1}$ describes very precisely the competition between the number of discretization points and the number of observations in functional principal component analysis. To the best of our knowledge, these rates are new.

Finally, let us emphasize the simplicity of our optimal estimation procedure. It is based on the most classical ideas: projection by using piecewise constant bases and empirical mean estimation. In particular, regularization is not necessary and the knowledge of $\alpha$ is not required. The use of such standard tools may be surprising in view of results obtained by \cite{Johnstone01} and \cite{BS06}. 
But, as already mentioned, functional principal component analysis is a very specific setting. The rates we obtain have the same shape to those of \citet{DP19} for which strong assumptions on the covariance operator (finite rank and analyticity of the eigenvalues $\eta_d^*$) are required and the noise may exhibit local correlations. 
 \citet{BX15} obtained a rate of convergence that is difficult to compare with ours, because their assumptions, concerning the rate of decay of the eigenvalues, differ significantly from our regularity assumption on the process. In the case of Brownian motion the rate of convergence of the $\mathbb L^2$- risk of reconstruction of the estimator of \citet{BX15} is of order $\log^2(n)n^{-1}+p^{-1/3}$ which is suboptimal compared to the minimax rate of $n^{-1}+p^{-1}$ that we have proven.  In a different statistical model where observational times are random, \citet{Hall} obtained an $\mathbb L^2$ convergence rate $n^{-2r/(2r+1)}$ for kernel estimators when $\eta_d^*$ has a $r$-th order derivative even if the number of observations per curve is bounded by a constant. This clearly shows the impact of the nature of the observations. The randomness of observational times may allow to circumvent the sparse sampling scheme for individuals and consistency may be achievable, which is not the case for our statistical model with deterministic equispaced observational times.
  
 %\vinc{Revenir ici sur les resultats de Bunea et Xiao (2015), Descary \& Panaretos (2016) et Hall et al. (2006). Peut-on dire que, en particulier pour Bunea et Xiao (2015), l'aspect fonctionnel n'est pas pris en compte convenablement ?}
%\textcolor{black}{Relier a ces resultats a ceux de Bnuea et Xiao (2010). Ils sont sous-optimaux, notamment pour la cas du mouvement brownien, car fPCA traite sous l'angle de l'estimation matrice (matrice de covariance).}

%%%%%%%%%%%%%%%%%%%%%%%%%%%%%%%%
%%%%%%%%%%%%%%%%%%%%%%%%%%%%%%%%
\section{Simulation results}\label{sec:simus}
We assess the statistical performance of functional principal components estimators with simulations. We consider two eigenfunctions such that $\eta_1^*(\cdot)=\sqrt{2}\sin(2\pi\cdot)$ and $\eta^*_2(\cdot)=\sqrt{2}\cos(2\pi\cdot)$, with eigenvalues $\mu_1^*=1.1$ and $\mu_2^*=0.1$. Simulated functional data are sampled on regularly spaced discretization points $t_h=h/(p-1)$ with $h=0,\ldots,p-1$, and we compute the covariance matrix $\Sigma$ such that:
$$\Sigma_{h,h'}=\sum_{d=1}^2\mu_d^*\eta_d^*(t_h)\eta_d^*(t_{h'})+\sigma^21_{h=h'}$$ from which we sample $n$ random functions 
$Y_1,\hdots,Y_n\sim\mathcal{N}(0,\Sigma)$, following Model \eqref{model}. Then we consider different values for the number of observations $n\in\{256,512,1024,2048,4096\}$ to study the asymptotic performance of our estimators, and we will also consider different values of $p\in\{16,32,64,128,256\}$ to study the impact of discretization. The noise level $\sigma$ is chosen to match a given signal to noise ratio defined by $\sigma^{-2}\sum_{d=1}^2\mu_d^*$ (the variance of the signal divided by the variance of the noise), that takes value in $\{0.25,1,4\}$. We consider two smoothing systems, histograms and the Haar wavelets, as detailed in the Appendix. We report average the performance on $nb_{test}=100$ independent simulations. Even if our theoretical results do not include regularized estimators, we also consider a hard thresholded version of these estimators to improve reconstruction (as detailed in the Appendix). 
%%%%%%%%%%%%%%%%%
\subsection{Reconstruction Errors}
To assess the empirical performance of our approach, we study the behavior of the mean reconstruction error
$$E\Big( \|\eta^*_{\pm,d}-\widehat{\eta}_{\phi,d}\|^2\Big)$$
 according to the number of observations $n$, the size of the discretization grid $p$, and the signal to noise ratio. More precisely, we introduce a second finer grid $s_{h}=h/p'$, with $ h = 0,\hdots,p'-1$, such that $p' \gg p$ ($p'=8192$ in practice) and use the approximation
$$ 
\mathbb{E} \Big( \|\eta^*_{\pm,d}-\widehat{{\eta}}_{\phi,d}\|^2\Big) \approx \frac{1}{nb_{test}} \sum_{j=1}^{nb_{test}} 
 \frac{1}{p'}\sum_{h=1}^{p'} \Big\{ \eta^*_{\pm,d}(s_h)-\widehat{\eta}^{j}_{\phi,d}(s_h) \Big\}^2.
$$
To deduce the values of our estimator outside of the original grid, we use the piecewise constant property of the Haar and the histogram systems. In the following we also compute the estimation error on eigenvalues and assess $E\{ (\mu_d^*-\widehat{\mu}_{\phi,d})^2 \}.$

\subsection{Results}

%As expected, the reconstruction errors of the first eigenfunction based on the Haar or the histogram bases are the same, since without regularization (no thresholding) the recovered signal after the projection step is the same as the original data (Figure \ref{fig:error_noncv} ). 

The empirical error rates on eigenfunctions match the theoretical ones (Fig. \ref{fig:error_eta_noncv}), with orders $(\mu^*_1-\mu_2^*)^{-2}(n^{-1}+p^{-2\alpha})$ ($\alpha=1$ in our case) for the first eigenfunction estimator and $(\mu^*_2-\mu_3^*)^{-2}(n^{-1}+p^{-2\alpha})$ for the second (Fig. \ref{fig:error_noncv eta2} and \ref{fig:error_noncv mu2}, $\mu_3^*=0$ in our setting). Computed errors exhibit a double asymptotic behavior in $n$ and $p$. The rates in $n$ are slower than those in $p$, and exactly match $n^{-1}$ and $p^{-2}$. Also, the difference in terms of mean square errors between the first and the second eigenfunctions is due to the gap, since $\mu_1^*-\mu_2^*=10(\mu_2^*-\mu_3^*)$, which means that the estimation of the second eigenfunction is $10$ times harder in terms of speed of convergence. The estimation error on eigenvalues also matches the theoretical upper bound (Fig \ref{fig:error_mu_noncv}). 

%However, note that the estimator of the second eigenvalue does not suffer any aggravation due to the gap. On the other hand, we know that in theory (see \cite{DPR82} Proposition 8) the reconstruction error of the un-regularized estimator should behave roughly as $2\mu_d^2/n$,  which explains why the reconstruction error of the first estimator is so high compared to the second one ($\mu_1^2=121 \mu_2^2$). 

\begin{figure}
	\begin{center}
	\begin{tabular}{cc}
		\includegraphics[scale=0.8]{./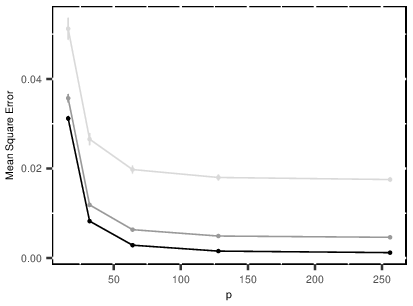} & 
		\includegraphics[scale=0.8]{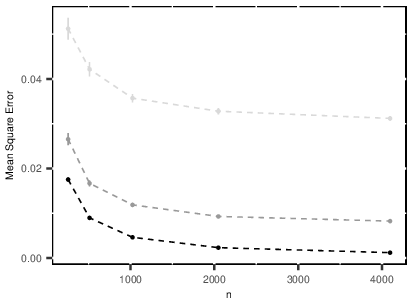}
	\end{tabular}
	\end{center}
	\caption{Mean square error for the first eigenfunction $\eta_1^*$ according to the number of discretization points $p$ (left), and the number of samples $n$ (right). Left: the number of samples is $n \in \{ 256,1024,4096\}$ (light gray, gray, black respectively). Right: the number of discretization points is $p \in \{16,32,256\}$ (light gray, gray, black respectively). The signal to noise ratio is 0.25.\label{fig:error_eta_noncv}}
\end{figure}

\begin{figure}
	\begin{center}
	\begin{tabular}{cc}
	\includegraphics[scale=0.8]{./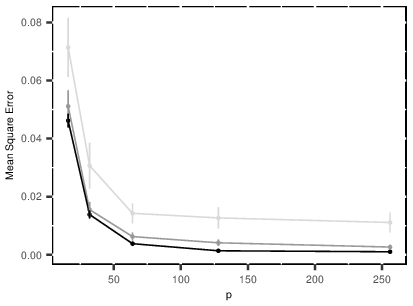} & 
	\includegraphics[scale=0.8]{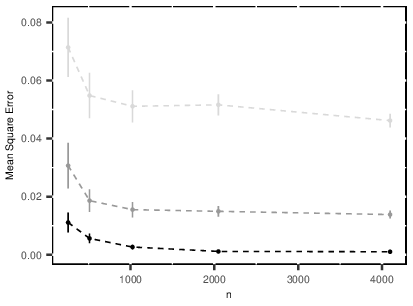}
\end{tabular}
	\end{center}
	\caption{Mean square error for the first eigenvalue $\mu_1^*$ according to the number of discretization points $p$ (left), and the number of samples $n$ (right). Left: the number of samples is $n \in \{ 256,1024,4096\}$ (light gray, gray, black respectively). Right: the number of discretization points is $p \in \{16,32,256\}$ (light gray, gray, black respectively). The signal to noise ratio is 0.25.\label{fig:error_mu_noncv}}	
\end{figure}

We also show that regularization does not necessarily improve the rate of convergence of eigen-elements (Fig. \ref{fig:error_cv_grid_size} and \ref{fig:error_cv_sample_size}). We showed that projection-based functional principal component should attain minimax rates without any regularization. Consequently, at best, regularization should induce a better variance but comparable rates. Since one could operate only on the observed part of the function, at best, one could improve results on the grid without going beyond $n^{-1}$, which is already attained by the non-smoothed estimator. 

%Last, our simulations show that regularization does not necessarily improve the quality of reconstruction (Fig. S3 and S4). Indeed, we showed that projection-based functional principal component should attain minimax rates without any regularization. Consequently, at best,  regularization should induce a better variance but comparable rates (which is not the case here as seen in Figure \ref{fig:variance}  \textcolor{red}{c'est quelle figure ici ?}\textcolor{green}{La figure représente la variance de l'estimateur, elle n'existe plus}). The situation is more obvious when one looks at reconstruction examples (Figure \ref{fig:reconstruction cv}\textcolor{green}{(c'est un exemple de reconstruction après cross validation)}. Since one could operate only on the observed part of the function, at best, one could improve results on the grid without going beyond $n^{-1}$, which is already attained by the non smoothed estimator. Regularization does not bring any improvement to the convergence rates for eigenvalues (Figure \ref{fig:error value cv}\textcolor{green}{erreur valeur propres cv}). 

%Figure \ref{fig:compraison error value}\textcolor{green}{Figure comparant directement cv et non cv vitesse Fig3 et Fig4} compares both approaches in terms of rate of convergence.

% \textcolor{red}{je ne crois pas que le cas infini soit dans les figures}

\section{Applications}
Single-cell genomics has been made possible thanks to technological breakthroughs that have allowed the measurement of gene expression \cite{macosko_highly_2015} at the single-cell resolution. These advances have revolutionized our view of the complexity of living tissues, in their normal or pathological states, and have produced complex high dimensional data. Immunology, and in particular studies of lymphocytes differentiation have focused much attention. Indeed, upon an acute infection, pathogen-specific CD8 T cells are activated, proliferate drastically and differentiate short-term (ou short-lived) effector cells displaying the ability to eliminate infected cells. the response also generates a small population of cells being part of the long-term immune response, the so-called memory cells conferring protection to the host. In \cite{pmid32414833}, the authors collected single antigen-specific CD8 T cells in the spleen of mice after LCMV (lymphocytic choriomeningitis virus) acute infection and measured the expression of genes at time points 0, 3, 4, 5, 6, 7, 10, 14, 21, 32 and 90 days post-infection (dpi). Hence, these data offer the unique opportunity to study the evolution of gene expression throughout an immune response. Many methodological questions can be addressed with these data, and we consider here a short analysis to illustre functional PCA on original data. Average expressions were considered over cells such that the data correspond to the expression of genes over time. Expression data were normalized thanks to the \texttt{SCTransform} procedure of the \texttt{Seurat} package \citep{Seurat} that corrects counts for overdispersion, and provides corrected counts for which the Gaussian approximation is reasonable. Then genes were selected by using the \texttt{FindVariableFeatures} function \citep{Seurat}, resulting in 4851 genes with averaged temporal expression that constitute the input of our model ($Y_i(t)$, $n=4851$, $p=11$). When performing functional PCA on those data using the histogram basis, the rule of thumbs suggests to select 3 functional principal components, and $k$-means clustering exhibits 3 clusters with very distinct average expression profiles (Fig. \ref{fig:scRNASeq}). Interestingly, cluster 1 consists of only two genes Ccl5 and Malat1 that are known to be involved in immune cells activation and recruitement \citep{pmid29559701} and CD8+ differentiation \citep{pmid35593887}, respectively. Their kinetics of expression appears then as a major feature of the gene-expression response to viral infection: they are down-regulated between days 0-8, then up-regulated between 9-90 days, whereas the 31 genes of cluster 2 are up and then down-regulated (Gzmb, Ncl, Tuba1b, Hsp90aa1, Npm1, Hspe1, Ptma, Ran, Hsp90ab1, Actg1, Mif, Tubb5, Eif5a, Ldha, Plac8 and ribosome proteins), the other genes showing a stable average expression stable over time (cluster 3). This first analysis allows us to illustrate the interest of functional PCA on gene expression data, and the underlying gene regulatory networks that structure these gene expression dynamics will deserve a deeper study.

\begin{figure}
	\begin{center}
		\includegraphics{./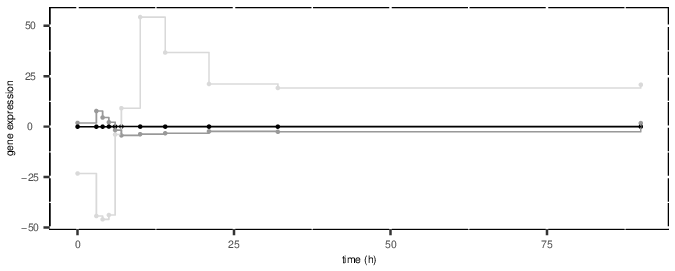}
	\end{center}
	\caption{Average gene expression over time for lymphocytes CD8+ cells following a viral infection, as inferred by functional PCA followed by $k$-means clustering. Curves correspond to the averages of gene expression for the 3 clusters (1-light grey, 2-dark grey, 3-black). \label{fig:scRNASeq}}	
\end{figure}

Genomics offers another original application of functional principal component analysis, to reduce the dimensionality of data that are structured in one dimension along the genome. As an illustration, we consider the fine mapping of replication origins in the human genome, that constitutes the starting points of chromosomes duplication. Replication origins are under a very strong spatio-temporal control, and are part of the integrity maintenance of genomes. The investigation of their spatial organization has become central to better understand genomes architecture and regulation, which remains challenging due to a complex interplay between genetic and epigenetic regulations. Part of the genetic component of their regulation involve particular sequence motifs, called G-quadruplexes, that have the property to form complex four-stranded structures whose role in replication remains unsolved. A crucial aspect to better understand their implication is to determine if these sequence motifs have a preferential positioning upstream replication origins. To investigate this matter, we considered the $\sim$130,000 replication origins of the human genome \citep{PCA14}, and we defined by $Y_i(t)$ the process that equals 1 if there is a G-quadruplex at position $t$ in replication origin $i$, taking motifs coordinates from \cite{ZZH20}. Hence this application goes beyond the Gaussian setting of our model, and shows that extension to count data is also effective with histogram-based functional-PCA. By convention, $t=0$ corresponds to the peak of replication, and we consider positions 500 bases upstream this peak (in negative coordinates). The continuous aspect of the model is not mandatory since positions along the genome are discrete. However, the functional setting allows us to consider the spatial dependencies between the occurrences of these motifs, which is very informative. Given the discrete nature of the data, we smoothed the data using the histogram system, with bin size of 25 bases (corresponding to the average size of G-quadruplexes). Then we performed functional principal component analysis, and we used functional principal components to perform a downstream clustering. We projected every observed curve on principal components to obtain a new representation of the functional data based on general terms $\langle \widetilde{Y}_i, \widehat{\eta}_{\phi,d}\rangle$, and performed a $k$-means clustering to regroup replication origins that share the same spatial distribution of these G-quadruplex motifs. We considered 6 principal components along with 6 clusters, and we considered the spatial distribution of G-quadruplexes within clusters as a result (Figure \ref{fig:application}). Functional principal component analysis appears to catch the spatial structure that makes the clusters, as different clusters of replication origins are characterized by specific patterns of G-quadruplexes accumulation upstream the replication peak. Interestingly, the observed periodicity can be related to a biophysical property of chromatin fibers. Indeed, the DNA molecule is in the form of chromatin fibers in the nucleus, wrapped around the so-called nucleosomes with a periodicity of 144 base pairs. The formation of stable G-quadruplexes has been shown to take place in nucleosome-free regions \citep{PAA19}, hence, the periodicity of their accumulation upstream replication origins indicates that their positioning is directly linked to the epigenetic context of replication initiation. These new biological results are currently under further investigation, and developing a framework for functional-PCA dedicated to count data could be an interesting research direction \citep{BSR20}.

\begin{figure}
	\begin{center}
		\includegraphics{./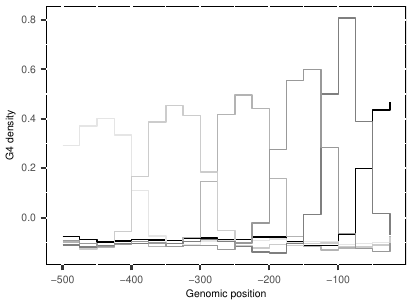}
	\end{center}
	\caption{Density of G-quadruplexes accumulation in human replication origins clusters, determined by functional principal component analysis combined with $k$-means clustering. Each color correspond to a particular cluster.\label{fig:application}}	
\end{figure}
%%%%%%%%%%%%%%%%%%
%%%%%%%%%%%%%%%%%%
\section{Conclusion and perspectives}
In this paper, we have established optimal rates of convergence for estimating the eigenfunctions of the covariance operator of a corrupted process observed on a fixed and regular grid. It is shown that the minimax rate is of order $p^{-2\alpha}+n^{-1}$, revealing the behavior of rates with respect to the parameters $n$, $p$ and $\alpha\in(0,1]$ that design the number of repetitions, the size of the regular grid and the smoothness of the process respectively. In this framework with a double asymptotic in $n$ and $p$, the phase transition occurs when $p$ is of order $n^{1/(2\alpha)}$: 
\begin{itemize}
\item For $p$ larger than $n^{1/(2\alpha)}$, the obtained rate is $n^{-1}$ and the problem has intrinsically the same difficulty if the $n$ curves would be available entirely.
\item For $p$ smaller $n^{1/(2\alpha)}$, the rate is $p^{-2\alpha}$ regardless the number of observed curves. \end{itemize} In particular, $p\to+\infty$ is required to achieve consistency.
These rates are new and are fundamentally different from classical nonparametric rates of the form $(np)^{-2\alpha/(2\alpha+1)}$ in a setting with $n\times p$ observations, revealing main differences with the framework where the sampling scheme is random \citep{Hall} and typical of longitudinal data analysis. In our setting, optimality is achieved by diagonalizing the empirical operator after simple projections by using histograms with $p$ bins. Surprisingly, it means that smoothing and the knowledge of $\alpha$ are not required. These results have important theoretical and practical implications in many aspects regarding the sampling scheme, the used methodology or the grid size if it is left to the practitioner (the choice of the grid itself is a research field on its own and out of the scope of the paper, we refer to \cite{MG96,Seleznjev00} for articles on optimal grid selection). These conclusions bear some similarities with those drawn by \cite{CY11} for mean function estimation. 

Following this work, several perspectives can be envisaged. The first one concerns the range of $\alpha$ and it should be interesting to extend our theoretical results to any smoothness $\alpha\in\R_+^*$, even if it means changing the definition of the class $\mathcal R_\alpha(L)$. This extended framework with higher smoothness raises several open questions: Are rates different? If yes, do we need to introduce smoothing? Can we still consider histograms or do we have to project on smoother bases such as wavelets, splines or Fourier bases?

Our minimax rates depend on the sequence of eigenvalues $(\mu^*_d)_d$ through the terms $(b_d)_d$'s that are viewed as (unknown) constants in our paper. It would be very interesting to investigate how such parameters influence rates when they become very small. Concerning the problem of estimating eigenvalues, Remark~\ref{rem:valeurspropres} provides some upper bounds but optimality of these upper bounds remain a very challenging problem, out of the scope of this paper. 

\textcolor{black}{Besides, as pointed out by \cite{HJ20a}, assuming that the noise is i.i.d. may be unrealistic in some specific situations. \cite{HJ20a} assume for instance that the vector of errors $(\varepsilon_{i,h})_{h=0,\hdots,p}$ is a stationary  Gaussian process, while \cite{DP19} model the errors as realizations of a stochastic process with short term dependency. Both articles obtain convergence rates for the eigenfunctions in the case where the covariance operator $\Gamma$ is of finite-rank. Based on these works, it could be of interest, in a future work, to extend the minimax approach developed in this paper to models allowing non i.i.d. errors.}

%%%%%%%%%%%%%%%%%%%%%
%%%%%%%%%%%%%%%%%%%%%
\bibliographystyle{abbrvnat}
\bibliography{biblio}

\begin{thebibliography}{49}
\providecommand{\natexlab}[1]{#1}
\providecommand{\url}[1]{\texttt{#1}}
\expandafter\ifx\csname urlstyle\endcsname\relax
  \providecommand{\doi}[1]{doi: #1}\else
  \providecommand{\doi}{doi: \begingroup \urlstyle{rm}\Url}\fi

\bibitem[Araujo et~al.(2018)Araujo, Gomez, Aguilar, Salgado, Balko, Bravo,
  Doimi, Bretel, Morante, Flores, Gomez, and Pinto]{pmid29559701}
J.~M. Araujo, A.~C. Gomez, A.~Aguilar, R.~Salgado, J.~M. Balko, L.~Bravo,
  F.~Doimi, D.~Bretel, Z.~Morante, C.~Flores, H.~L. Gomez, and J.~A. Pinto.
\newblock {{E}ffect of {C}{C}{L}5 expression in the recruitment of immune cells
  in triple negative breast cancer}.
\newblock \emph{Sci Rep}, 8\penalty0 (1):\penalty0 4899, Mar 2018.

\bibitem[Backenroth et~al.(2020)Backenroth, Shinohara, Schrack, and
  Goldsmith]{BSR20}
D.~Backenroth, R.~T. Shinohara, J.~A. Schrack, and J.~Goldsmith.
\newblock Nonnegative decomposition of functional count data.
\newblock \emph{Biometrics}, 76\penalty0 (4):\penalty0 1273--1284, 2020.

\bibitem[Baik and Silverstein(2006)]{BS06}
J.~Baik and J.~W. Silverstein.
\newblock Eigenvalues of large sample covariance matrices of spiked population
  models.
\newblock \emph{J. Multivariate Anal.}, 97\penalty0 (6):\penalty0 1382--1408,
  2006.

\bibitem[Bosq(2000)]{Bosq}
D.~Bosq.
\newblock \emph{Linear Processes in Function Spaces: Theory and Applications}.
\newblock Springer New York, 2000.

\bibitem[Boucheron et~al.(2013)Boucheron, Lugosi, and Massart]{BLM-book}
S.~Boucheron, G.~Lugosi, and P.~Massart.
\newblock \emph{Concentration inequalities}.
\newblock Oxford University Press, Oxford, 2013.
\newblock A nonasymptotic theory of independence, With a foreword by Michel
  Ledoux.

\bibitem[Brezis(2011)]{Brezis11}
H.~Brezis.
\newblock \emph{Functional analysis, {S}obolev spaces and partial differential
  equations}.
\newblock Universitext. Springer, New York, 2011.
\newblock ISBN 978-0-387-70913-0.

\bibitem[Bunea and Xiao(2015)]{BX15}
F.~Bunea and L.~Xiao.
\newblock On the sample covariance matrix estimator of reduced effective rank
  population matrices, with applications to f{PCA}.
\newblock \emph{Bernoulli}, 21\penalty0 (2):\penalty0 1200--1230, 2015.

\bibitem[Cai and Yuan(2011)]{CY11}
T.~T. Cai and M.~Yuan.
\newblock Optimal estimation of the mean function based on discretely sampled
  functional data: phase transition.
\newblock \emph{Ann. Statist.}, 39\penalty0 (5):\penalty0 2330--2355, 2011.

\bibitem[Cai and Yuan(2012)]{CY12}
T.~T. Cai and M.~Yuan.
\newblock Minimax and adaptive prediction for functional linear regression.
\newblock \emph{J. Amer. Statist. Assoc.}, 107\penalty0 (499):\penalty0
  1201--1216, 2012.

\bibitem[Cardot and Johannes(2010)]{CJ10}
H.~Cardot and J.~Johannes.
\newblock Thresholding projection estimators in functional linear models.
\newblock \emph{J. Multivariate Anal.}, 101\penalty0 (2):\penalty0 395--408,
  2010.

\bibitem[Crambes and Mas(2013)]{crambes_asymptotics_2013}
C.~Crambes and A.~Mas.
\newblock Asymptotics of prediction in functional linear regression with
  functional outputs.
\newblock \emph{Bernoulli}, 19\penalty0 (5B):\penalty0 2627--2651, 2013.
\newblock ISSN 1350-7265.
\newblock \doi{10.3150/12-BEJ469}.
\newblock URL \url{https://dx-doi-org.frodon.univ-paris5.fr/10.3150/12-BEJ469}.

\bibitem[Dai et~al.(2018)Dai, M\"{u}ller, and Tao]{DMT18}
X.~Dai, H.-G. M\"{u}ller, and W.~Tao.
\newblock Derivative principal component analysis for representing the time
  dynamics of longitudinal and functional data.
\newblock \emph{Statist. Sinica}, 28\penalty0 (3):\penalty0 1583--1609, 2018.
\newblock ISSN 1017-0405.

\bibitem[Dauxois et~al.(1982)Dauxois, Pousse, and Romain]{DPR82}
J.~Dauxois, A.~Pousse, and Y.~Romain.
\newblock Asymptotic theory for the principal component analysis of a vector
  random function: Some applications to statistical inference.
\newblock \emph{Journal of Multivariate Analysis}, 12\penalty0 (1):\penalty0
  136--154, 1982.

\bibitem[Descary and Panaretos(2019)]{DP19}
M.-H. Descary and V.~M. Panaretos.
\newblock Functional data analysis by matrix completion.
\newblock \emph{Ann. Statist.}, 47\penalty0 (1):\penalty0 1--38, 2019.
\newblock ISSN 0090-5364.

\bibitem[Devijver et~al.(2020)Devijver, Goude, and Poggi]{DGP20}
E.~Devijver, Y.~Goude, and J.-M. Poggi.
\newblock Clustering electricity consumers using high-dimensional regression
  mixture models.
\newblock \emph{Applied Stochastic Models in Business and Industry},
  36\penalty0 (1):\penalty0 159--177, 2020.

\bibitem[Ferraty and Romain(2011)]{FR11}
F.~Ferraty and Y.~Romain, editors.
\newblock \emph{The {O}xford handbook of functional data analysis}.
\newblock Oxford University Press, Oxford, 2011.

\bibitem[Ferraty and Vieu(2006)]{FV06}
F.~Ferraty and P.~Vieu.
\newblock \emph{Nonparametric functional data analysis}.
\newblock Springer Series in Statistics. Springer, New York, 2006.
\newblock Theory and practice.

\bibitem[Goode et~al.(2020)Goode, Ries, and Zollweg]{Goode}
K.~Goode, D.~Ries, and J.~Zollweg.
\newblock Explaining neural network predictions for functional data using
  principal component analysis and feature importance, 2020.

\bibitem[Hall and Hosseini-Nasab(2009)]{hall_theory_2009}
P.~Hall and M.~Hosseini-Nasab.
\newblock Theory for {High-Order} bounds in functional principal components
  analysis.
\newblock \emph{Mathematical Proceedings of the Cambridge Philosophical
  Society}, 146\penalty0 (01):\penalty0 225--256, 2009.
\newblock \doi{10.1017/S0305004108001850}.

\bibitem[Hall et~al.(2006)Hall, M{\"u}ller, and Wang]{Hall}
P.~Hall, H.-G. M{\"u}ller, and J.-L. Wang.
\newblock {Properties of principal component methods for functional and
  longitudinal data analysis}.
\newblock \emph{The Annals of Statistics}, 34\penalty0 (3):\penalty0 1493 --
  1517, 2006.

\bibitem[H{\"o}rmann and Jammoul(2022)]{HJ20a}
S.~H{\"o}rmann and F.~Jammoul.
\newblock {Preprocessing noisy functional data: A multivariate perspective}.
\newblock \emph{Electronic Journal of Statistics}, 16\penalty0 (2):\penalty0
  6232 -- 6266, 2022.
\newblock \doi{10.1214/22-EJS2083}.
\newblock URL \url{https://doi.org/10.1214/22-EJS2083}.

\bibitem[Jaimungal and Ng(2007)]{seb}
S.~Jaimungal and E.~Ng.
\newblock Consistent functional pca for financial time-series, 05 2007.

\bibitem[Johnstone(2001)]{Johnstone01}
I.~M. Johnstone.
\newblock On the distribution of the largest eigenvalue in principal components
  analysis.
\newblock \emph{Ann. Statist.}, 29\penalty0 (2):\penalty0 295--327, 2001.

\bibitem[Johnstone and Lu(2009)]{Johnstone}
I.~M. Johnstone and A.~Y. Lu.
\newblock On consistency and sparsity for principal components analysis in high
  dimensions.
\newblock \emph{Journal of the American Statistical Association}, 104\penalty0
  (486):\penalty0 682--693, 2009.

\bibitem[Juditsky and Nemirovski(2008)]{JN}
A.~Juditsky and A.~S. Nemirovski.
\newblock Large deviations of vector-valued martingales in 2-smooth normed
  spaces, 2008.

\bibitem[Kalogridis and Van~Aelst(2019)]{Kalogridis}
I.~Kalogridis and S.~Van~Aelst.
\newblock Robust functional regression based on principal components.
\newblock \emph{Journal of Multivariate Analysis}, 173:\penalty0 393--415,
  2019.

\bibitem[Kanbar et~al.(2022)Kanbar, Ma, Kim, Kurd, Tsai, Tysl, Widjaja, Limary,
  Yee, He, Hao, Fu, Yeo, Huang, and Chang]{pmid35593887}
J.~N. Kanbar, S.~Ma, E.~S. Kim, N.~S. Kurd, M.~S. Tsai, T.~Tysl, C.~E. Widjaja,
  A.~E. Limary, B.~Yee, Z.~He, Y.~Hao, X.~D. Fu, G.~W. Yeo, W.~J. Huang, and
  J.~T. Chang.
\newblock {{T}he long noncoding {R}{N}{A} {M}alat1 regulates {C}{D}8+ {T} cell
  differentiation by mediating epigenetic repression}.
\newblock \emph{J Exp Med}, 219\penalty0 (6), Jun 2022.

\bibitem[Koltchinskii and Lounici(2017)]{KL2017}
V.~Koltchinskii and K.~Lounici.
\newblock Concentration inequalities and moment bounds for sample covariance
  operators.
\newblock \emph{Bernoulli}, 23\penalty0 (1):\penalty0 110--133, 2017.

\bibitem[Kurd et~al.(2020)Kurd, He, Louis, Milner, Omilusik, Jin, Tsai,
  Widjaja, Kanbar, Olvera, Tysl, Quezada, Boland, Huang, Murre, Goldrath, Yeo,
  and Chang]{pmid32414833}
N.~S. Kurd, Z.~He, T.~L. Louis, J.~J. Milner, K.~D. Omilusik, W.~Jin, M.~S.
  Tsai, C.~E. Widjaja, J.~N. Kanbar, J.~G. Olvera, T.~Tysl, L.~K. Quezada,
  B.~S. Boland, W.~J. Huang, C.~Murre, A.~W. Goldrath, G.~W. Yeo, and J.~T.
  Chang.
\newblock {{T} lymphocytes revealed by single-cell {R}{N}{A} sequencing}.
\newblock \emph{Sci Immunol}, 5\penalty0 (47), May 2020.

\bibitem[Lifshits(1995)]{Lifshits95}
M.~A. Lifshits.
\newblock \emph{Gaussian random functions}, volume 322 of \emph{Mathematics and
  its Applications}.
\newblock Kluwer Academic Publishers, Dordrecht, 1995.

\bibitem[Macosko et~al.(2015)Macosko, Basu, Satija, Nemesh, Shekhar, Goldman,
  Tirosh, Bialas, Kamitaki, Martersteck, Trombetta, Weitz, Sanes, Shalek,
  Regev, and McCarroll]{macosko_highly_2015}
E.~Macosko, A.~Basu, R.~Satija, J.~Nemesh, K.~Shekhar, M.~Goldman, I.~Tirosh,
  A.~Bialas, N.~Kamitaki, E.~Martersteck, J.~Trombetta, D.~Weitz, J.~Sanes,
  A.~Shalek, A.~Regev, and S.~McCarroll.
\newblock Highly {Parallel} {Genome}-wide {Expression} {Profiling} of
  {Individual} {Cells} {Using} {Nanoliter} {Droplets}.
\newblock \emph{Cell}, 161\penalty0 (5):\penalty0 1202--1214, May 2015.
\newblock ISSN 00928674.
\newblock \doi{10.1016/j.cell.2015.05.002}.
\newblock URL
  \url{https://linkinghub.elsevier.com/retrieve/pii/S0092867415005498}.

\bibitem[Mas and Ruymgaart(2015)]{Mas}
A.~Mas and F.~Ruymgaart.
\newblock High-dimensional principal projections.
\newblock \emph{Complex Analysis and Operator Theory}, 9:\penalty0 35--63,
  2015.

\bibitem[Massart(2007)]{Massart}
P.~Massart.
\newblock \emph{Concentration inequalities and model selection}, volume 1896 of
  \emph{Lecture Notes in Mathematics}.
\newblock Springer, Berlin, 2007.
\newblock ISBN 978-3-540-48497-4; 3-540-48497-3.
\newblock Lectures from the 33rd Summer School on Probability Theory held in
  Saint-Flour, July 6--23, 2003, With a foreword by Jean Picard.

\bibitem[M\"{u}ller-Gronbach(1996)]{MG96}
T.~M\"{u}ller-Gronbach.
\newblock Optimal designs for approximating the path of a stochastic process.
\newblock \emph{J. Statist. Plann. Inference}, 49\penalty0 (3):\penalty0
  371--385, 1996.

\bibitem[Perrin et~al.(2021)Perrin, Roustant, Rohmer, Alata, Naulin, Idier,
  Pedreros, Moncoulon, and Tinard]{T.V.E}
T.~Perrin, O.~Roustant, J.~Rohmer, O.~Alata, J.~Naulin, D.~Idier, R.~Pedreros,
  D.~Moncoulon, and P.~Tinard.
\newblock Functional principal component analysis for global sensitivity
  analysis of model with spatial output.
\newblock \emph{Reliability Engineering \& System Safety}, 211, 2021.

\bibitem[Pham et~al.(2010)Pham, Mottelet, Schoefs, Pauss, Rocher, Paffoni,
  Meunier, Rechdaoui, and Azimi]{Pham+10}
H.~Pham, S.~Mottelet, O.~Schoefs, A.~Pauss, V.~Rocher, C.~Paffoni, F.~Meunier,
  S.~Rechdaoui, and S.~Azimi.
\newblock Estimation simultan\'ee et en ligne de nitrates et nitrites par
  identification spectrale {UV} en traitement des eaux us\'ees.
\newblock \emph{L'eau, l'industrie, les nuisances}, 335:\penalty0 61--69, 2010.

\bibitem[Picard et~al.(2014)Picard, Cadoret, Audit, Arneodo, Alberti, Battail,
  Duret, and Prioleau]{PCA14}
F.~Picard, J.~C. Cadoret, B.~Audit, A.~Arneodo, A.~Alberti, C.~Battail,
  L.~Duret, and M.~N. Prioleau.
\newblock {{T}he spatiotemporal program of {D}{N}{A} replication is associated
  with specific combinations of chromatin marks in human cells}.
\newblock \emph{PLoS Genet}, 10\penalty0 (5):\penalty0 e1004282, May 2014.

\bibitem[Prorok et~al.(2019)Prorok, Artufel, Aze, Coulombe, Peiffer, Lacroix,
  Gu{\'e}din, Mergny, Damaschke, Schepers, Cayrou, Teulade-Fichou, Ballester,
  and M{\'e}chali]{PAA19}
P.~Prorok, M.~Artufel, A.~Aze, P.~Coulombe, I.~Peiffer, L.~Lacroix,
  A.~Gu{\'e}din, J.~L. Mergny, J.~Damaschke, A.~Schepers, C.~Cayrou, M.~P.
  Teulade-Fichou, B.~Ballester, and M.~M{\'e}chali.
\newblock {{I}nvolvement of {G}-quadruplex regions in mammalian replication
  origin activity}.
\newblock \emph{Nat Commun}, 10\penalty0 (1):\penalty0 3274, 07 2019.

\bibitem[Ramsay and Silverman(2010)]{Ram}
J.~Ramsay and B.~W. Silverman.
\newblock \emph{Functional Data Analysis}.
\newblock Springer New York, 2010.

\bibitem[Rasmussen and Williams(2006)]{Rasmussen}
C.~E. Rasmussen and C.~K.~I. Williams.
\newblock \emph{Gaussian processes for machine learning}.
\newblock Adaptive Computation and Machine Learning. MIT Press, Cambridge, MA,
  2006.
\newblock ISBN 978-0-262-18253-9.

\bibitem[Satija et~al.(2015)Satija, Farrell, Gennert, Schier, and
  Regev]{Seurat}
R.~Satija, J.~A. Farrell, D.~Gennert, A.~F. Schier, and A.~Regev.
\newblock Spatial reconstruction of single-cell gene expression data.
\newblock \emph{Nature Biotechnology}, 33:\penalty0 495--502, 2015.
\newblock \doi{10.1038/nbt.3192}.
\newblock URL \url{https://doi.org/10.1038/nbt.3192}.

\bibitem[Seleznjev(2000)]{Seleznjev00}
O.~Seleznjev.
\newblock Spline approximation of random processes and design problems.
\newblock \emph{J. Statist. Plann. Inference}, 84\penalty0 (1-2):\penalty0
  249--262, 2000.

\bibitem[Seo(2024)]{won}
W.-K. Seo.
\newblock Functional principal component analysis for cointegrated functional
  time series.
\newblock \emph{Journal of Time Series Analysis}, 45\penalty0 (2):\penalty0
  320--330, 2024.
\newblock \doi{https://doi.org/10.1111/jtsa.12707}.
\newblock URL \url{https://onlinelibrary.wiley.com/doi/abs/10.1111/jtsa.12707}.

\bibitem[Song et~al.(2020)Song, Oh, Lim, and Cheung]{Wookyeong}
W.~Song, H.-S. Oh, Y.~Lim, and Y.~K. Cheung.
\newblock Multi-feature clustering of step data using multivariate functional
  principal component analysis, 2020.

\bibitem[Tsybakov(2009)]{tsybakov_introduction_2009}
A.~B. Tsybakov.
\newblock \emph{Introduction to nonparametric estimation}.
\newblock Springer Series in Statistics. Springer, New York, 2009.

\bibitem[Vershynin(2018)]{Vershynin2018}
R.~Vershynin.
\newblock \emph{High-dimensional probability: An introduction with applications
  in data science}, volume~47.
\newblock Cambridge university press, 2018.

\bibitem[Yao et~al.(2005)Yao, M\"{u}ller, and Wang]{YMW05}
F.~Yao, H.-G. M\"{u}ller, and J.-L. Wang.
\newblock Functional data analysis for sparse longitudinal data.
\newblock \emph{J. Amer. Statist. Assoc.}, 100\penalty0 (470):\penalty0
  577--590, 2005.
\newblock ISSN 0162-1459.
\newblock \doi{10.1198/016214504000001745}.
\newblock URL \url{https://doi.org/10.1198/016214504000001745}.

\bibitem[Zheng et~al.(2020)Zheng, Zhang, He, Gong, Wen, Chen, Hao, Zhao, and
  Tan]{ZZH20}
K.~W. Zheng, J.~Y. Zhang, Y.~D. He, J.~Y. Gong, C.~J. Wen, J.~N. Chen, Y.~H.
  Hao, Y.~Zhao, and Z.~Tan.
\newblock {{D}etection of genomic {G}-quadruplexes in living cells using a
  small artificial protein}.
\newblock \emph{Nucleic Acids Res}, 48\penalty0 (20):\penalty0 11706--11720, 11
  2020.

\bibitem[Zhong et~al.(2022)Zhong, Liu, Li, and Zhang]{ZSHJ22}
R.~Zhong, S.~Liu, H.~Li, and J.~Zhang.
\newblock Robust functional principal component analysis for non-gaussian
  longitudinal data.
\newblock \emph{Journal of Multivariate Analysis}, 189:\penalty0 104864, 2022.
\newblock ISSN 0047-259X.
\newblock \doi{https://doi.org/10.1016/j.jmva.2021.104864}.
\newblock URL
  \url{https://www.sciencedirect.com/science/article/pii/S0047259X21001421}.

\end{thebibliography}


\begin{thebibliography}{}

\bibitem[Golovkine et~al., 2022]{MR4390502}
Golovkine, S., Klutchnikoff, N., and Patilea, V. (2022).
\newblock {Learning the smoothness of noisy curves with application to online
  curve estimation}.
\newblock {\em Electronic Journal of Statistics}, 16(1):1485 -- 1560.

\bibitem[H{\"o}rmann and Jammoul, 2022]{MR4515715}
H{\"o}rmann, S. and Jammoul, F. (2022).
\newblock {Preprocessing noisy functional data: A multivariate perspective}.
\newblock {\em Electronic Journal of Statistics}, 16(2):6232 -- 6266.

\bibitem[Kneip and Liebl, 2020]{MR4124340}
Kneip, A. and Liebl, D. (2020).
\newblock {On the optimal reconstruction of partially observed functional
  data}.
\newblock {\em The Annals of Statistics}, 48(3):1692 -- 1717.

\bibitem[Li et~al., 2022]{MR4353837}
Li, Y., Qiu, Y., and Xu, Y. (2022).
\newblock From multivariate to functional data analysis: Fundamentals, recent
  developments, and emerging areas.
\newblock {\em Journal of Multivariate Analysis}, 188:104806.
\newblock 50th Anniversary Jubilee Edition.

\bibitem[Liebl, 2013]{MR3127959}
Liebl, D. (2013).
\newblock {Modeling and forecasting electricity spot prices: A functional data
  perspective}.
\newblock {\em The Annals of Applied Statistics}, 7(3):1562 -- 1592.

\bibitem[Yao et~al., 2005]{MR2253106}
Yao, F., M{\"u}ller, H.-G., and Wang, J.-L. (2005).
\newblock {Functional linear regression analysis for longitudinal data}.
\newblock {\em The Annals of Statistics}, 33(6):2873 -- 2903.

\end{thebibliography}

\section*{Acknowledgements}

The authors want to thank referees for their helpful comments which have improved the presentation of this work. We also would like to thank Christophe Arpin and Margaux Preux for their help on the scRNAseq data analysis. The research was supported by a grant from R\'egion Ile-de-France, and by grants from the Agence Nationale de la Recherche  ANR-18-CE45-0023 SingleStatOmics and ANR-18-CE40-0014 Smiles.

%%%%%%%%%%%%%%%%%%%%%
%%%%%%%%%%%%%%%%%%%%%
%%%%%%%%%%%%%%%%%%%%%
%%%%%%%%%%%%%%%%%%%%%

\appendix

\section{Simulation study} \label{appendix:simulation}

We consider two smoothing systems, the histogram system, and the Haar wavelet system. In the case of histograms, we denote by $D$ the number of bins (such that $D$ divides $p$ in practice), then $\Lambda_D=\{0,\ldots,D-1\}$ and
$$
	\phi_\lambda(t)= D^{1/2}\times\1_{(\lambda/D,(\lambda+1)/D]}(t),\quad t \in [0,1], \lambda \in \Lambda_D.
$$
Then in the case of the Haar system, we consider $(\varphi_{0,0}, \psi_{j,k}, j=0,\hdots,J, k=0,\hdots, 2^j-1)$, with $J+1 = \log_2(p)$, $\varphi_{0,0}$ the scaling function of a multi-resolution analysis (father wavelet) and $\psi_{j,k}$ the associated mother wavelets, such that 
\begin{align*}
	\varphi_{0,0}(t)&=\1_{[0,1]}(t),\\
	\psi_{j,k}(x)&=2^{j/2}\1_{\big[ \frac{2k-2}{2^{j+1}}; \frac{2k-1}{2^{j+1}} \big]}(t) - 2^{j/2}\1_{\big( \frac{2k-1}{2^{j+1}}; \frac{2k}{2^{j+1}} \big]}(t), \quad t \in [0,1].
\end{align*}
We introduce a cross-validation procedure to regularize the eigenfunctions estimators. For each fold $r\in\{1,\ldots,n_{\text{folds}}\}$ we split the observations $Y=(Y_1,\hdots, Y_n)$ into two training and test sets $Y^{\text{train}_r}, Y^{\text{test}_r}$ such that $ \mid \text{train}_r \cup \text{test}_r \mid= n$. Then we introduce $\zeta$, a thresholding parameter, and we set
$$
\widehat{Y}_{i,\zeta}(t) =\widetilde{y}_{i,0,0}\varphi_{0,0}(t) +\sum_{j=0}^{J}\sum_{k=0}^{2^j-1} \widetilde{y}_{i,j,k} \1_{|\widetilde{y}_{i,j,k}|>\zeta}\psi_{j,k}(t),\quad i=1,\ldots,n,\quad t\in[0,1],$$
with 
\begin{align*}
\widetilde y_{i,0,0}=\frac1 p\sum_{h=0}^{p-1}Y_i(t_h)\varphi_{0,0}(t_h), \\
\widetilde y_{i,j,k}=\frac1 p\sum_{h=0}^{p-1}Y_i(t_h)\psi_{j,k}(t_h).
\end{align*}
Then we compute the $\widehat{\eta}_{d,\zeta}^r$'s on $Y^{\text{train}_r}$ for each fold such that
$$
(\widehat{\eta}_{d,\zeta}^r)_d \in \argmin_{\langle f_d,f_{d'}\rangle=1_{d=d'}}\sum_{i\in \text{train}_r}\|\widehat{Y}_{i,\zeta}-\sum_{d=1}^2\langle\widehat{Y}_{i,\zeta},f_d\rangle f_d\|^2.
$$
We select $\widehat{\zeta}$, the minimizer of the cross validated errors :
$$\frac{1}{n_{\text{folds}}}\sum_{r=1}^{n_{\text{folds}}}\sum_{i\in \text{test}_r} \|Y_{i}-\sum_{d=1}^2 \langle Y_{i}, \widehat{\eta}_{d,\zeta}^r \rangle \widehat{\eta}_{d,\zeta}^r \|^2.$$ 
Once $\widehat{\zeta}$ is chosen we compute the final estimator $\widehat{\eta}_{d,\widehat{\zeta}}$ as :
$$(\widehat{\eta}_{d,\widehat{\zeta}})_d\in\argmin_{\langle f_d,f_{d'}\rangle=1_{d=d'}}\sum_{i=1}^n\|\widehat{Y}_{i,\widehat{\zeta}}-\sum_{d=1}^2\langle\widehat{Y}_{i,\widehat{\zeta}},f_d\rangle f_d\|^2.$$
We use the same score function to select the number of bins for histograms.

\begin{figure}
	\begin{center}
	\begin{tabular}{cc}
	\includegraphics[scale=0.8]{./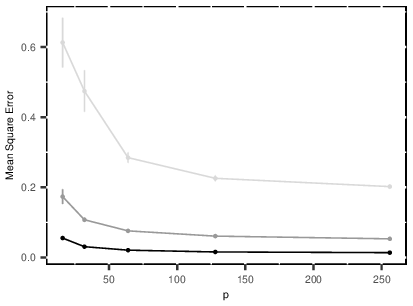} & 
	\includegraphics[scale=0.8]{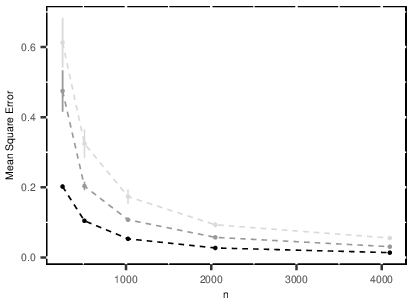}
\end{tabular}

	\end{center}
	\caption{Mean square error for the second eigenfunction $\eta_2^*$ according to the number of discretization points $p$ (left), and the number of samples $n$ (right). Left: the number of samples is $n \in \{ 256,1024,4096\}$ (light gray, gray, black respectively). Right: the number of discretization points is $p \in \{16,32,256\}$ (light gray, gray, black respectively). The signal to noise ratio is 0.25. \label{fig:error_noncv eta2}}
\end{figure}

\begin{figure}
	\begin{center}
	\begin{tabular}{cc}
	\includegraphics[scale=0.8]{./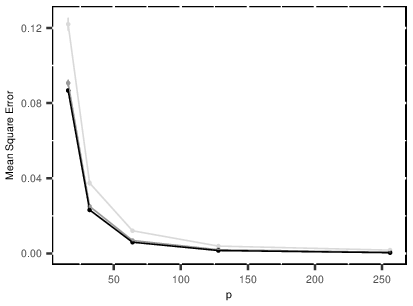} & 
	\includegraphics[scale=0.8]{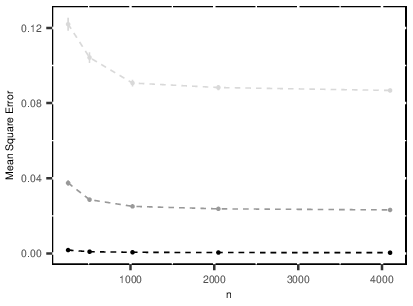}
\end{tabular}

	\end{center}
	\caption{Mean square error for the second eigenvalue $\mu_2^*$ according to the number of discretization points $p$ (left), and the number of samples $n$ (right). Left: the number of samples is $n \in \{ 256,1024,4096\}$ (light gray, gray, black respectively). Right: the number of discretization points is $p \in \{16,32,256\}$ (light gray, gray, black respectively). The signal to noise ratio is 0.25. \label{fig:error_noncv mu2}}
\end{figure}

\begin{figure}
	\begin{center}
		\includegraphics{./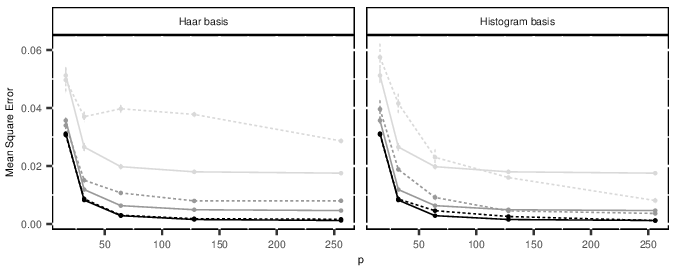}
	\end{center}
	\caption{Mean square error for the first eigenfunction $\eta_1^*$  according to the number of discretization points $p$ and the smoothing system (Haar, left panel, histograms, right panel). The number of samples is $n \in \{ 256,1024,4096\}$ (light gray, gray, black respectively). Dashed line: regularized estimator based on cross validation, plain line: non regularized estimator.  The signal to noise ratio is 0.25. Regularization is performed by cross-validation to choose the number of wavelet coefficients or bins for histograms, as detailed in Appendix \ref{appendix:simulation}}. \label{fig:error_cv_grid_size}
\end{figure}

\begin{figure}
	\begin{center}
		\includegraphics{./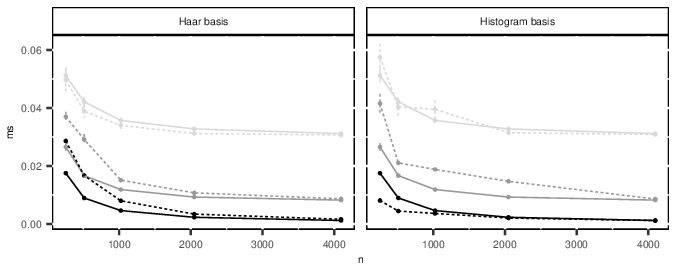}
	\end{center}
	\caption{Mean square error for the first eigenfunction $\eta_1^*$ according to the number samples $n$ and the smoothing system (Haar, left panel, histograms, right panel). The number of discretization points is $p \in \{16,32,256\}$ (light gray, gray, black respectively). Dashed line: regularized estimator based on cross validation, plain line: non regularized estimator. The signal to noise ratio is 0.25. Regularization is performed by cross-validation to choose the number of wavelet coefficients or bins for histograms, as detailed in Appendix \ref{appendix:simulation}}. \label{fig:error_cv_sample_size}.
\end{figure}

\newpage	
%%%%%%%%%%%%%%%%%%%%%%%%%%
%%%%%%%%%%%%%%%%%%%%%%%%%%
\section{Proof of Theorem~\ref{thm:borne_inf_fp}}
\label{proof:A}
To establish Theorem~\ref{thm:borne_inf_fp}, we prove following Propositions~\ref{prop:borne_inf_fpn} and~\ref{prop:borne_inf_fpp}.
%\textcolor{red}{Dans les 2 propositions qui suivent, la v.p. doit etre inferieure a $L$ modulo constante.}
\begin{proposition}
	\label{prop:borne_inf_fpn}
Assume that the rank of the operator $\Gamma$ is larger than 2 and $p\geq 4$. Then, %\textcolor{orange}{for all $L\geq 
\[
\inf_{\widehat\eta_1}\sup_{P_Z\in\mathcal R_\alpha(L)}\mathbb E[\|\widehat\eta_1-\eta_1^*\|^2]\geq c_1n^{-1},\quad \inf_{\widehat\eta_2}\sup_{P_Z\in\mathcal R_\alpha(L)}\mathbb E[\|\widehat\eta_2-\eta_2^*\|^2]\geq c_1n^{-1},
\] 
where $c_1>0$ is a constant depending on $L$, $\alpha$ and $\sigma$. 
\end{proposition}

\begin{proposition}
	\label{prop:borne_inf_fpp}
	There exists a universal constant $c_2>0$ such that
	\[
	\inf_{\widehat\eta_1}\sup_{P_Z\in\mathcal R_\alpha(L)}\mathbb E[\|\widehat\eta_1-\eta_1^*\|^2]\geq c_2p^{-2\alpha}.
	\] 
\end{proposition}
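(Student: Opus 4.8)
\textbf{Proof plan for Proposition~\ref{prop:borne_inf_fpp}.}
The plan is a Le Cam two-point reduction organized around a perturbation that is \emph{invisible on the sampling grid}. The starting observation is that the data $\{Y_i(t_h)\}$ from Model~\eqref{model} depend on $Z_i$ only through $(Z_i(t_0),\dots,Z_i(t_{p-1}))$ plus independent noise; consequently, any two processes whose restrictions to the grid have the same law generate observations with identical distributions, and no estimator can distinguish them. The entire task is therefore to exhibit two laws $P_0,P_1\in\mathcal R_\alpha(L)$ that coincide on the grid yet whose first eigenfunctions are separated by order $p^{-\alpha}$ in $\mathbb L^2$.

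First I would build the grid-invisible perturbation. Let $g\in\C^0$ satisfy $g(t_h)=0$ for $h=0,\dots,p-1$ and $\int_0^1 g=0$, obtained by placing one scaled bump $A\,\phi\big((p-1)(\cdot-t_h)\big)$ (for a fixed profile $\phi$ supported on $[0,1]$ and vanishing at its endpoints) on each of the $p-1$ subintervals $[t_h,t_{h+1}]$, with alternating signs so that $\langle g,\mathbf 1\rangle=0$. A bump of height $A$ on a subinterval of width $(p-1)^{-1}$ carries an $\mathbb L^2$-mass of order $A^2/p$ and an $\alpha$-Hölder seminorm of order $A\,p^{\alpha}$; summing over the $\sim p$ bumps gives $\|g\|^2\asymp A^2$ and a global seminorm $\asymp A\,p^{\alpha}$ (the cross-subinterval Hölder bound being routine since $g$ vanishes at the grid points). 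Choosing $A\asymp p^{-\alpha}$ with a small enough constant yields $\rho:=\|g\|\asymp p^{-\alpha}$ while keeping the $\alpha$-Hölder seminorm of $g$ below $\sqrt L$. This height/width trade-off is the crux of the argument and is exactly what pins the rate at $p^{-\alpha}$.

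Next I would define the two processes. Let $\xi$ be any centered variable with variance $1$ and set $Z_0(t)=\xi$ and $Z_1(t)=\xi\,(1+g(t))$. Since $g$ vanishes on the grid, $Z_0(t_h)=Z_1(t_h)=\xi$ almost surely, so the observations share a common law $Q$ under both models; this step uses nothing about the distribution of $\xi$, which is why the bound is distribution-free. Using the characterization $P_Z\in\mathcal R_\alpha(L)\Leftrightarrow \mathbb E[\{Z(t)-Z(s)\}^2]\le L|t-s|^{2\alpha}$, one has $Z_0\in\mathcal R_\alpha(L)$ trivially and $Z_1\in\mathcal R_\alpha(L)$ because $\mathbb E[\{Z_1(t)-Z_1(s)\}^2]=\{g(t)-g(s)\}^2\le L|t-s|^{2\alpha}$ by the seminorm bound. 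The kernels are rank one, $K_0(s,t)=1$ and $K_1(s,t)=(1+g(s))(1+g(t))$, so the first eigenfunctions are $\eta_1^{(0)}=\mathbf 1$ and $\eta_1^{(1)}=(1+g)/\sqrt{1+\rho^2}$, each with a simple (indeed rank-one) top eigenvalue. A direct computation using $\langle\mathbf 1,g\rangle=0$ gives
\[
\|\eta_1^{(0)}-\eta_1^{(1)}\|^2=2\Big(1-\tfrac{1}{\sqrt{1+\rho^2}}\Big)\ge \tfrac{\rho^2}{1+\rho^2}\ge \tfrac12\rho^2\asymp p^{-2\alpha},
\]
the last steps using $\rho\le 1$.

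Finally I would close with the reduction. For any estimator $\widehat\eta_1$, since the data law is the common $Q$ under both hypotheses,
\[
\sup_{P_Z\in\mathcal R_\alpha(L)}\mathbb E\big[\|\widehat\eta_1-\eta_1^*\|^2\big]\ge \tfrac12\sum_{j=0}^{1}\mathbb E_{P_j}\big[\|\widehat\eta_1-\eta_1^{(j)}\|^2\big]=\tfrac12\,\mathbb E_Q\big[\|\widehat\eta_1-\eta_1^{(0)}\|^2+\|\widehat\eta_1-\eta_1^{(1)}\|^2\big],
\]
and the elementary inequality $\|a\|^2+\|b\|^2\ge\tfrac12\|a-b\|^2$, applied inside the expectation, bounds the right-hand side below by $\tfrac14\|\eta_1^{(0)}-\eta_1^{(1)}\|^2\ge c_2\,p^{-2\alpha}$; taking the infimum over estimators yields the claim. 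The main obstacle lies entirely in the construction step: certifying that the bump trade-off produces a perturbation with $\mathbb L^2$-norm of the exact order $p^{-\alpha}$ \emph{while} respecting the Hölder constant $L$. Everything after that is the standard two-point machinery.
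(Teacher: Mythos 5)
Your proposal is correct in substance and rests on the same core construction as the paper's proof: a perturbation built from $\alpha$-H\"older bumps of height $\asymp p^{-\alpha}$, one per grid cell, vanishing at the observation points $t_h$, so that a rank-one process and its perturbed version generate observations with \emph{identical} laws. Where you genuinely diverge is the reduction machinery. The paper runs Assouad's lemma over the hypercube $\{0,1\}^p$ (each coordinate switching one bump on or off), establishes the Hamming separation $\|\eta^*_{1,\omega}-\eta^*_{1,\omega'}\|^2\gtrsim p^{-2\alpha-1}\rho(\omega,\omega')$, and then notes that all Kullback--Leibler divergences between vertices vanish. Your two-point Le Cam argument, with the ``all bumps off'' versus ``all bumps on'' hypotheses, is more elementary and loses nothing here: since the two data laws are literally equal rather than merely close, there is no information-theoretic cost for the hypercube to amortize, and a single pair already achieves the full separation $\|g\|^2\asymp p^{-2\alpha}$. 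Your closing chain (the average bound together with $\|a\|^2+\|b\|^2\ge\tfrac12\|a-b\|^2$) is valid; in fact the paper's own remark following Theorem~\ref{thm:borne_inf_fp} describes exactly this two-process argument, even though the appendix proof uses Assouad.

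Two details in your construction need repair, both minor. First, a parity issue: with a nonnegative bump profile, ``alternating signs'' gives $\int_0^1 g=0$ only when the number of cells $p-1$ is even; the clean fix is to use an antisymmetric profile with $\int\phi=0$ (the paper's $\varphi$, made of two opposite-sign bumps) so that no sign bookkeeping is needed. Second, and more substantively, your parametrization does not deliver the \emph{universal} constant claimed in the proposition: you spend the H\"older budget on $g$ itself, which forces the bump height $A\lesssim \sqrt{L}\,p^{-\alpha}/L_\phi$ (with $L_\phi$ the profile's $\alpha$-H\"older constant), so your $c_2$ degenerates as $L\to 0$ and depends on $\alpha$ through $L_\phi$. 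The fix is precisely the paper's trick: keep the bump height at exactly $p^{-\alpha}$, so the eigenfunction separation carries a purely geometric constant, and satisfy the regularity constraint by rescaling the \emph{amplitude of the process}, i.e. take $Z_j=\sqrt{\mu}\,\xi\,(1+g_j)$ with $\mu\asymp L/L_\phi^2$ (the paper's $\mu^*_{1,\omega}=L/(2L_\alpha^2C_\omega^2)$ plays this role). Rescaling the process changes the eigenvalue but not the eigenfunctions, so the separation, and hence $c_2$, becomes independent of $L$ and $\alpha$ as required.
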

The result of Theorem~\ref{thm:borne_inf_fp} is deduced from Propositions~\ref{prop:borne_inf_fpn} and ~\ref{prop:borne_inf_fpp}, by taking \[c=\frac{1}{2}\min(c_1; c_2)>~0.\]
Studying the case $p\in\{1,2,3\}$ separately, which is easy, provides the result of Theorem~\ref{thm:borne_inf_fp} for any $n\geq 1$ and any $p\geq 1$.

%\textcolor{black}{We remark that in proofs of Propositions~\ref{prop:borne_inf_fpn} and~\ref{prop:borne_inf_fpp}, the largest eigenvalue $\mu_1^*$ (that corresponds to the spectral radius of the covariance operator $\Gamma$) is larger, up to multiplicative constants depending on the eigenfunctions, than the quantity $L$ appearing in the regularity class we consider. This can be related with value of $L$ obtained under $\alpha$-höldérian assumptions on the eigenfunctions detailed in Subsection~\ref{sec:regularity} (see Remark~\ref{rk:L}, p.~\pageref{rk:L}). }
%%%%%%%%%%%%%%%%%%%%%
\subsection{Proof of Proposition~\ref{prop:borne_inf_fpn}} 
\textbf{Proof:} %\textcolor{red}{Without loss of generality, we assume that $\mu^*_3=0.$ En fait, on n'a pas le droit de choisir $\mu^*_1$ et  $\mu^*_2$...} 
Denoting by 
\[
\mathcal R_\alpha^{\mathcal N}(L) = \left\{P_Z\in\mathcal R_\alpha(L)\text{ and }Z\text{ is a Gaussian process}\right\},
\]
we remark that $\mathcal R_\alpha^{\mathcal N}(L)\subset\mathcal R_\alpha(L)$, then, for $d=1,2$,
\[
\inf_{\widehat\eta_d}\sup_{P_Z\in\mathcal R_\alpha(L)}\mathbb E[\|\widehat\eta_d-\eta_d^*\|^2]\geq \inf_{\widehat\eta_d}\sup_{P_Z\in\mathcal R_\alpha^{\mathcal N}(L)}\mathbb E[\|\widehat\eta_d-\eta_d^*\|^2]. 
\] 
Hence, we can restrict the proof of our lower bound to the case of Gaussian processes.

For $t\in[0,1]$, let
\[
a(t)=\sqrt{2}\cos(2\pi t),\quad b(t)=\sqrt{2}\sin(2\pi t).
\]
We observe that
$$\|a\|=\|b\|=1,\quad \langle a,b\rangle=0.$$ 
We then define, for $t\in[0,1]$, 
\[\eta_{A,0}(t)=a(t),\quad \eta_{B,0}(t)=b(t),\]
\[\eta_{A,1}(t)=c_n\Bigg(a(t)+\frac{b(t)}{\sqrt{n}}\Bigg),\quad \eta_{B,1}(t)=c_n\Bigg(b(t)-\frac{a(t)}{\sqrt{n}}\Bigg)\]
and $c_n$ such that 
\[\|\eta_{A,1}\|=\|\eta_{B,1}\|=1\]
and we obtain
$$c_n^2=\Big(1+\frac{1}{n}\Big)^{-1}<1.$$
We then have for $j=0,1$,
$$\|\eta_{A,j}\|=\| \eta_{B,j}\|=1,\quad \langle \eta_{A,j}, \eta_{B,j}\rangle=0.$$
Functions $\eta_{A,j}$ and $\eta_{A,j}$ are also $C^\infty$. Now, we introduce for $j=0,1$,
\[
Z^j(t)=\sqrt{\mu_{A}}\xi_A\eta_{A,j}(t)+\sqrt{\mu_{B}}\xi_B\eta_{B,j}(t),\quad  t\in[0,1],
\]
where $\mu_A$ and $\mu_B$ are two positive constants such that $L/(64\pi^2)\geq \mu_A>\mu_B$ and  $\xi_A\sim\xi_B\sim\mathcal N(0,1)$ with $\xi_A$ and $\xi_B$ independent. 
\begin{remark}
Observe that for any $t\in[0,1]$,
\[Z^0(t)\sim\mathcal N\big(0,\mu_{A}a^2(t)+\mu_{B}b^2(t)\big),\]
\[Z^1(t)\sim\mathcal N\Bigg(0,\mu_{A}c_n^2\Big(a^2(t)+\frac{b^2(t)}{n}+2\frac{a(t)b(t)}{\sqrt{n}}\Big)+\mu_{B}c_n^2\Big(b^2(t)+\frac{a^2(t)}{n}-2\frac{a(t)b(t)}{\sqrt{n}}\Big)\Bigg).\]
\end{remark}
We consider Model~\eqref{model} with $\sigma=0$ such that $Z_1^0, \ldots,Z_n^0$ (resp. $Z_1^1, \ldots,Z_n^1$) are i.i.d copies of $Z^0$ (resp. $Z^1$). It is straightforward to observe that the lower bound established for $\sigma=0$ provides a lower bound for any $\sigma\geq 0$. We then observe $n$ i.i.d.  copies of
\[
\left\{
\begin{array}{c}
Z^0(t_h)=\sqrt{\mu_{A}}\xi_A\eta_{A,0}(t_h)+\sqrt{\mu_{B}}\xi_B\eta_{B,0}(t_h)   \\ 
Z^1(t_h)=\sqrt{\mu_{A}}\xi_A\eta_{A,1}(t_h)+\sqrt{\mu_{B}}\xi_B\eta_{B,1}(t_h)
\end{array}
\right.
\]
for $h=0,\ldots,p-1$. Let, for $j=0,1$, $P_{j}^Z$ the distribution of $Z^j$. We have for any $(t,u)\in [0,1]^2$,
\begin{align*}
	\int_{\C^0}(z(t)-z(u))^2dP_j^Z(z)&=E[(Z^j(t)-Z^j(u))^2]\\
	&=\mu_A\left(\eta_{A,j}(t)-\eta_{A,j}(u)\right)^2+\mu_B\left(\eta_{B,j}(t)-\eta_{B,j}(u)\right)^2\\
	&\leq C\mu_A|t-u|^{2\alpha},
\end{align*}
for $C=64\pi^2$,
since
$$|a(t)-a(u)|^2\leq 8\pi^2|t-u|^2,\quad |b(t)-b(u)|^2\leq 8\pi^2|t-u|^2$$
implies
$$\int_{\C^0}(z(t)-z(u))^2dP_0^Z(z)\leq 16\pi^2\mu_A|t-u|^{2\alpha}$$
$$\int_{\C^0}(z(t)-z(u))^2dP_1^Z(z)\leq 64\pi^2\mu_A|t-u|^{2\alpha}$$

 (we have used that $\mu_A>\mu_B$ and $n\geq 1$ and $|t-u|^{1-2\alpha}\leq 1$, since $\alpha\leq 1$ and the mean value theorem). We  easily deduce that $P_{j}^Z\in\mathcal R_\alpha^{\mathcal N}(L)$ 
since $64\pi^2\mu_A\leq L$. 
This allows to deduce that 
\[
\inf_{\widehat\eta_1}\sup_{P_Z\in\mathcal R_\alpha(L)}E[\|\widehat\eta_1-\eta_1^*\|^2]\geq\inf_{\widehat\eta_1}\sup_{j=0,1} E[\|\widehat\eta_1-\eta_{A,j}\|^2].
\] 
 We obtain similarly
\[
\inf_{\widehat\eta_2}\sup_{P_Z\in\mathcal R_\alpha(L)}E[\|\widehat\eta_2-\eta_2^*\|^2]\geq\inf_{\widehat\eta_2}\sup_{j=0,1} E[\|\widehat\eta_2-\eta_{B,j}\|^2].
\] 
We now prove a lower bound for $E[\|\widehat\eta_1-\eta_{A,j}\|^2]$.
Let $\widehat\eta_1$ an estimator and $\hat\psi$ 
the minimum distance test defined by
\[
\hat\psi= {\arg\min}_{j=0,1} \|\widehat\eta_1-\eta_{A,j}\|^2,
\]
we have for $j=0,1$,
\[
\|\widehat\eta_1-\eta_{A,j}\|\geq\frac12\|\eta_{A,\hat \psi}-\eta_{A,j}\|. 
\]
Now, we have, with $c_n^2=n/(n+1)$,
\begin{eqnarray*}
	\|\eta_{A,\hat \psi}-\eta_{A,j}\|^2&=&\mathbf 1_{\{\hat \psi\neq j\}}\|\eta_{A,0}-\eta_{A,1}\|^2=\mathbf 1_{\{\hat \psi\neq j\}}\left\|a-c_n\Big(a+\frac{b}{\sqrt{n}}\Big)\right\|^2\\	&\geq&\mathbf 1_{\{\hat \psi\neq j\}}\Big((1-c_n)^2+c_n^2/n\Big)\geq  \mathbf 1_{\{\hat \psi\neq j\}}\frac{c_n^2}{n}=\mathbf 1_{\{\hat \psi\neq j\}}\frac{1}{n+1},
\end{eqnarray*}
for any $n$. Then, 
\begin{eqnarray}\label{Borneinf}
	\inf_{\widehat\eta_1}\sup_{P_Z\in\mathcal R_\alpha(L)}\mathbb E[\|\widehat\eta_1-\eta_{1}^*\|^2]&\geq&\frac{1}{4(n+1)}\times\inf_{\hat\psi}\max_{j=0,1}\mathbb P(\widehat\psi\neq j).
\end{eqnarray}
We now prove that the quantity $\inf_{\hat\psi}\max_{j=0,1}\mathbb P(\widehat\psi\neq j)$ can be bounded from below by an absolute positive constant. For this purpose, we control the Kullback divergence between the two models. We have the following lemma proved in Section~\ref{sec:kullback}.
\begin{lemma}\label{lemma:kullback}
Denoting $P_{j}^{obs}$ the distribution of the random vector 
$\mathbf Z^{j,obs}:=(Z^{j}(t_0),\hdots,Z^{j}(t_{p-1})),$
 $KL(P_{1}^{obs},P_{0}^{obs}),$ the Kullback divergence between $P_{1}^{obs}$ and $P_{0}^{obs}$ satisfies, if $p\geq 4$,
\[KL(P_{1}^{obs},P_{0}^{obs})=\frac{1}{2(n+1)}\Big(\frac{\mu_{A}}{\mu_{B}}+\frac{\mu_{B}}{\mu_{A}}-2\Big).\]
\end{lemma}
The result of the lemma entails that $KL((P_{1}^{obs})^{\otimes n},(P_{0}^{obs})^{\otimes n}),$ the Kullback divergence between $(P_{1}^{obs})^{\otimes n}$ and $(P_{0}^{obs})^{\otimes n}$ 
satisfies
\[KL((P_{1}^{obs})^{\otimes n},(P_{0}^{obs})^{\otimes n})=n\times KL(P_{1}^{obs},P_{0}^{obs})=\frac{n}{2(n+1)}\Big(\frac{\mu_{A}}{\mu_{B}}+\frac{\mu_{B}}{\mu_{A}}-2\Big).\]
and then is bounded by a constant $\kappa$ only depending on $\mu_{A}$ and $\mu_{B}$. Therefore, Theorem 2.2 of \citet{tsybakov_introduction_2009} shows that
\[
\inf_{\hat\psi}\max_{j=0,1}\mathbb P(\widehat\psi\neq j)\geq\max\left\{\frac{1}{4}\exp(-\kappa),\frac{1-\sqrt{\kappa/2}}{2}\right\}>0.
\]
and Inequality \eqref{Borneinf} provides the desired lower bound. 
In the same way, we obtain a lower bound for $\inf_{\widehat\eta_2}\sup_{P_Z\in\mathcal R_\alpha(L)}E[\|\widehat\eta_2-\eta_2^*\|^2]$.
%%%%%%%%%
\subsubsection{Proof of Lemma~\ref{lemma:kullback}}\label{sec:kullback}
We first remark that 
\[
\mathbf Z^{j,obs}\sim\mathcal N(0,G_{j}),
\]
where $ G_{j}=([G_{j}]_{k,\ell})_{0\leq k,\ell \leq p-1}$ and
\[
[G_{j}]_{k,\ell}=\mathbb E[Z_{j}(t_k)Z_{j}(t_\ell)]=\mu_{A}\eta_{A,j}(t_k)\eta_{A,j}(t_\ell)+\mu_{B}\eta_{B,j}(t_k)\eta_{B,j}(t_\ell).
\]
Let us explicit $G_0$ and $G_1$. We have
\begin{align*}
[G_{0}]_{k,\ell}&=\mu_{A}\eta_{A,0}(t_k)\eta_{A,0}(t_\ell)+\mu_{B}\eta_{B,0}(t_k)\eta_{B,0}(t_\ell)\\
&=\mu_{A}a(t_k)a(t_\ell)+\mu_{B}b(t_k)b(t_\ell)\\
&=2\mu_{A}\cos(2\pi t_k)\cos(2\pi t_\ell)+2\mu_{B}\sin(2\pi t_k)\sin(2\pi t_\ell)
\end{align*}
and
\begin{align*}
[G_{1}]_{k,\ell}&=\mu_{A}\eta_{A,1}(t_k)\eta_{A,1}(t_\ell)+\mu_{B}\eta_{B,1}(t_k)\eta_{B,1}(t_\ell)\\
&=\mu_{A}c_n^2\Big(a(t_k)+\frac{b(t_k)}{\sqrt{n}}\Big)\Big(a(t_\ell)+\frac{b(t_\ell)}{\sqrt{n}}\Big)+\mu_{B}c_n^2\Big(b(t_k)-\frac{a(t_k)}{\sqrt{n}}\Big)\Big(b(t_\ell)-\frac{a(t_\ell)}{\sqrt{n}}\Big)\\
&=2\mu_{A}c_n^2\Big(\cos(2\pi t_k)+\frac{\sin(2\pi t_k)}{\sqrt{n}}\Big)\Big(\cos(2\pi t_\ell)+\frac{\sin(2\pi t_\ell)}{\sqrt{n}}\Big)\\
&\hspace{2cm}+2\mu_{B}c_n^2\Big(\sin(2\pi t_k)-\frac{\cos(2\pi t_k)}{\sqrt{n}}\Big)\Big(\sin(2\pi t_\ell)-\frac{\cos(2\pi t_\ell)}{\sqrt{n}}\Big).
\end{align*}
We now determine eigenelements of $G_0$ and $G_1$. For this purpose, we use the following lemma proved in Section~\ref{sec:trigo}.
\begin{lemma}\label{lemma:trigo}
Let $p\geq 3$. With $t_k=k/(p-1)$, for $k=0,\ldots,p-1$, we kave
$$\sum_{k=0}^{p-1}\cos(2\pi t_k)=1,\quad \sum_{k=0}^{p-1}\sin(2\pi t_k)=0,\quad \sum_{k=0}^{p-1}\sin(4\pi t_k)=0.$$
Furthermore, for $p\geq 4$,
$$\sum_{k=0}^{p-1}\cos^2(2\pi t_k)=\frac{p+1}{2},\quad \sum_{k=0}^{p-1}\sin^2(2\pi t_k)=\frac{p-1}{2}.$$
\end{lemma}
We set $e=(e_k)_{0\leq k \leq p-1}$ and $f=(f_k)_{0\leq k \leq p-1}$ with
\[e_k=\sqrt{\frac{2}{p+1}}\cos(2\pi t_k)=\frac{a(t_k)}{\sqrt{p+1}},\quad f_k=\sqrt{\frac{2}{p-1}}\sin(2\pi t_k)=\frac{b(t_k)}{\sqrt{p-1}}.\]
Lemma~\ref{lemma:trigo} shows that
\[\|e\|_{\ell_2}=\| f\|_{\ell_2}=1,\quad \langle e,f\rangle_{\ell_2}=0.\]
We then complete $(e,f)$ so that we have an orthonormal basis of $\R^p$, denoted $\mathcal B$. We observe that
\[
[G_{0}]_{k,\ell}=\mu_{A}(p+1)e_ke_l+\mu_{B}(p-1)f_kf_\ell,
\]
which entails
\[G_0e=\mu_{A}(p+1)e,\quad G_0f=\mu_{B}(p-1)f.\]
Similarly,
\begin{align*}
[G_{1}]_{k,\ell}
&=\mu_{A}c_n^2\Bigg(\sqrt{p+1}e_k+\sqrt{\frac{p-1}{n}}f_k\Bigg)\Bigg(\sqrt{p+1}e_\ell+\sqrt{\frac{p-1}{n}}f_\ell\Bigg)\\
&\hspace{2cm}+\mu_{B}c_n^2\Bigg(\sqrt{p-1}f_k-\sqrt{\frac{p+1}{n}}e_k\Bigg)\Bigg(\sqrt{p-1}f_\ell-\sqrt{\frac{p+1}{n}}e_\ell\Bigg),
\end{align*}
which entails
\[G_1e=\Bigg(\mu_{A}c_n^2(p+1)+\mu_{B}c_n^2\frac{p+1}{n}\Bigg)e+\Bigg(\mu_{A}c_n^2\sqrt{\frac{p^2-1}{n}}-\mu_{B}c_n^2\sqrt{\frac{p^2-1}{n}}\Bigg)f\]
and
\[G_1f=\Bigg(\mu_{A}c_n^2\sqrt{\frac{p^2-1}{n}}-\mu_{B}c_n^2\sqrt{\frac{p^2-1}{n}}\Bigg)e+\Bigg(\mu_{A}c_n^2\frac{p-1}{n}+\mu_{B}c_n^2(p-1)\Bigg)f.\]
We have shown that ${\bf Z}^{0,obs}$ and ${\bf Z}^{1,obs}$ are supported by the hyperplane spanned by $e$ and $f$ and the variance-covariance matrices for ${\bf Z}^{0,obs}$ and ${\bf Z}^{1,obs}$ expressed on $(e,f)$ are respectively
\[
G_{red,0}=\left(\begin{array}{cc}\mu_{A}(p+1) & 0 \\0 & \mu_{B}(p-1)\end{array}\right)
\]
and
\[
G_{red,1}=c_n^2\left(\begin{array}{cc}(p+1)\big(\mu_{A}+\frac{\mu_{B}}{n}\big) & \sqrt{\frac{p^2-1}{n}}(\mu_A-\mu_B) \\ \sqrt{\frac{p^2-1}{n}}(\mu_A-\mu_B) & (p-1)\big(\frac{\mu_{A}}{n}+\mu_{B}\big)\end{array}\right).
\]
We recall
\[KL(P_{1}^{obs},P_{0}^{obs})=\frac{1}{2}\left(\log\Bigg(\frac{\det(G_{red,0})}{\det(G_{red,1})}\Bigg)-2+\text{Trace}(G_{red,0}^{-1}G_{red,1})\right).\]
see for instance Equation (A23) of \cite{Rasmussen}.
We have:
\[\det(G_{red,0})=\mu_A\mu_B(p^2-1)\]
and
\begin{align*}
\det(G_{red,1})&=c_n^4(p^2-1)\Bigg(\big(\mu_{A}+\frac{\mu_{B}}{n}\big)\big(\frac{\mu_{A}}{n}+\mu_{B}\big)-\frac{(\mu_A-\mu_B)^2}{n}\Bigg) \\
&= \mu_A\mu_B(p^2-1)c_n^4\Big(1+\frac{1}{n}\Big)^2\\
&=\mu_A\mu_B(p^2-1).
\end{align*}
Finally,
\[
G_{red,0}^{-1}=\left(\begin{array}{cc}\mu_{A}^{-1}(p+1)^{-1} & 0 \\0 & \mu_{B}^{-1}(p-1)^{-1}\end{array}\right)
\]
and
\[
G_{red,0}^{-1}G_{red,1}=c_n^2\left(\begin{array}{cc}\big(1+\frac{\mu_{B}}{\mu_{A}n}\big) & \sqrt{\frac{p-1}{n(p+1)}}\big(1-\frac{\mu_{B}}{\mu_{A}}\big) \\ -\sqrt{\frac{p+1}{n(p-1)}}\big(1-\frac{\mu_{A}}{\mu_{B}}\big)& \big(1+\frac{\mu_{A}}{\mu_{B}n}\big)\end{array}\right),
\]
which yields, with $c_n^2=n/(n+1)$,
\[KL(P_{1}^{obs},P_{0}^{obs})=\frac{1}{2}\left(-2+2c_n^2+\frac{c_n^2}{n}\Big(\frac{\mu_{A}}{\mu_{B}}+\frac{\mu_{B}}{\mu_{A}}\Big)\right)=\frac{1}{2(n+1)}\Big(\frac{\mu_{A}}{\mu_{B}}+\frac{\mu_{B}}{\mu_{A}}-2\Big).\]
Lemma~\ref{lemma:kullback} is proved.
%%%%%%%
\subsubsection{Proof of Lemma~\ref{lemma:trigo}}\label{sec:trigo}
Let $x\in (0,2\pi)$. We have:
\begin{align*}
\sum_{k=0}^{p-1}e^{ixk}&=\frac{1-e^{ixp}}{1-e^{ix}}
=\frac{e^{ixp/2}\big(e^{-ixp/2}-e^{ixp/2}\big)}{e^{ix/2}\big(e^{-ix/2}-e^{ix/2}\big)}
=e^{ix(p-1)/2}\frac{\sin(xp/2)}{\sin(x/2)}.
\end{align*}
Let $p\geq 3$. We take $x=2\pi/(p-1)$ which lies in $(0,2\pi)$. Considering the real and imaginary parts, we obtain:
$$\sum_{k=0}^{p-1}\cos(2\pi t_k)=\cos(\pi)\frac{\sin(\pi p/(p-1))}{\sin(\pi/(p-1))}=1$$ 
and
$$\sum_{k=0}^{p-1}\sin(2\pi t_k)=\sin(\pi)\frac{\sin(\pi p/(p-1))}{\sin(\pi/(p-1))}=0.$$ 
Similarly,  if $p\geq 4$, with $x=4\pi/(p-1)$ which lies in $(0,2\pi)$, we have
$$\sum_{k=0}^{p-1}\sin(2\pi t_k)=0.$$
This result remains true for $p=3$. Now, we have, for $p\geq 4$,
$$\sum_{k=0}^{p-1}\cos^2(2\pi t_k)=\sum_{k=0}^{p-1}\frac{\cos(4\pi t_k)+1}{2}=\frac{\frac{\sin(2\pi p/(p-1))}{\sin(2\pi/(p-1))}+p}{2}=\frac{p+1}{2}$$
and 
$$\sum_{k=0}^{p-1}\sin^2(2\pi t_k)=\sum_{k=0}^{p-1}\Big(1-\cos^2(2\pi t_k)\Big)=\frac{p-1}{2}.$$
Lemma~\ref{lemma:trigo} is proved.
%%%%%%%%%%%%%%%%%%%%%
\subsection{Proof of Proposition~\ref{prop:borne_inf_fpp}}\label{proof:borneinfp}
The proof is based on Assouad's Lemma and follows the general scheme described in \citet[Sections 2.6 and 2.7]{tsybakov_introduction_2009}. 
Let 
\[
\phi(t)=e^{-\frac1{1-t^2}}1_{(-1,1)}(t).
\]
We then define 
\[
\varphi(t)=\left\{
\begin{array}{cl}
	\phi(4t-1) &\text{ if }t\in[0,1/2),\\
	-\phi(4t+1) &\text{ if }t\in(-1/2,0],\\
	0&\text{ if }t\notin(-1/2,1).
\end{array}
\right.
\]
Both functions $\phi$ and $\varphi$ are $C^\infty$ on $\mathbb R$ with bounded support, then are $\alpha$-H\"older continuous, for all $\alpha>0$. The function $\varphi$ has its support included in $(-1/2,1/2)$ and verifies $\int_{-1/2}^{1/2}\varphi(t)dt=0$. We note $ L_\alpha$ such that, for all $t,u\in\mathbb R$, 
\[
|\varphi(t)-\varphi(u)|\leq L_\alpha |t-u|^{\alpha}.
\]
Let us now define test eigenfunctions. For $\bomega=(w_0,\ldots,w_{p-1})\in\{0,1\}^p$, we set
$$\eta_{1,\bomega}^*(t)=C_{\bomega}\left(\gamma+\sum_{k=0}^{p-1}\omega_k \left(p^{-\alpha}\varphi\left(p(t-t_k)-1/2\right)\right)\right),$$
with $C_{\bomega}$ and $\gamma>0$ two positive constants to be specified later. To be an eigenfunction, $\eta_{1,\bomega}^*$ has to be of norm 1, which writes
\begin{eqnarray*}
	\|\eta_{1,\bomega}^*\|^2&=&C_{\bomega}^2\int_0^1\left(\gamma+\sum_{k=0}^{p-1}\omega_k\left(p^{-\alpha}\varphi(p(t-t_k)-1/2)\right)\right)^2dt\\
	&=&C_{\bomega}^2\left(\gamma^2+2\gamma\sum_{k=0}^{p-1}\omega_k\left(p^{-\alpha}\int_0^1\varphi(p(t-t_k)-1/2)dt\right)\right.\\
	&&\left.+\int_0^1\left(\sum_{k=0}^{p-1}\omega_k\left(p^{-\alpha}\varphi(p(t-t_k)-1/2)\right)\right)^2dt\right).
\end{eqnarray*}
Using successively that the support of $\varphi$ is in $(-1/2,1/2)$ and that $\int_{-1/2}^{1/2}\varphi(t)dt=0$, we have
\[
\int_0^1\varphi(p(t-t_k)-1/2)dt=\int_{t_k}^{t_{k+1}}\varphi(p(t-t_k)-1/2)dt=p^{-1}\int_{-1/2}^{1/2}\varphi(t)dt=0,
\]
and 
\[
\int_0^1\left(\sum_{k=0}^{p-1}\omega_k\varphi(p(t-t_k)-1/2)\right)^2dt=\sum_{k=0}^{p-1}\omega_k\int_0^1\varphi^2(p(t-t_k)-1/2)dt=p^{-1}\sum_{k=0}^{p-1}\omega_k\|\varphi\|^2.
\]
This implies that 
\begin{eqnarray*}
	\|\eta_{1,\bomega}^*\|^2
	&=&C_{\bomega}^2\left(\gamma^2+p^{-2\alpha-1}\|\varphi\|^2\sum_{k=0}^{p-1}\omega_k\right).
\end{eqnarray*}
We then fix the quantity 
\[
C_{\bomega}=\left(\gamma^2+p^{-2\alpha-1}\|\varphi\|^2\sum_{k=0}^{p-1}\omega_k\right)^{-1/2},
\]
so that $\|\eta_{1,\bomega}^*\|=1$ and observe that $C_{\bomega}$ verifies
\begin{equation*}%\label{eq:Comega}
	\left(\gamma^2+\|\varphi\|^2\right)^{-1/2}\leq\left(\gamma^2+p^{-2\alpha}\|\varphi\|^2\right)^{-1/2}\leq C_{\bomega}\leq\gamma^{-1}.
\end{equation*}
We now define the associated distribution of our observations: for $\xi$ a centered random variable with variance 1 and $\mu_{1,\bomega}^*=\frac{L}{2L_{\alpha}^2C_\bomega^2}$, %i.e. $C_\bomega^2=\frac{L}{2L_{\alpha}^2\mu_{1,\bomega}^*}$ 
we set
\begin{equation}\label{Z:def}
Z_{\bomega}(t)=\sqrt{\mu_{1,\bomega}^*}\xi\eta_{1,\bomega}^*(t).
\end{equation}
%\begin{remark}
%Observe that subsequent computations remain valid if we just assume that $\xi$ is centered and of variance 1.
%\end{remark}
Let $P_{\bomega}^Z$ be the distribution of $Z_\bomega$. We have that $P_{\bomega}^Z\in\mathcal R_\alpha(L)$ since
\begin{eqnarray*}
	\int_{C([0,1])}(z(t)-z(s))^2dP_{\bomega}^Z(z)&=&\mathbb E[(Z_{\bomega}(t)-Z_{\bomega}(s))^2]=\mu_{1,\bomega}^*(\eta_{1,\bomega}(t)-\eta_{1,\bomega}(s))^2\mathbb E[\xi^2]\\
	&=&\mu_{1,\bomega}^*(\eta_{1,\bomega}(t)-\eta_{1,\bomega}(s))^2\\
	&=&\mu_{1,\bomega}^*C_{\bomega}^2\left(\sum_{k=0}^{p-1}\omega_kp^{-\alpha}(\varphi(p(t-t_k)-1/2)-\varphi(p(s-t_k)-1/2))\right)^2. 
\end{eqnarray*}
Then, using the properties of $\varphi$, we have two cases: 
\begin{itemize}
	\item If $s,t\in[t_\ell,t_{\ell+1}[$ for some $\ell\in\{0,\hdots, p-1\}$, 
	\begin{eqnarray*}
		\left(\sum_{k=0}^p\omega_kp^{-\alpha}(\varphi(p(t-t_\ell)-1/2)-\varphi(p(s-t_\ell)-1/2))\right)^2&&\\
		&&\hspace{-8cm}=\omega_\ell^2p^{-2\alpha}(\varphi(p(t-t_\ell)-1/2)-\varphi(p(s-t_\ell)-1/2))^2\\
		&&\hspace{-8cm}\leq p^{-2\alpha}L_{\alpha}^2|p(t-t_\ell)-p(s-t_\ell)|^{2\alpha}=L_{\alpha}^2|t-s|^{2\alpha}.
	\end{eqnarray*}
	\item If $s\in[t_\ell,t_{\ell+1}[$ and  $t\in[t_{\ell'},t_{\ell'+1}[$ with $\ell\neq\ell'$,
	\begin{eqnarray*}
		\left(\sum_{k=0}^p\omega_kp^{-\alpha}(\varphi(p(t-t_k)-1/2)-\varphi(p(s-t_k)-1/2))\right)^2&&\\
		&&\hspace{-8cm}=\omega_\ell^2p^{-2\alpha}|\varphi(p(t-t_\ell)-1/2)-\varphi(p(s-t_\ell)-1/2)|^2.\\
		&&\hspace{-6cm}+\omega_{\ell'}^2p^{-2\alpha}|\varphi(p(t-t_{\ell'})-1/2)-\varphi(p(s-t_{\ell'})-1/2)|^2\\
		&&\hspace{-8cm}\leq 2L_{\alpha}^2|t-s|^{2\alpha}.
	\end{eqnarray*}
\end{itemize}
Finally
\[
\int_{C([0,1])}(z(t)-z(s))^2dP_{\bomega}(z)\leq 2\mu_{1,\bomega}^*C_{\bomega}^2L_\alpha^2|t-s|^{2\alpha}=L|t-s|^{2\alpha}. 
\]
This allows to deduce that 
\[
\inf_{\widehat\eta_1}\sup_{P_Z\in\mathcal R_\alpha(L)}\mathbb E[\|\widehat\eta_1-\eta_1^*\|^2]\geq\inf_{\widehat\eta_1}\sup_{\bomega\in\{0,1\}^{p}}\mathbb E[\|\widehat\eta_1-\eta_{1,\bomega}^*\|^2] ,
\] 
and the aim of what follows is to prove a lower bound for $\mathbb E[\|\widehat\eta_1-\eta_{1,\bomega}^*\|^2]$. 

Let $\widehat\eta_1$ an estimator and 
\[
\widehat\bomega\in {\arg\min}_{\bomega\in\{0,1\}^p} \|\widehat\eta_1-\eta_{1,\bomega}^*\|^2,
\]
we have 
\[
\|\widehat\eta_1-\eta_{1,\widehat\bomega}^*\|\geq\frac12\|\eta_{1,\widehat\bomega}^*-\eta_{1,\bomega}^*\|. 
\]
Now, still from the support properties of $\varphi$,
\begin{eqnarray*}
	\|\eta_{1,\widehat\bomega}^*-\eta_{1,\bomega}^*\|^2&&\\
	&&\hspace{-3cm}=\sum_{k=0}^{p-1}\int_{t_{k}}^{t_{k+1}}\left(C_{\widehat\bomega}(\gamma+\widehat\omega_{k}p^{-\alpha}\varphi(p(t-t_{k})-1/2))-C_{\bomega}(\gamma+\omega_{k}p^{-\alpha}\varphi(p(t-t_{k})-1/2))\right)^2dt\\
	&&\hspace{-3cm}=p^{-1}\sum_{k=0}^{p-1}\int_{-1/2}^{1/2}\left(C_{\widehat\bomega}(\gamma+\widehat\omega_{k}p^{-\alpha}\varphi(u))-C_{\bomega}(\gamma+\omega_{k}p^{-\alpha}\varphi(u))\right)^2du\\
	&&\hspace{-3cm}=(C_{\widehat\bomega}-C_{\bomega})^2\gamma^2+\|\varphi\|^2p^{-2\alpha-1}\sum_{k=0}^{p-1}(C_{\widehat\bomega}\widehat\omega_{k}-C_{\bomega}\omega_{k})^2\geq\|\varphi\|^2p^{-2\alpha-1}\sum_{k=0}^{p-1}(C_{\widehat\bomega}\widehat\omega_{k}-C_{\bomega}\omega_{k})^2\\
	%&&\hspace{-3cm}\geq\|\varphi\|^2p^{-2\alpha-1}\left(\min\{C_{\widehat\bomega}^2,C_\bomega^2\}\sum_{k=0}^{p-1}\mathbf 1_{\{\widehat\omega_k\neq\omega_k\}}^2+(C_{\widehat\bomega}-C_{\bomega})^2\sum_{k=0}^{p-1}\mathbf 1_{\{\widehat\omega_k=\omega_k=1\}}^2\right)\\
	&&\hspace{-3cm}\geq\|\varphi\|^2p^{-2\alpha-1}\min\{C_{\widehat\bomega}^2,C_\bomega^2\}\sum_{k=0}^{p-1}\mathbf 1_{\{\widehat\omega_k\neq\omega_k\}}\\
	&&\hspace{-3cm}%\geq\|\varphi\|^2p^{-2\alpha-1}\min\{C_{\widehat\bomega}^2,C_\bomega^2\}\rho(\bomega,\widehat\bomega)
	\geq (\gamma^2+\|\varphi\|^2)^{-1}\|\varphi\|^2p^{-2\alpha-1}\rho(\widehat\bomega,\bomega),
\end{eqnarray*}
where $\rho(\bomega,\bomega')=\sum_{k=0}^{p-1}\mathbf 1_{\omega_k\neq\omega_k'}$ is the Hamming distance on $\{0,1\}^p$.

Combining all the inequalities above, we have the existence of a constant $\tilde c=\|\varphi\|^2/(4(\gamma^2+\|\varphi\|^2))$ such that
\begin{eqnarray*}
	\inf_{\widehat\eta_1}\sup_{P_Z\in\mathcal R_\alpha(L)}\mathbb E[\|\widehat\eta_1-\eta_{1,\bomega}^*\|^2]&\geq& \tilde cp^{-2\alpha-1}\inf_{\widehat\bomega}\max_{\bomega\in\{0,1\}^{p}}\mathbb E[\rho(\widehat\bomega,\bomega)].
	%&\geq & c(\|\varphi\|^2p^{-2\alpha}+n^{-1})\inf_{\widehat\bomega}\max_{\bomega\in\{0,1\}^{m}}\mathbb E[\rho(\widehat\bomega,\bomega)]
\end{eqnarray*}
By Assouad's lemma (see e.g. Tsybakov, 2009, Theorem 2.12),  there exists a constant $c>0$ such that 
\begin{equation}\label{inf-KL}
	\inf_{\widehat\bomega}\max_{\bomega\in\{0,1\}^{p}}\mathbb E[\rho(\widehat\bomega,\bomega)]\geq cp,
\end{equation}
provided we are able to prove that for some constant $K_{\max}\geq 0$,
\[
KL((P_{\bomega}^{obs})^{\otimes n},(P_{0}^{obs})^{\otimes n})\leq K_{\max}, \text{ for all } \bomega\in\{0,1\}^p,
\]
where $P_{\bomega}^{obs}$ is the law of the random vector 
\[\mathbf Y^{obs}_{\bomega}:=(Y_{\bomega}(t_0),\hdots,Y_{\bomega}(t_{p-1}))
\] such that 
\[
Y_{\bomega}(t_j)=Z_{\bomega}(t_j)+\varepsilon_j\] 
with $\varepsilon_0,\hdots,\varepsilon_{p-1}\sim_{i.i.d.}\mathcal N(0,\sigma^2)$ and $KL(P,Q)$ is the Kullback-Leibler divergence between two measures $P$ and $Q$. 
In \eqref{inf-KL}, the constant $c$ only depends on $K_{\max}$.
We observe that, for all $\bomega\in\{0,1\}^p$, for all $j=0,\hdots,p-1$,
\[
Y_\bomega(t_j)=Z_\bomega(t_j)+\varepsilon_j=\sqrt{\mu_{1,\bomega}^*}\xi\eta_{1,\bomega}^*(t_j)+\varepsilon_j. 
\]
Now
\[
\eta_{1,\bomega}^*(t_j)=C_\bomega\left(\gamma+\sum_{k=0}^{p-1}\omega_k(p^{-\alpha}\varphi(p(t_j-t_k)-1/2))\right)=C_\bomega\gamma,
\]
since $\varphi((p(t_j-t_k)-1/2)=\varphi(-1/2)=0$ if $j=k$ and $\varphi((p(t_j-t_k)-1/2)=0$ if $j\neq k$ by the support properties of $\varphi$ and the fact that 
\[p(t_j-t_k)-1/2=\frac{p}{p-1}(j-k)-1/2\geq \frac{p}{p-1}-1/2\geq 1/2
\] if $j>k$ and $p(t_j-t_k)-1/2\leq 1/2$ if $j<k$. 
Hence 
\[
Y_{\bomega}(t_j)=\sqrt{\mu_{1,\bomega}^*}\xi C_\bomega\gamma+\varepsilon_j=\frac{\gamma \sqrt{L}}{L_{\alpha}\sqrt{2}}\xi+\varepsilon_j
\]
and the distribution of  $\mathbf Y^{obs}_{\bomega}$ does not depend on ${\bomega}$. Therefore,
\[
KL((P_{\bomega}^{obs})^{\otimes n},(P_{0}^{obs})^{\otimes n})=nKL(P_{\bomega}^{obs},P_{0}^{obs})=0.
\]
%%%%%%%%%%%%%%%%%%%%%
%%%%%%%%%%%%%%%%%%%%%
\section{Proof of Theorems~\ref{theo:GeneralEsp} and \ref{theo:Generalproba}}
\label{proof:theo2-3}
%%%%%%%%%%%%%%%%%%%%%
\subsection{Preliminary result}
The proof of Theorems~\ref{theo:GeneralEsp} and \ref{theo:Generalproba} is based on Bosq inequalities stated in the following theorem.
\begin{theorem}[\citealt{Bosq}]
	\label{preBosq}
	Let $\Gamma$ and $\widehat\Gamma$ be two linear compact operators on $\mathbb L^2([0,1])$.	We denote by 
	\begin{equation*}%\label{eq:spectral}
		\Gamma=\sum_{d=1}^\infty\mu_d^*\eta_d^*\otimes\eta_d^*\quad\text{ and }\quad\widehat\Gamma=\sum_{d=1}^\infty\widehat\mu_d\widehat\eta_d\otimes\widehat\eta_d
	\end{equation*}
	their spectral decomposition with the eigenvalues $(\mu_d^*)_{d\geq 1}$ and $(\widehat\mu_d)_{d\geq 1}$ sorted in decreasing order. Then 
	\begin{equation}\label{preBosq inequality 1}
		|\widehat\mu_d-\mu_d^*|\leq \vvvert\widehat\Gamma-\Gamma\vvvert.
	\end{equation}
%	where $\vvvert\cdot\vvvert$ is the operator norm associated to $\|\cdot\|$ defined by $\vvvert T\vvvert=\sup_{f\in\mathbb L^2([0,1]), \|f\|=1}\|Tf\|$ for all continuous operator $T\in\mathcal L(\mathbb L^2([0,1]))$.
	Suppose moreover that, for $d\geq 1$, the eigenspace associated to the eigenfunction $\eta_d^*$ is one-dimensional and denote, to avoid sign confusion, $\eta_{\pm,d}^*=\text{sign}(\langle\widehat{\eta}_{\phi,d},\eta_d^*\rangle)\times\eta_d^*.$ Then, we have
	\begin{equation}\label{preBosq inequality 2}
		\|\widehat\eta_d-\eta_{\pm,d}^*\|\leq b_d^{1/2}\vvvert\widehat\Gamma-\Gamma\vvvert,
	\end{equation}
	where 
	\[
	b_1=8(\mu_1^*-\mu_2^*)^{-2}
	\]
	and for any $d\in\{2,\ldots,\mathcal D\}$
	\[
	b_d=8/\min(\mu_d^*-\mu_{d+1}^*,\mu_{d-1}^*-\mu_d^*)^2.
	\]
\end{theorem}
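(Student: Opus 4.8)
The plan is to prove the two inequalities separately, both through classical perturbation theory for compact self-adjoint operators; since $\Gamma$ and $\widehat\Gamma$ are covariance-type operators they are self-adjoint and admit orthonormal eigenbases $\{\eta_d^*\}$ and $\{\widehat\eta_d\}$, which is what the min--max principle and the orthogonal eigenexpansion used below require. Throughout I would write $u:=\|\widehat\Gamma-\Gamma\|_\infty$ and set $a_d:=\mu_1^*-\mu_2^*$ for $d=1$ and $a_d:=\min(\mu_d^*-\mu_{d+1}^*,\mu_{d-1}^*-\mu_d^*)$ for $d\geq 2$, so that $b_d=8a_d^{-2}$; these gaps are positive because the $\mu_d^*$ are distinct. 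A preliminary remark I would record first is that, the eigenvalues being sorted and $a_d$ being the smaller adjacent gap, one has $|\mu_j^*-\mu_d^*|\geq a_d$ for every $j\neq d$.

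For the eigenvalue inequality \eqref{preBosq inequality 1} I would invoke the Courant--Fischer characterization $\mu_d^*=\max_{\dim V=d}\min_{f\in V,\,\|f\|=1}\langle\Gamma f,f\rangle$, and the analogous formula for $\widehat\mu_d$ with $\widehat\Gamma$. Since $|\langle(\widehat\Gamma-\Gamma)f,f\rangle|\leq u$ for every unit vector $f$, passing from $\Gamma$ to $\widehat\Gamma$ shifts the inner objective in the variational problem by at most $u$ uniformly, which gives $\widehat\mu_d-\mu_d^*\leq u$ and $\mu_d^*-\widehat\mu_d\leq u$, i.e.\ Weyl's inequality $|\widehat\mu_d-\mu_d^*|\leq u$. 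This part is routine.

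For the eigenfunction inequality \eqref{preBosq inequality 2}, fix the sign so that $c:=\langle\widehat\eta_d,\eta_{\pm,d}^*\rangle=|\langle\widehat\eta_d,\eta_d^*\rangle|\in[0,1]$ and let $\Pi_d^\perp:=I-\eta_d^*\otimes\eta_d^*$. The first step is the elementary reduction $\|\widehat\eta_d-\eta_{\pm,d}^*\|^2=2(1-c)\leq 2(1-c^2)=2\|\Pi_d^\perp\widehat\eta_d\|^2$, so it suffices to control the off-diagonal mass $\|\Pi_d^\perp\widehat\eta_d\|$. The second step is the key identity: because $\Gamma$ commutes with $\Pi_d^\perp$ and $\widehat\Gamma\widehat\eta_d=\widehat\mu_d\widehat\eta_d$, one obtains $(\Gamma-\widehat\mu_d I)\Pi_d^\perp\widehat\eta_d=\Pi_d^\perp(\Gamma-\widehat\Gamma)\widehat\eta_d$. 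On $\mathrm{range}(\Pi_d^\perp)$ the operator $\Gamma-\widehat\mu_d I$ has eigenvalues $\mu_j^*-\widehat\mu_d$ for $j\neq d$, each of modulus at least $a_d-u$ by the preliminary remark together with \eqref{preBosq inequality 1}; hence, as long as $u<a_d$, it is invertible there with inverse of norm at most $(a_d-u)^{-1}$, and I get $\|\Pi_d^\perp\widehat\eta_d\|\leq u/(a_d-u)$ (using $\|\Pi_d^\perp\|\leq 1$ and $\|\widehat\eta_d\|=1$). A final case split yields the clean constant: if $u\leq a_d/2$ then $a_d-u\geq a_d/2$, so $\|\Pi_d^\perp\widehat\eta_d\|\leq 2u/a_d$ and the reduction gives $\|\widehat\eta_d-\eta_{\pm,d}^*\|^2\leq 8u^2/a_d^2=b_d u^2$; if instead $u>a_d/2$, the claim is trivial, since the left side is at most $2$ (difference of two unit vectors) while $b_d^{1/2}u=2\sqrt2\,u/a_d>\sqrt2\geq\|\widehat\eta_d-\eta_{\pm,d}^*\|$.

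The step I expect to be the main obstacle is exactly the control of the perturbed gap in the resolvent argument: $\Gamma-\widehat\mu_d I$ need not be invertible on $\mathrm{range}(\Pi_d^\perp)$ unless $\widehat\mu_d$ stays strictly between the neighbouring eigenvalues $\mu_{d\pm1}^*$, which fails for large perturbations. The case split above is precisely what circumvents this and, at the same time, delivers the sharp constant $b_d=8a_d^{-2}$ — the factor $2$ coming from the projector-to-vector reduction and the factor $4$ from the bound $(a_d-u)^{-2}\leq 4a_d^{-2}$ valid in the small-perturbation regime. I would emphasise at the outset that the whole argument is carried out in the self-adjoint setting, since both the min--max principle and the orthogonal eigenexpansion underpinning the two steps depend on it.
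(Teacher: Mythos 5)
Your proof is correct, but you should know that the paper itself contains no proof of this theorem: it is imported wholesale from Bosq's monograph, with \eqref{preBosq inequality 1} attributed to his Lemma~4.2 (p.~103) and \eqref{preBosq inequality 2} to his Lemma~4.3 (p.~104). So your write-up is a genuine reconstruction rather than a variant of anything in the paper, and it is essentially the classical argument behind Bosq's lemmas: Courant--Fischer/Weyl for the eigenvalues, and, for the eigenfunctions, the reduction $\|\widehat\eta_d-\eta_{\pm,d}^*\|^2=2(1-c)\leq 2(1-c^2)=2\|\Pi_d^\perp\widehat\eta_d\|^2$, the commutation identity $(\Gamma-\widehat\mu_d I)\Pi_d^\perp\widehat\eta_d=\Pi_d^\perp(\Gamma-\widehat\Gamma)\widehat\eta_d$, a resolvent bound on $\mathrm{range}(\Pi_d^\perp)$, and the case split on whether $u\leq a_d/2$, which is exactly what produces the constant $b_d=8a_d^{-2}$ (factor $2$ from the reduction, factor $4$ from the perturbed gap). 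Bosq runs the same estimate coordinatewise, expanding $\widehat\eta_d$ in the basis $(\eta_j^*)_{j\geq 1}$ and bounding $\sum_{j\neq d}\langle\widehat\eta_d,\eta_j^*\rangle^2$; your operator-theoretic packaging is equivalent and, if anything, cleaner. What your route buys is a self-contained proof where the paper offers only a citation.

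Two points deserve to be made explicit to make the argument airtight. First, as you rightly flag, the hypothesis ``compact operators'' in the statement is too weak: self-adjointness is needed for the orthonormal eigenbasis, for the min--max principle, and for $\Gamma$ to commute with $\Pi_d^\perp$; this is harmless since the paper only applies the result to covariance-type (hence nonnegative self-adjoint) operators. Second, the spectrum of $\Gamma-\widehat\mu_d I$ restricted to $\mathrm{range}(\Pi_d^\perp)$ is the \emph{closure} of $\{\mu_j^*-\widehat\mu_d:\ j\neq d\}$, and it contains $-\widehat\mu_d$ as an extra eigenvalue whenever $\Gamma$ has a nontrivial kernel not enumerated among the $\eta_j^*$; a bound on the listed eigenvalues alone does not immediately give invertibility. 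The fix is one line and uses nonnegativity of the eigenvalues: since $\mu_{d+1}^*\geq 0$, one has $\mu_d^*\geq\mu_d^*-\mu_{d+1}^*\geq a_d$ (equivalently, let $j\to\infty$ in your preliminary remark), hence $\widehat\mu_d\geq\mu_d^*-u\geq a_d-u$ by \eqref{preBosq inequality 1}, so every point of the restricted spectrum---including the accumulation point $-\widehat\mu_d$ and any kernel contribution---has modulus at least $a_d-u$, and the resolvent bound $(a_d-u)^{-1}$ stands. With that addition your proof is complete.
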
 
The proof of Theorem~\ref{preBosq} comes directly from \citet[Lemma 4.2, p.~103]{Bosq} for the upper bound~\eqref{preBosq inequality 1} on the eigenvalues and \citet[Lemma 4.3, p.104]{Bosq} for the upper bound~\eqref{preBosq inequality 2} on the eigenfunctions. 

We remark that $\Gamma$, $\widehat\Gamma$ and $\widehat\Gamma_\phi$ are integral operators with kernel respectively $K$
$$
\widehat K(s,t)=\frac1n\sum_{i=1}^n Z_i(t)Z_i(s), \quad  (s,t) \in [0,1]^2.
$$
and 
\begin{equation*}%\label{eq:Kestim}
\widehat{K}_\phi(s,t)=\frac1n\sum_{i=1}^n \widetilde Y_i(t)\widetilde Y_i(s),\quad  (s,t) \in [0,1]^2.
\end{equation*}

We use the previous result to establish the following proposition.
\begin{proposition}
	\label{prop-inter}
	Setting $K_\phi=E[\widehat{K}_\phi]$, we have
	\begin{equation}\label{ubinter}
		\|\widehat{\eta}_{\phi,d}-\eta_{\pm,d}^*\|^2\leq 5b_d \left[\vvvert\widehat{\Gamma}_\phi-\Gamma_\phi\vvvert^2+\vvvert\Pi_D\Gamma\Pi_D-\Gamma\vvvert^2+\frac{\sigma^4}{p^2}+A_p^{(K)}(\phi,D)+A_p^{(\sigma)}(\phi,D)\right].
	\end{equation}
\end{proposition}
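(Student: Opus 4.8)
The plan is to combine Bosq's eigenfunction inequality~\eqref{preBosq inequality 2} with a triangle-inequality decomposition of $\widehat\Gamma_\phi-\Gamma$ into five operators on $S_D$, each matching exactly one term of the target bound. Applying~\eqref{preBosq inequality 2} from Theorem~\ref{preBosq} with $\widehat\Gamma=\widehat\Gamma_\phi$ immediately yields $\|\widehat\eta_{\phi,d}-\eta_{\pm,d}^*\|^2\leq b_d\|\widehat\Gamma_\phi-\Gamma\|_\infty^2$, so everything reduces to bounding the operator norm $\|\widehat\Gamma_\phi-\Gamma\|_\infty$.

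The key preliminary step is to identify $\Gamma_\phi$, the operator with kernel $K_\phi=E[\widehat K_\phi]$. Writing $\widetilde y_{1,\lambda}=\widetilde z_{1,\lambda}+\widetilde\varepsilon_{1,\lambda}$ and using that the noise is centered and independent of $Z$ (so the cross terms vanish), one computes that $\Gamma_\phi$ is supported on $S_D$ with matrix entries, in the orthonormal basis $(\phi_\lambda)_{\lambda\in\Lambda_D}$,
\[
E(\widetilde y_{1,\lambda}\widetilde y_{1,\lambda'})=\frac1{p^2}\sum_{h,h'=0}^{p-1}K(t_h,t_{h'})\phi_\lambda(t_h)\phi_{\lambda'}(t_{h'})+\frac{\sigma^2}{p^2}\sum_{h=0}^{p-1}\phi_\lambda(t_h)\phi_{\lambda'}(t_h),
\]
whereas $\Pi_D\Gamma\Pi_D$ has entries $\langle\Gamma\phi_{\lambda'},\phi_\lambda\rangle=\iint K(s,t)\phi_\lambda(s)\phi_{\lambda'}(t)\,ds\,dt$.

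Next I would write the telescoping decomposition
\[
\widehat\Gamma_\phi-\Gamma=(\widehat\Gamma_\phi-\Gamma_\phi)+(\Gamma_\phi-\Pi_D\Gamma\Pi_D)+(\Pi_D\Gamma\Pi_D-\Gamma),
\]
and split the middle operator, by comparing the two matrices above, into three operators on $S_D$: a signal-discretization operator $B_K$ whose $(\lambda,\lambda')$ entry is the bracket defining $A_p^{(K)}(\phi,D)$; a centered noise-discretization operator $B_\sigma$ with entries $\tfrac{\sigma^2}{p}\big(\tfrac1p\sum_h\phi_\lambda(t_h)\phi_{\lambda'}(t_h)-\1_{\{\lambda=\lambda'\}}\big)$; and the pure diagonal noise operator $\tfrac{\sigma^2}{p}\Pi_D$. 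Bounding the operator norm by the Frobenius norm gives $\|B_K\|_\infty^2\leq A_p^{(K)}(\phi,D)$ and $\|B_\sigma\|_\infty^2\leq A_p^{(\sigma)}(\phi,D)$, while the scalar operator satisfies exactly $\|\tfrac{\sigma^2}{p}\Pi_D\|_\infty=\sigma^2/p$. Applying the triangle inequality to the resulting five-term sum and then $\big(\sum_{i=1}^5 a_i\big)^2\leq 5\sum_{i=1}^5 a_i^2$, and finally multiplying by $b_d$, produces~\eqref{ubinter}.

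The main obstacle is the isolation of the diagonal noise term $\tfrac{\sigma^2}{p}\Pi_D$: one must resist absorbing it into the Frobenius bound alongside $B_\sigma$, since its Frobenius norm equals $\tfrac{\sigma^2}{p}\sqrt D$ and would yield a dimension-dependent $\sigma^4 D/p^2$ contribution, whereas its operator norm is only $\sigma^2/p$, giving the dimension-free $\sigma^4/p^2$ that the statement requires. The remaining work is bookkeeping: checking that all five operators vanish off $S_D$ (so that operator norms coincide with the spectral norms of the finite matrices, which are dominated by the Frobenius norms identified with the $A_p$ quantities), and that the signal/noise cross terms indeed drop out in the computation of $\Gamma_\phi$.
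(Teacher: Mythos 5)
Your proposal is correct and follows essentially the same route as the paper: Bosq's inequality applied to $\|\widehat\Gamma_\phi-\Gamma\|_\infty$, the same five-operator decomposition (stochastic term, projection bias, $\tfrac{\sigma^2}{p}\Pi_D$, and the two discretization-error operators with Frobenius norms $A_p^{(K)}$ and $A_p^{(\sigma)}$), and the bound $\bigl(\sum_{i=1}^5 a_i\bigr)^2\leq 5\sum_{i=1}^5 a_i^2$; the paper merely performs the identification of $\Gamma_\phi$ at the kernel level (decomposing $K_\phi$ into $\Pi_{S_D^2}K$, a diagonal noise kernel, and remainders $R^{(K)},R^{(\sigma)}$) rather than via matrix entries as you do. Your observation that $\tfrac{\sigma^2}{p}\Pi_D$ must be handled in operator norm (via $\|\Pi_D\|_\infty=1$) rather than Frobenius norm is precisely what the paper does.
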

\begin{proof}[of Proposition~\ref{prop-inter}]
	In the sequel, we denote $\Gamma_\phi=E[\widehat{\Gamma}_\phi]$ and remark that $\Gamma_\phi$ is an integral operator with kernel $K_\phi$. 
	\begin{eqnarray}
		K_\phi(s,t)&=&\sum_{\lambda,\lambda'\in\Lambda_D}\frac1{p^2}\sum_{h,h'=0}^{p-1}K(t_h,t_{h'})\phi_\lambda(t_h)\phi_{\lambda'}(t_{h'})\phi_\lambda(s)\phi_{\lambda'}(t)\nonumber\\
		&&+\frac{\sigma^2}{p^2}\sum_{\lambda,\lambda'\in\Lambda_D}\sum_{h=0}^{p-1}\phi_{\lambda}(t_h)\phi_{\lambda'}(t_{h})\phi_{\lambda}(s)\phi_{\lambda'}(t)\nonumber\\
		&=&\Pi_{S_D^2}K(s,t)+\frac{\sigma^2}{p}\sum_{\lambda\in\Lambda_D}\phi_{\lambda}(s)\phi_{\lambda}(t)+R^{(K)}(s,t)+R^{(\sigma)}(s,t),\label{eq:decompKtilde}
	\end{eqnarray}
	where $\Pi_{S_D^2}$ is the orthogonal projection onto $S_D^2=\text{span}\{(s,t)\mapsto \phi_\lambda(s)\phi_{\lambda'}(t), \lambda,\lambda'\in\Lambda_D\}$,
	\begin{eqnarray*}
		R^{(K)}(s,t)&=&\sum_{\lambda,\lambda'\in\Lambda_D}\frac1{p^2}\sum_{h,h'=0}^{p-1}K(t_h,t_{h'})\phi_\lambda(t_h)\phi_{\lambda'}(t_{h'})\phi_\lambda(s)\phi_{\lambda'}(t)-\Pi_{S_D^2}K(s,t)\\
		&=&\sum_{\lambda,\lambda'\in\Lambda_D}\left(\frac1{p^2}\sum_{h,h'=0}^{p-1}K(t_h,t_{h'})\phi_\lambda(t_h)\phi_{\lambda'}(t_{h'})-\int_0^1\int_0^1K(s,t)\phi_{\lambda}(s)\phi_{\lambda'}(t)dsdt\right)\phi_\lambda(s)\phi_{\lambda'}(t)
	\end{eqnarray*}
	and
	\begin{eqnarray*}
		R^{(\sigma)}(s,t)
		&=&\frac{\sigma^2}{p^2}\sum_{\lambda,\lambda'\in\Lambda_D}\sum_{h=0}^{p-1}\phi_{\lambda}(t_h)\phi_{\lambda'}(t_{h})\phi_{\lambda}(s)\phi_{\lambda'}(t)-\frac{\sigma^2}{p}\sum_{\lambda\in\Lambda_D}\phi_{\lambda}(s)\phi_{\lambda}(t)\\
		&=&\frac{\sigma^2}{p}\sum_{\lambda,\lambda'\in\Lambda_D}\left(\frac1p\sum_{h=0}^{p-1}\phi_{\lambda}(t_h)\phi_{\lambda'}(t_{h})-\1_{\{\lambda=\lambda'\}}\right)\phi_{\lambda}(s)\phi_{\lambda'}(t).
	\end{eqnarray*}
	Then, from the decomposition of the kernel $K_\phi$ given in Equation~\eqref{eq:decompKtilde}, we have for any fonction $f$ and any $t\in[0,1]$, 
	\begin{align*}
		\Gamma_\phi(f)(t)&=\int_0^1 K_\phi(s,t)f(s)ds\\
		&=\int_0^1\Pi_{S_D^2}K(s,t)f(s)ds+\frac{\sigma^2}{p}\sum_{\lambda\in\Lambda_D}\int_0^1\phi_{\lambda}(s)f(s)ds\,\phi_{\lambda}(t)+T^{(K)}(f)(t)+T^{(\sigma)}(f)(t)\\
		&=\int_0^1\Pi_{S_D^2}K(s,t)f(s)ds+\frac{\sigma^2}{p}\Pi_D(f)(t)+T^{(K)}(f)(t)+T^{(\sigma)}(f)(t),
	\end{align*}
	where $\Pi_D$ is the orthogonal projection onto $S_D={\rm span}\{\phi_\lambda,\lambda\in\Lambda_D\}$ and $T^{(K)}$ (resp. $T^{(\sigma)}$) is the integral operator associated to the kernel $R^{(K)}$ (resp. $R^{(\sigma)}$):
	\[T^{(K)}(t):=\int_0^1 R^{(K)}(s,t)f(s)ds,\quad T^{(\sigma)}(f)(t):=\int_0^1 R^{(\sigma)}(s,t)f(s)ds.\]
	Now, 
	\begin{align}\label{pi}
		\int_0^1\Pi_{S_D^2}K(s,t)f(s)ds&=\sum_{\lambda,\lambda'\in\Lambda_D}\int_0^1\int_0^1\int_0^1K(u,v)\phi_{\lambda}(u)\phi_{\lambda'}(v)dudv\,\phi_{\lambda}(s)\phi_{\lambda'}(t)f(s)ds\nonumber\\
		&=\sum_{\lambda,\lambda'\in\Lambda_D}\langle \phi_{\lambda},f\rangle\int_0^1\int_0^1 K(u,v)\phi_{\lambda}(u)\phi_{\lambda'}(v)dudv\, \phi_{\lambda'}(t)\nonumber\\
		&=\sum_{\lambda,\lambda'\in\Lambda_D}\langle \phi_{\lambda},f\rangle \langle \Gamma(\phi_\lambda),\phi_{\lambda'}\rangle \, \phi_{\lambda'}(t)\nonumber\\
		&=\sum_{\lambda'\in\Lambda_D}\langle\Gamma(\sum_{\lambda\in\Lambda_D}\langle \phi_{\lambda},f\rangle\phi_{\lambda}),\phi_{\lambda'}\rangle\, \phi_{\lambda'}(t)\nonumber\\
		&=\Pi_D(\Gamma(\Pi_D(f)))(t).
	\end{align}
	Hence, we obtain:
	\[
	\Gamma_\phi=\Pi_D\Gamma\Pi_D+\frac{\sigma^2}p\Pi_D+T^{(K)}+T^{(\sigma)}.
	\]
	Now, 
	%\begin{eqnarray}
	%|\widehat{\widetilde\mu}_\ell-\mu_\ell^*|&\leq&\|\widehat{\widetilde\Gamma}-\Gamma\|_\infty\nonumber\\
	%%&\leq &\|\widehat{\widetilde\Gamma}-\widetilde\Gamma\|_\infty+\|\tilde\Gamma'-\Gamma\|_\infty+\frac{\sigma^2}p\mathbf 1_{\ell\leq D}\\
	%&\leq &\|\widehat{\widetilde\Gamma}-\widetilde\Gamma\|_\infty+\|\Pi_D\Gamma\Pi_D-\Gamma\|_\infty+\frac{\sigma^2}p+\|T^{(K)}\|_\infty+\|T^{(\sigma)}\|_\infty\label{eq:upper1vp}
	%\end{eqnarray}
	%and, similarly, for the eigenfunctions, 
	since the eigenvalues $(\mu_d^*)_{d\geq 1}$ are all distincts, the eigenspace associated to the eigenvalue $\mu_d^*$ is one-dimensional and we can apply Theorem~\ref{preBosq} to the operators $\Gamma$ and $\widehat{\Gamma}_\phi$, which yields 
	\begin{eqnarray}\label{eq:upper1fp}
		\hspace{-0.5cm}\|\widehat\eta_{\phi,d}-\eta_{\pm,d}^*\|&\leq& b_d^{1/2}\vvvert\widehat{\Gamma}_\phi-\Gamma\vvvert
		\nonumber\\
		&\leq &b_d^{1/2}\left(\vvvert\widehat{\Gamma}_\phi-\Gamma_\phi\vvvert+\vvvert\Pi_D\Gamma\Pi_D-\Gamma\vvvert\right.\\
		&&\hspace{3cm}	\left.	+\frac{\sigma^2}p+\vvvert T^{(K)}\vvvert+\vvvert T^{(\sigma)}\vvvert\right).\nonumber
	\end{eqnarray}
	In the previous inequality, we have used that $\vvvert\Pi_D\vvvert=1$. We now control each term of the previous inequality. For this purpose, introducing $\|\cdot\|_{HS}$, the Hilbert-Schmidt norm of an operator defined by $\|T\|_{HS}^2=\sum_{\lambda\in\Lambda}\|Te_\lambda\|^2$ where $(e_\lambda)_{\lambda\in\Lambda}$ is an orthonormal basis of $\mathbb L^2$ (recall that the Hilbert-Schmidt norm is independent of the choice of the basis), we have, for all operator $T:\mathbb L^2\mapsto\L^2$, 
	$\vvvert T\vvvert\leq \|T\|_{HS}$ since
	\[
	\vvvert T\vvvert^2=\sup_{f\in\mathbb L^2,f\neq 0}\frac{\|Tf\|^2}{\|f\|^2}
	\]
	and, by Cauchy-Schwarz's Inequality,
	\begin{eqnarray*}
		\|Tf\|^2&=&\sum_{\lambda\in\Lambda}\langle Tf,e_\lambda\rangle^2=\sum_{\lambda\in\Lambda}\left(\sum_{\lambda'\in\Lambda}\langle f,e_{\lambda'}\rangle\langle Te_{\lambda'},e_\lambda\rangle\right)^2\\
		&\leq& \sum_{\lambda\in\Lambda}\left(\sum_{\lambda'\in\Lambda}\langle f,e_{\lambda'}\rangle^2\sum_{\lambda'\in\Lambda}\langle Te_{\lambda'},e_\lambda\rangle^2\right)=\|f\|^2\sum_{\lambda'\in\Lambda}\|Te_{\lambda'}\|^2=\|f\|^2\|T\|_{HS}^2.
	\end{eqnarray*}
	Moreover, we also remark that if $T$ is a kernel operator associated to a kernel $R$, 
	\begin{align*}
		\|T\|^2_{HS}&=\sum_{\lambda\in\Lambda}\|Te_\lambda\|^2=\sum_{\lambda\in\Lambda}\left\|\int_0^1R(s,\cdot)e_\lambda(s)ds\right\|^2\\
		&=\sum_{\lambda\in\Lambda}\int_0^1\left(\int_0^1R(s,t)e_\lambda(s)ds\right)^2dt=\int_0^1\int_0^1R^2(s,t)dsdt=\|R\|^2.
	\end{align*}
	In addition, if the kernel $R\in S_D^2$, i.e. if there exists a matrix $G=(G_{\lambda,\lambda'})_{\lambda,\lambda'\in\Lambda_D}$ such that 
	\[
	R(s,t)=\sum_{\lambda,\lambda'\in \Lambda_D}G_{\lambda,\lambda'}\phi_\lambda(s)\phi_{\lambda'}(t), 
	\] we have $\|R\|_{L^2}=\left\|G\right\|_F,$
	where, for a matrix $G$, 
	\[\|G\|_F=\sqrt{Tr(G^TG)}=\Big(\sum_{\lambda,\lambda'\in \Lambda_D}G_{\lambda,\lambda'}^2\Big)^{1/2}\] 
	is the Frobenius norm of the matrix $G$. The fourth and fifth terms of Equation~\eqref{eq:upper1fp} are then bounded by the squared Frobenius norm of the associated matrices and we obtain 
	\[
	\|\widehat{\eta}_{\phi,d}-\eta_{\pm,d}^*\|^2\leq 5b_d \left[\vvvert\widehat{\Gamma}_\phi-\Gamma_\phi\vvvert^2+\vvvert\Pi_D\Gamma\Pi_D-\Gamma\vvvert^2+\frac{\sigma^4}{p^2}+A_p^{(K)}(\phi,D)+A_p^{(\sigma)}(\phi,D)\right].
	\]
	Proposition~\ref{prop-inter} is proved.
\end{proof}

To end the proof of Theorems~\ref{theo:GeneralEsp} and \ref{theo:Generalproba}, it remains to deal with  the stochastic term $\vvvert\widehat{\Gamma}_\phi-\Gamma_\phi\vvvert^2$, still bounded by using the Frobenius norm:
\[
\vvvert\widehat{\Gamma}_\phi-\Gamma_\phi\vvvert^2
\leq \|{\widehat G}_\phi- G_\phi\|_F^2,
\]
where
\[
{\widehat G}_\phi:=\left(\frac1n\sum_{i=1}^n\widetilde Y_{i,\lambda}\widetilde Y_{i,\lambda'}\right)_{\lambda,\lambda'\in\Lambda_D}
\]
and $G_\phi=E[\widehat{G}_\phi]$. The upper bound of $E[\|{\widehat G}_\phi- G_\phi\|_F^2]$ gives Theorem~\ref{theo:GeneralEsp}, whereas Theorem~\ref{theo:Generalproba} is deduced from the control in probability of  $\|{\widehat G}_\phi- G_\phi\|_F$ provided by Proposition~\ref{concentration} below.
%%%%%%%%%%%%%%%%%%%%%
\subsection{End of the proof of Theorem~\ref{theo:GeneralEsp}}
\begin{lemma}
	\label{Moment}
	Under Assumption~\ref{Ass:4}, we have:
	\[
	E[\|{\widehat G}_\phi- G_\phi\|_F^2]\leq  \frac{\max(C_1+3;6)}{n}\left(\sum_{\lambda\in\Lambda_D}\Big[\sigma_\lambda^2+s_\lambda^2\Big]\right)^2.
	\]
\end{lemma}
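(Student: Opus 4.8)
The plan is to expand the Frobenius norm entrywise and exploit the i.i.d.\ structure in the sample index $i$. Writing $W_i^{\lambda,\lambda'}=\tilde y_{i,\lambda}\tilde y_{i,\lambda'}$, the $(\lambda,\lambda')$ entry of $\widehat G_\phi-G_\phi$ is the centered empirical mean $\frac1n\sum_{i=1}^n\bigl(W_i^{\lambda,\lambda'}-E[W_1^{\lambda,\lambda'}]\bigr)$ of independent terms, so its second moment equals $\frac1n\var(W_1^{\lambda,\lambda'})$ (the cross terms vanish by independence and centering) and is bounded by $\frac1n E[(\tilde y_{1,\lambda}\tilde y_{1,\lambda'})^2]$. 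Summing over $\lambda,\lambda'\in\Lambda_D$ gives
\[
E[\|\widehat G_\phi-G_\phi\|_F^2]\le\frac1n\sum_{\lambda,\lambda'\in\Lambda_D}E[(\tilde y_{1,\lambda}\tilde y_{1,\lambda'})^2].
\]

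Next I would decouple the two indices with the Cauchy--Schwarz inequality, $E[(\tilde y_{1,\lambda}\tilde y_{1,\lambda'})^2]\le (E[\tilde y_{1,\lambda}^4])^{1/2}(E[\tilde y_{1,\lambda'}^4])^{1/2}$, so that the double sum factorizes and
\[
E[\|\widehat G_\phi-G_\phi\|_F^2]\le\frac1n\Big(\sum_{\lambda\in\Lambda_D}(E[\tilde y_{1,\lambda}^4])^{1/2}\Big)^2.
\]
It then suffices to prove the pointwise fourth-moment bound $E[\tilde y_{1,\lambda}^4]\le\max(C_1+3;6)\,(\sigma_\lambda^2+s_\lambda^2)^2$, since taking square roots and summing yields exactly the claimed constant.

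For the fourth moment I would use the decomposition $\tilde y_{1,\lambda}=\tilde z_{1,\lambda}+\tilde\varepsilon_{1,\lambda}$ into two independent centered variables (the signal part is independent of the noise, and both have mean zero since $Z$ is centered and $\tilde\varepsilon_{1,\lambda}$ is centered Gaussian). Expanding $(\tilde z_{1,\lambda}+\tilde\varepsilon_{1,\lambda})^4$ and noting that the odd cross moments $E[\tilde z_{1,\lambda}^3\tilde\varepsilon_{1,\lambda}]$ and $E[\tilde z_{1,\lambda}\tilde\varepsilon_{1,\lambda}^3]$ vanish by independence and centering leaves $E[\tilde y_{1,\lambda}^4]=E[\tilde z_{1,\lambda}^4]+6\,E[\tilde z_{1,\lambda}^2]\,E[\tilde\varepsilon_{1,\lambda}^2]+E[\tilde\varepsilon_{1,\lambda}^4]$. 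Writing $\tilde z_{1,\lambda}=p^{-1}v_\lambda^T{\bf Z}$ with $v_\lambda=(\phi_\lambda(t_0),\ldots,\phi_\lambda(t_{p-1}))^T$, Assumption~\ref{Ass:4} yields $E[\tilde z_{1,\lambda}^4]\le C_1(E[\tilde z_{1,\lambda}^2])^2=C_1 s_\lambda^4$, while $\tilde\varepsilon_{1,\lambda}\sim\mathcal N(0,\sigma_\lambda^2)$ gives $E[\tilde\varepsilon_{1,\lambda}^4]=3\sigma_\lambda^4$. Hence $E[\tilde y_{1,\lambda}^4]\le C_1 s_\lambda^4+6 s_\lambda^2\sigma_\lambda^2+3\sigma_\lambda^4$.

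The last---and only slightly delicate---step is the elementary inequality $C_1 s_\lambda^4+6 s_\lambda^2\sigma_\lambda^2+3\sigma_\lambda^4\le M(s_\lambda^2+\sigma_\lambda^2)^2$ with $M=\max(C_1+3;6)$. Setting $x=s_\lambda^2\ge0$ and $y=\sigma_\lambda^2\ge0$, this is equivalent to $(M-C_1)x^2+(2M-6)xy+(M-3)y^2\ge0$; the constraints $M\ge C_1+3$ and $M\ge6$ force all three coefficients to be nonnegative, so every monomial is nonnegative and the inequality holds. I expect this constant-tracking to be the main bookkeeping obstacle, whereas the genuinely probabilistic content is confined to the variance-versus-second-moment reduction of the first paragraph and the two moment computations above.
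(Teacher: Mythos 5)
Your proposal is correct and follows essentially the same route as the paper: entrywise expansion of the Frobenius norm combined with the i.i.d.\ variance bound, Cauchy--Schwarz to factorize the double sum, the independence/centering expansion $E[\tilde y_{1,\lambda}^4]=E[\tilde z_{1,\lambda}^4]+6E[\tilde z_{1,\lambda}^2]E[\tilde\varepsilon_{1,\lambda}^2]+E[\tilde\varepsilon_{1,\lambda}^4]$, and then Assumption~\ref{Ass:4} plus the Gaussian fourth moment. The only (immaterial) difference is the final bookkeeping: you verify the quadratic-form inequality $C_1x^2+6xy+3y^2\leq M(x+y)^2$ directly via nonnegativity of its coefficients, whereas the paper first applies $6s_\lambda^2\sigma_\lambda^2\leq 3s_\lambda^4+3\sigma_\lambda^4$ to reach $(C_1+3)s_\lambda^4+6\sigma_\lambda^4$ and then pulls out $\max(C_1+3;6)$.
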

\begin{proof}[of Lemma~\ref{Moment}]
	We have
	\begin{align*}
		E[\|{\widehat G}_\phi- G_\phi\|_F^2]&=\sum_{\lambda,\lambda'\in\Lambda_D}E\left[\left(\frac1n\sum_{i=1}^n\Big[\widetilde Y_{i,\lambda}\widetilde Y_{i,\lambda'}-E[\widetilde Y_{i,\lambda}\widetilde Y_{i,\lambda'}]\Big]\right)^2\right]\\
		&=\sum_{\lambda,\lambda'\in\Lambda_D}\text{Var}\left(\frac1n\sum_{i=1}^n\widetilde Y_{i,\lambda}\widetilde Y_{i,\lambda'}\right)\\
		&\leq\frac{1}{n}\sum_{\lambda,\lambda'\in\Lambda_D}E[\widetilde Y_{1,\lambda}^2\widetilde Y_{1,\lambda'}^2]\\
		&\leq\frac{1}{n}\left(\sum_{\lambda\in\Lambda_D}(E[\widetilde Y_{1,\lambda}^4])^{1/2}\right)^2.
	\end{align*}
	Now, since $\tilde\varepsilon_{1\lambda}\sim{\mathcal N}(0,\sigma_\lambda^2)$, and $\tilde z_{1\lambda}=\frac1p\sum_{h=0}^{p-1}Z_1(t_h)\phi_\lambda(t_h)$, we have
	\begin{align*}
		E[\widetilde Y_{1,\lambda}^4]&=E[(\tilde z_{1,\lambda}+\tilde\varepsilon_{1,\lambda})^4]\\
		&=E[\tilde z_{1,\lambda}^4]+6E[\tilde z_{1,\lambda}^2]E[\tilde\varepsilon_{1,\lambda}^2]+ E[\tilde\varepsilon_{1,\lambda}^4]\\
		&\leq C_1(E[\tilde z_{1,\lambda}^2])^2+6E[\tilde z_{1,\lambda}^2]E[\tilde\varepsilon_{1,\lambda}^2]+ 3\sigma_\lambda^4\\
		&\leq (C_1+3)s_\lambda^4+ 6\sigma_\lambda^4
	\end{align*}
	and 
	\begin{align*}
		E[\|{\widehat G}_\phi- G_\phi\|_F^2]&\leq \frac{1}{n}\left(\sum_{\lambda\in\Lambda_D}\Big((C_1+3)s_\lambda^4+ 6\sigma_\lambda^4\Big)^{1/2}\right)^2\\
		&\leq  \frac{\max(C_1+3;6)}{n}\left(\sum_{\lambda\in\Lambda_D}\Big[\sigma_\lambda^2+s_\lambda^2\Big]\right)^2.
	\end{align*}
	This ends the proof of Lemma~\ref{Moment}.
\end{proof}
Combining the upper bound of the previous lemma with \eqref{ubinter} provides the stated result in Theorem~\ref{theo:GeneralEsp}  .
%%%%%%%%%%%%%%%%%%%%%
\subsection{End of the proof of Theorem~\ref{theo:Generalproba}}
To complete the  proof of Theorem~\ref{theo:Generalproba}, we need some technical lemmas. Before stating them, we recall that for all $i=1,\ldots,n$, we have set
\[
\widetilde Y_{i,\lambda}=\frac{1}{p}\sum_{h=0}^{p-1}Y_i(t_h)\phi_\lambda(t_h),\quad \tilde z_{i,\lambda}=\frac{1}{p}\sum_{h=0}^{p-1}Z_i(t_h)\phi_\lambda(t_h),\quad \tilde\varepsilon_{i,\lambda}=\frac{1}{p}\sum_{h=0}^{p-1}\varepsilon_{i,h}\phi_\lambda(t_h)
\]
and
\[
s_\lambda^2 = {\rm Var}(\tilde z_{i,\lambda}),\quad \sigma_\lambda^2 = {\rm Var}(\tilde\varepsilon_{i,\lambda}).
\]
In the sequel, we consider
$\widetilde Y_i=(\widetilde Y_{i,\lambda})_{\lambda\in\Lambda_D}$, $\tilde z_i=(\tilde z_{i,\lambda})_{\lambda\in\Lambda_D}$ and $\tilde\varepsilon_i=(\tilde\varepsilon_{i,\lambda})_{\lambda\in\Lambda_D}.$

\begin{lemma}
	\label{SG1}
	Under Assumption~\ref{Ass:q}  ,
	for any $u\in\R^{|\Lambda_D|}$,
	\begin{equation}\label{SG0}
		\|u^T\tilde z_1\|^2_{\psi_2}\leq  C_2E[(u^T\tilde z_1)^2]. 
	\end{equation}
	If we consider
	$\tilde\varepsilon_1$ instead of $\tilde z_1$, Inequality~\eqref{SG0} holds with an absolute constant instead of $C_2$. Furthermore,
	\begin{equation}\label{trace}
		Tr\Big(E\big[\tilde z_1\tilde z_1^T\big]\Big)= \sum_{\lambda\in\Lambda_D}s_\lambda^2, \quad Tr\Big(E\big[\tilde\varepsilon_1\tilde\varepsilon_1^T\big]\Big)= \sum_{\lambda\in\Lambda_D}\sigma_\lambda^2.
	\end{equation}
\end{lemma}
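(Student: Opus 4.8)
The plan is to reduce both sub-Gaussian estimates to linear functionals of the discretized vector $\mathbf Z$ and then invoke the assumptions directly. For the first claim, I would observe that, by the very definition of $\tilde z_{1,\lambda}$, for any $u=(u_\lambda)_{\lambda\in\Lambda_D}$,
\[
u^T\tilde z_1=\sum_{\lambda\in\Lambda_D}u_\lambda\frac1p\sum_{h=0}^{p-1}Z_1(t_h)\phi_\lambda(t_h)=\sum_{h=0}^{p-1}\Big(\frac1p\sum_{\lambda\in\Lambda_D}u_\lambda\phi_\lambda(t_h)\Big)Z_1(t_h)=v^T\mathbf Z,
\]
where $v=(v_0,\ldots,v_{p-1})^T$ with $v_h=\frac1p\sum_{\lambda\in\Lambda_D}u_\lambda\phi_\lambda(t_h)$, using that $Z_1$ has the same law as $Z$. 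Since this holds for the specific $v$ built from $u$, Assumption~\ref{Ass:q} applied to this $v$ yields immediately $\|u^T\tilde z_1\|_{\psi_2}^2=\|v^T\mathbf Z\|_{\psi_2}^2\leq C_2E[(v^T\mathbf Z)^2]=C_2E[(u^T\tilde z_1)^2]$, which is exactly~\eqref{SG0}.

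For the noise part, the key remark is that $u^T\tilde\varepsilon_1=\frac1p\sum_{h=0}^{p-1}\big(\sum_{\lambda\in\Lambda_D}u_\lambda\phi_\lambda(t_h)\big)\varepsilon_{1,h}$ is a linear combination of the i.i.d.\ centered Gaussian errors $\varepsilon_{1,h}\sim\mathcal N(0,\sigma^2)$, hence is itself a centered Gaussian variable $\mathcal N(0,\tau^2)$ with $\tau^2=E[(u^T\tilde\varepsilon_1)^2]$. I would then recall the elementary fact that the sub-Gaussian norm of a centered Gaussian is proportional to its standard deviation: from the definition $\|W\|_{\psi_2}=\sup_{q\geq1}q^{-1/2}(E|W|^q)^{1/q}$ together with the exact Gaussian absolute moments $E|W|^q=\tau^q2^{q/2}\Gamma((q+1)/2)/\sqrt\pi$, the supremum over $q\geq1$ equals $c_0\tau$ for a finite absolute constant $c_0$. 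Squaring gives $\|u^T\tilde\varepsilon_1\|_{\psi_2}^2=c_0^2E[(u^T\tilde\varepsilon_1)^2]$, i.e.\ the analogue of~\eqref{SG0} with the absolute constant $c_0^2$ in place of $C_2$.

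Finally, the trace identities~\eqref{trace} follow from the fact that both $\tilde z_1$ and $\tilde\varepsilon_1$ are centered (because $Z$ and the errors are centered), so the diagonal entries of their second-moment matrices are exactly the variances: $Tr\big(E[\tilde z_1\tilde z_1^T]\big)=\sum_{\lambda\in\Lambda_D}E[\tilde z_{1,\lambda}^2]=\sum_{\lambda\in\Lambda_D}\var(\tilde z_{1,\lambda})=\sum_{\lambda\in\Lambda_D}s_\lambda^2$, and likewise $Tr\big(E[\tilde\varepsilon_1\tilde\varepsilon_1^T]\big)=\sum_{\lambda\in\Lambda_D}\sigma_\lambda^2$.

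I do not expect any serious obstacle here: the whole statement rests on the linearity that permits the reduction to $v^T\mathbf Z$ and on standard Gaussian computations. The only mildly delicate point is pinning down the absolute constant $c_0$ in the Gaussian sub-Gaussian norm, but this is a routine and dimension-free calculation requiring only the exact formula for the absolute moments of a Gaussian.
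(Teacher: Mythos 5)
Your proof is correct and follows essentially the same route as the paper's: the identity $u^T\tilde z_1=v^T\mathbf Z$ with $v_h=\frac1p\sum_{\lambda\in\Lambda_D}u_\lambda\phi_\lambda(t_h)$ followed by an application of Assumption~\ref{Ass:q} to that $v$, and the observation that the trace identities amount to summing variances of centered coordinates. The only cosmetic difference is in the noise term, where you compute the $\psi_2$-norm of the exactly Gaussian variable $u^T\tilde\varepsilon_1$ directly from the Gaussian moment formula (even getting an equality), whereas the paper cites Vershynin's results on sub-Gaussian sums; both yield the required bound with an absolute constant.
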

\begin{proof}[of Lemma~\ref{SG1}]
	Since ${\bf Z_1}:=\{Z_1(t_0),\ldots,Z_1(t_{p-1})\}^T$ is a zero-mean sub-Gaussian vector, the vector $\tilde z_1$ is also a zero-mean sub-Gaussian vector. We have, for any $u\in\R^{|\Lambda_D|}$,
	\begin{align*}
		\|u^T\tilde z_1\|^2_{\psi_2}&=\left\|\sum_{\lambda\in\Lambda_D}u_\lambda \tilde z_{1,\lambda}\right\|^2_{\psi_2}\\
		&=\left\|\sum_{\lambda\in\Lambda_D}u_\lambda \times\frac{1}{p}\sum_{h=0}^{p-1}Z_1(t_h)\phi_\lambda(t_h)\right\|^2_{\psi_2}\\
		&=\left\| v^T{\bf Z_1}\right\|^2_{\psi_2},
	\end{align*}
	with $v=(v_h)_{h=0,\ldots,p-1}$ and $v_h:=\frac{1}{p}\sum_{\lambda\in\Lambda_D}u_{\lambda}\phi_\lambda(t_h).$ Therefore,
	\begin{align*}
		\|u^T\tilde z_1\|^2_{\psi_2}&\leq C_2E[(v^T{\bf Z_1})^2]\\
		&\leq C_2E\Big[\sum_{h,h'=0}^{p-1}v_hZ_1(t_h)Z_1(t_{h'})v_{h'}\Big]\\
		&\leq C_2\sum_{\lambda,\lambda'\in\Lambda_D}u_\lambda u_{\lambda'}\frac{1}{p^2}E\Big[\sum_{h,h'=0}^{p-1}\phi_\lambda(t_h)\phi_{\lambda'}(t_{h'})Z_1(t_h)Z_1(t_{h'})\Big]\\
		&\leq  C_2E[(u^T\tilde z_1)^2].
	\end{align*}
	%Using Lemma A3 of Bunea and Xiao (2015), we conclude that
	% $$\|\|\tilde z\|_{\ell_2}\|^2_{\psi_2}\leq 2c_0Tr\Big(E\big[\tilde z\tilde z^T\big]\Big)= 2c_0\sum_{\lambda\in\Lambda_D}E\big[\tilde z_\lambda^2\big]=2c_0\sum_{\lambda\in\Lambda_D}s_\lambda^2.$$
	Now, if we consider $\tilde\varepsilon_1$ instead of  $\tilde z_1$, setting ${\bf\varepsilon_1}:=(\varepsilon_{1,0},\ldots,\varepsilon_{1,p-1})^T,$ and using Section~5.2.3 and Lemma~5.24 of Vershynin (2012), 
	$$\left\| v^T{\bf\varepsilon_1}\right\|^2_{\psi_2}\leq C\sigma^2\|v\|_{\ell_2}^2= CE[(u^T\tilde \varepsilon_1)^2] ,$$
	with $C$ an absolute constant, and
	$$\|u^T\tilde \varepsilon_1\|^2_{\psi_2}\leq  CE[(u^T\tilde \varepsilon_1)^2].$$ 
	The equalities \eqref{trace} are obvious.
\end{proof}
Results of the previous lemma are useful for the following result.
\begin{lemma}
	\label{SG2}
	We denote $X=(X_{\lambda\lambda'})_{\lambda,\lambda'\in\Lambda_D}$ the matrix whose entries are
	$$X_{\lambda\lambda'}=\widetilde Y_{1,\lambda}\widetilde Y_{1,\lambda'}-E[\widetilde Y_{1,\lambda}\widetilde Y_{1,\lambda'}].$$
	Setting
	$$M_D:=\sum_{\lambda\in\Lambda_D}s_\lambda^2+\sum_{\lambda\in\Lambda_D}\sigma^2_\lambda= \sum_{\lambda\in\Lambda_D}\left(\frac1{p^2}\sum_{h,h'=0}^{p-1}K(t_h,t_{h'})\phi_\lambda(t_h)\phi_\lambda(t_{h'})+\frac{\sigma^2}{p^2}\sum_{h=0}^{p-1}\phi_\lambda^2(t_h)\right),$$
	under Assumption~\ref{Ass:q}, there existe an absolute constant $\bar C$ such that  for any $t\geq \bar C(C_2+1) M_D$,
	$$E\left[\exp(t^{-1}\|X\|_F )\right]\leq\exp(1).$$
\end{lemma}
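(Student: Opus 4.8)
The plan is to reduce the exponential moment of $\|X\|_F$ to a sub-exponential bound on the squared Euclidean norm $\|\tilde y_1\|_{\ell_2}^2$ of the vector $\tilde y_1=(\tilde y_{1,\lambda})_{\lambda\in\Lambda_D}$, and then to pass from a moment bound to the exponential moment. First I would exploit that $\tilde y_1\tilde y_1^T$ is rank one, so $\|\tilde y_1\tilde y_1^T\|_F=\|\tilde y_1\|_{\ell_2}^2$, while $E[\tilde y_1\tilde y_1^T]$ is positive semidefinite, whence $\|E[\tilde y_1\tilde y_1^T]\|_F\le Tr\big(E[\tilde y_1\tilde y_1^T]\big)=\sum_{\lambda\in\Lambda_D}E[\tilde y_{1,\lambda}^2]=M_D$. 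By the triangle inequality for $\|\cdot\|_F$ this gives $\|X\|_F\le\|\tilde y_1\|_{\ell_2}^2+M_D$, so it suffices to control the exponential moment of $\|\tilde y_1\|_{\ell_2}^2$.

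The heart of the argument is to bound the moments $E[\|\tilde y_1\|_{\ell_2}^{2q}]$ in a way that produces the trace $M_D$ rather than the cardinality of $\Lambda_D$. I would apply Minkowski's inequality in $L^q$ to $\|\tilde y_1\|_{\ell_2}^2=\sum_{\lambda}\tilde y_{1,\lambda}^2$:
\[
\big(E[\|\tilde y_1\|_{\ell_2}^{2q}]\big)^{1/q}\le\sum_{\lambda\in\Lambda_D}\big(E[\tilde y_{1,\lambda}^{2q}]\big)^{1/q}=\sum_{\lambda\in\Lambda_D}\|\tilde y_{1,\lambda}\|_{L^{2q}}^2.
\]
Applying Lemma~\ref{SG1} with $u$ equal to the $\lambda$-th canonical basis vector, together with the triangle inequality for $\|\cdot\|_{\psi_2}$ and the independence of $\tilde z_1$ and $\tilde\varepsilon_1$, yields a per-coordinate sub-Gaussian bound $\|\tilde y_{1,\lambda}\|_{\psi_2}^2\le c_0(C_2+1)(s_\lambda^2+\sigma_\lambda^2)$ for an absolute constant $c_0$. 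Since $\|\tilde y_{1,\lambda}\|_{L^{2q}}^2\le 2q\,\|\tilde y_{1,\lambda}\|_{\psi_2}^2$ by the definition of the sub-Gaussian norm, summing over $\lambda$ produces exactly $\sum_\lambda(s_\lambda^2+\sigma_\lambda^2)=M_D$, so that $E[\|\tilde y_1\|_{\ell_2}^{2q}]\le\big(2q\,c_0(C_2+1)M_D\big)^q$ for every integer $q\ge1$.

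Finally I would convert this into the exponential moment. Setting $\tau:=2c_0(C_2+1)M_D$, expanding the exponential as a power series and using $q!\ge(q/e)^q$, the moment bound gives
\[
E\left[\exp\big(t^{-1}\|\tilde y_1\|_{\ell_2}^2\big)\right]\le 1+\sum_{q\ge1}\frac{(q\tau)^q}{q!\,t^q}\le 1+\sum_{q\ge1}\left(\frac{e\tau}{t}\right)^q,
\]
which is finite and bounded by $2$ as soon as $t\ge 2e\tau$. Combining with the first step via $E[\exp(t^{-1}\|X\|_F)]\le e^{t^{-1}M_D}\,E[\exp(t^{-1}\|\tilde y_1\|_{\ell_2}^2)]$ and choosing the absolute constant $\bar C$ large enough that $t\ge\bar C(C_2+1)M_D$ forces both $e^{t^{-1}M_D}$ to be close to $1$ and $e\tau/t$ to be small, the product is at most $\exp(1)$, which is the claim.

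I expect the main obstacle to be ensuring that the trace $M_D$, rather than $\mathrm{card}(\Lambda_D)$ times the operator norm, governs the sub-exponential scale: a cruder control of $\|\tilde y_1\|_{\ell_2}^2$ through a net over the sphere or a union bound would lose a factor of the dimension and ruin the final $n^{-1}$-type rate. The Minkowski-in-$L^q$ step is precisely what avoids this, since it reduces everything to the diagonal sub-Gaussian norms $\|\tilde y_{1,\lambda}\|_{\psi_2}^2$, whose sum is the trace; the remaining effort is only careful bookkeeping of the absolute constants so that the exponential moment lands below $\exp(1)$.
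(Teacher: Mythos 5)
Your proof is correct, but it takes a genuinely different route from the paper's. The paper splits $\tilde y_1=\tilde z_1+\tilde\varepsilon_1$, bounds $\|X\|_F\leq 2\|\tilde z_1\|_{\ell_2}^2+2\|\tilde\varepsilon_1\|_{\ell_2}^2+M_D$, factorizes the exponential moment using the independence of $\tilde z_1$ and $\tilde\varepsilon_1$, and then controls each factor by combining Lemma~\ref{SG1} with Proposition~A.1 of Bunea and Xiao (2015), an exponential-moment bound for centered quadratic forms of sub-Gaussian vectors whose scale is the trace $\sum_{\lambda}s_\lambda^2$ (resp.\ $\sum_{\lambda}\sigma_\lambda^2$). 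You instead keep $\tilde y_1$ whole, use the rank-one identity $\|\tilde y_1\tilde y_1^T\|_F=\|\tilde y_1\|_{\ell_2}^2$ together with $\|E[\tilde y_1\tilde y_1^T]\|_F\leq Tr\big(E[\tilde y_1\tilde y_1^T]\big)=M_D$ (valid since the matrix is positive semidefinite, and the identification of the trace with $M_D$ is where independence of signal and noise actually enters), and you replace the external concentration result by an elementary chain: Minkowski's inequality in $L^q$ reduces $E[\|\tilde y_1\|_{\ell_2}^{2q}]$ to the diagonal quantities $\|\tilde y_{1,\lambda}\|_{L^{2q}}^2\leq 2q\|\tilde y_{1,\lambda}\|_{\psi_2}^2$, whose sum is at most $c_0(C_2+1)M_D$ by Lemma~\ref{SG1} applied to canonical basis vectors plus the triangle inequality for $\|\cdot\|_{\psi_2}$; then a Taylor expansion with $q!\geq (q/e)^q$ converts the moment bounds $E[\|\tilde y_1\|_{\ell_2}^{2q}]\leq (q\tau)^q$ into the exponential moment, and choosing $\bar C$ large makes both $e^{M_D/t}$ and the geometric tail small enough that the product stays below $\exp(1)$. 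Every step checks out, all constants are absolute, and crucially your Minkowski step preserves the trace scaling $M_D$ rather than $\mathrm{card}(\Lambda_D)$ times an operator norm, exactly as you emphasize. What your route buys is self-containedness: no appeal to Bunea--Xiao and no need for the independence-based factorization of the exponential moment. What the paper's route buys is that the cited proposition delivers genuine concentration for the \emph{centered} quadratic form (a Bernstein-type exponent), which is stronger than what the lemma requires but comes for free from the reference; either way the conclusion, and its use in Proposition~\ref{concentration} via the Juditsky--Nemirovski theorem, is unchanged.
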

\begin{proof}[of Lemma~\ref{SG2}]
	We have
	\begin{align*}
		\|X\|_F&=\|\widetilde Y_1\widetilde Y_1^T-E[\widetilde Y_1\widetilde Y_1^T]\|_F\\
		&\leq\|(\tilde z_1+\tilde \varepsilon_1)(\tilde z_1+\tilde \varepsilon_1)^T\|_F+\|E\big[(\tilde z_1+\tilde \varepsilon_1)(\tilde z_1+\tilde \varepsilon_1)^T\big]\|_F\\
		&\leq\|\tilde z_1\tilde z_1^T\|_F+\|\tilde\varepsilon_1\tilde\varepsilon_1^T\|_F+2\|\tilde z_1\tilde\varepsilon_1^T\|_F+\|E\big[\tilde z_1\tilde z_1^T\big]\|_F+\|E\big[\tilde\varepsilon_1\tilde\varepsilon_1^T\big]\|_F\\
		&\leq\|\tilde z_1\|_{\ell_2}^2+\|\tilde\varepsilon_1\|_{\ell_2}^2+2\|\tilde z_1\|_{\ell_2}\|\tilde\varepsilon_1\|_{\ell_2}+\|E\big[\tilde z_1\tilde z_1^T\big]\|_F+\|E\big[\tilde\varepsilon_1\tilde\varepsilon_1^T\big]\|_F.
	\end{align*}
	We also have
	\begin{align*}
		\|E\big[\tilde z_1\tilde z_1^T\big]\|_F^2&=\sum_{\lambda,\lambda'\in\Lambda_D}\big(E[\tilde z_{1,\lambda}\tilde z_{1,\lambda'}]\big)^2\\
		&\leq\sum_{\lambda\in\Lambda_D}\sum_{\lambda'\in\Lambda_D}E[\tilde z_{1,\lambda}^2]E[\tilde z_{1,\lambda'}^2]\\
		&\leq\Big(\sum_{\lambda\in\Lambda_D}s_\lambda^2\Big)^2.
	\end{align*}
	Therefore
	$$\|E\big[\tilde z_1\tilde z_1^T\big]\|_F\leq\sum_{\lambda\in\Lambda_D}s_\lambda^2$$
	and similarly,
	$$\|E\big[\tilde\varepsilon_1\tilde\varepsilon_1^T\big]\|_F\leq\sum_{\lambda\in\Lambda_D}\sigma_\lambda^2.$$
	We finally obtain
	\begin{align*}
		\|X\|_F&\leq 2\|\tilde z_1\|_{\ell_2}^2+2\|\tilde\varepsilon_1\|_{\ell_2}^2+M_D
	\end{align*} 
	and we have
	\begin{align*}
		E\left[\exp(t^{-1}\|X\|_F )\right]&\leq E\left[\exp(2t^{-1}\|\tilde z_1\|_{\ell_2}^2)\right]\times E\left[\exp(2t^{-1}\|\tilde\varepsilon_1\|_{\ell_2}^2)\right]\times\exp(t^{-1}M_D).
	\end{align*} 
	Then, using Lemma~\ref{SG1} and Proposition A.1. of Bunea and Xiao (2015), we obtain for $C_*$ and $c_*$ two absolute positive constants, if $t>c_*(4C_2+1)\sum_{\lambda\in\Lambda_D}s_\lambda^2$,
	\begin{align*}
		E\left[\exp(2t^{-1}\|\tilde z_1\|_{\ell_2}^2)\right]&\leq E\left[\exp\left(2t^{-1}\Big(\|\tilde z_1\|_{\ell_2}^2-\sum_{\lambda\in\Lambda_D}s_\lambda^2\Big)\right)\right]\times\exp\left(2t^{-1}\sum_{\lambda\in\Lambda_D}s_\lambda^2\right)\\
		&\leq \exp\left(C_*\left(\frac{(4C_2+1)\sum_{\lambda\in\Lambda_D}s_\lambda^2}{t}\right)^2+2t^{-1}\sum_{\lambda\in\Lambda_D}s_\lambda^2 \right).
	\end{align*} 
	Similarly, for $t$ larger than $\sum_{\lambda\in\Lambda_D}\sigma_\lambda^2$ up to a multiplicative absolute constant, 
	$$E\left[\exp(2t^{-1}\|\tilde\varepsilon_1\|_{\ell_2}^2)\right]\leq \exp\left(C_{**}\left(\frac{\sum_{\lambda\in\Lambda_D}\sigma_\lambda^2}{t}\right)^2+2t^{-1}\sum_{\lambda\in\Lambda_D}\sigma_\lambda^2 \right),$$
	where $C_{**}$ is an absolute constant.  This ends the proof of the lemma.
\end{proof}
The following proposition controls the term  $\|{\widehat G}_\phi- G_\phi\|_F$ as required to complete the proof of Theorem~\ref{theo:Generalproba}  .
\begin{proposition}
	\label{concentration}
	We assume that Assumption~\ref{Ass:q}   is satisfied. For $\gamma>0$, with probability larger than $1-2\exp(-1/64\min(\gamma^2,16\gamma\sqrt{n})),$
	$$\|{\widehat G}_\phi- G_\phi\|_F\leq \frac{ \bar C(e^{1/2}+\gamma)(C_2+1)}{\sqrt{n}}\sum_{\lambda\in\Lambda_D}\Big[\sigma_\lambda^2+s_\lambda^2\Big],$$
	where $\bar C$ is an absolute constant.
\end{proposition}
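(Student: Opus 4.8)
The plan is to recognize the centered matrix $\widehat{G}_\phi - G_\phi$ as a normalized sum of independent mean-zero random matrices and to apply a large deviation inequality for martingale differences valued in a $2$-smooth space. For $i=1,\ldots,n$, set
\[
X^{(i)}=\left(\tilde y_{i,\lambda}\tilde y_{i,\lambda'}-E[\tilde y_{i,\lambda}\tilde y_{i,\lambda'}]\right)_{\lambda,\lambda'\in\Lambda_D},
\]
so that $X^{(1)},\ldots,X^{(n)}$ are i.i.d.\ centered copies of the matrix $X$ of Lemma~\ref{SG2} and $\widehat{G}_\phi-G_\phi=\frac1n\sum_{i=1}^n X^{(i)}$. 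I view each $X^{(i)}$ as an element of the space of $D\times D$ real matrices endowed with the Frobenius inner product; this space is isometric to a Euclidean space, hence is a Hilbert space and in particular $2$-smooth with an absolute smoothness constant. Since the $X^{(i)}$ are i.i.d.\ and centered, they form a martingale difference sequence for the filtration $\mathcal{F}_i=\sigma(X^{(1)},\ldots,X^{(i)})$, so the conditional moment hypotheses of the martingale bound reduce to the unconditional statement of Lemma~\ref{SG2}.

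The key analytic input is precisely Lemma~\ref{SG2}: writing $M_D=\sum_{\lambda\in\Lambda_D}(s_\lambda^2+\sigma_\lambda^2)$ and $K:=\bar C(C_2+1)M_D$, we have $E[\exp(\|X\|_F/t)]\leq e$ for every $t\geq K$. This is exactly the sub-exponential control of $\|X^{(i)}\|_F$ required by the Juditsky--Nemirovski large deviation inequality \cite{JN} for martingale differences in a $2$-smooth space. Applying that inequality with the common scale $\sigma_i=K$, so that $\big(\sum_{i=1}^n\sigma_i^2\big)^{1/2}=K\sqrt{n}$, I expect a Bernstein-type tail for $S_n:=\sum_{i=1}^n X^{(i)}$ of the form
\[
\mathbb{P}\left(\|S_n\|_F\geq K\sqrt{n}\,(e^{1/2}+\gamma)\right)\leq 2\exp\left(-\tfrac{1}{64}\min(\gamma^2,16\gamma\sqrt{n})\right),
\]
where the baseline term $e^{1/2}$ stems from the normalization $E[\exp(\|X\|_F/K)]\leq e$ via the exponential Chebyshev argument, and the two branches of the $\min$ encode the sub-Gaussian regime for moderate $\gamma$ and the heavier sub-exponential tail for large $\gamma$.

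Dividing by $n$ then turns this into the announced bound, since $\|\widehat{G}_\phi-G_\phi\|_F=\|S_n\|_F/n$ gives
\[
\|\widehat{G}_\phi-G_\phi\|_F\leq\frac{\bar C(C_2+1)(e^{1/2}+\gamma)}{\sqrt{n}}\sum_{\lambda\in\Lambda_D}\big(s_\lambda^2+\sigma_\lambda^2\big)
\]
on the same event of probability at least $1-2\exp(-\tfrac{1}{64}\min(\gamma^2,16\gamma\sqrt{n}))$. The main obstacle is the careful invocation of \cite{JN}: one must identify the smoothness constant of the Frobenius-norm space, confirm that the single-index moment bound of Lemma~\ref{SG2} matches the conditional sub-exponential hypothesis (immediate here by independence), and, most delicately, track the absolute constants through both deviation regimes so as to reach the precise factors $\tfrac{1}{64}$ and $16$. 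The transition between the $\gamma^2$ and $\gamma\sqrt{n}$ exponents --- the signature of sub-exponential rather than sub-Gaussian increments --- is exactly what the \cite{JN} martingale inequality is built to supply.
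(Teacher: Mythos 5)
Your proposal is correct and follows essentially the same route as the paper: the paper's proof also applies the Juditsky--Nemirovski large deviation inequality (their Theorem~4.1) in the Frobenius-norm space, whose Hilbert-space structure supplies the required smoothness, with Lemma~\ref{SG2}'s exponential moment bound $E[\exp(t^{-1}\|X\|_F)]\leq e$ for $t\geq \bar C(C_2+1)\sum_{\lambda\in\Lambda_D}(\sigma_\lambda^2+s_\lambda^2)$ as the sole analytic input, and then normalizes by $n$. The only cosmetic difference is terminological (you call the space $2$-smooth, the paper invokes the smoothness parameter $\alpha=1$ in the Juditsky--Nemirovski convention), which does not change the argument.
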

\begin{proof}[of Proposition~\ref{concentration}]
	We apply Theorem~4.1 of Juditsky and Nemiroski (2008) with $\alpha=1$, since $(\R^{|\Lambda_D|^2},\|\cdot\|_F)$ is 1-smooth (see Definition 2.1 of Juditsky and Nemiroski (2008)). Since Lemma~\ref{SG2} gives for
	$t\geq \bar C(C_2+1)\sum_{\lambda\in\Lambda_D}\Big[\sigma_\lambda^2+s_\lambda^2\Big]$,
	$$E\left[\exp(t^{-1}\|X\|_F )\right]\leq\exp(1),$$
	for $\gamma>0$, with probability larger than $1-2\exp(-1/64\min(\gamma^2,16\gamma\sqrt{n})),$ 
	\begin{align*}
		\|{\widehat G}_\phi- G_\phi\|_F&\leq 
		\frac{\bar C(e^{1/2}+\gamma)(C_2+1)}{\sqrt{n}}\sum_{\lambda\in\Lambda_D}\Big[\sigma_\lambda^2+s_\lambda^2\Big].
	\end{align*}
	Proposition~\ref{concentration} is proved.
\end{proof}
Plugging the upper bound of Proposition~\ref{concentration} in \eqref{ubinter} provides the stated result of Theorem~\ref{theo:Generalproba}  .
\subsection{Proof of Proposition~\ref{cor:boundH}}
We control each deterministic term of the bound obtained in Theorems~\ref{theo:GeneralEsp} and \ref{theo:Generalproba}.
Using \eqref{pi}, we first have for any $f\in\L_2$, 
\begin{align*}
	\|\Pi_D\Gamma\Pi_D(f)-\Gamma(f)\|^2&=\int_0^1\Big(\Pi_D\Gamma\Pi_D(f)(t)-\Gamma(f)(t)\Big)^2dt\\
	&=\int_0^1\Big(\int_0^1\Pi_{S_D^2}K(s,t)f(s)ds- \int_0^1 K(s,t)f(s)ds\Big)^2dt\\
	&=\int_0^1\Big(\int_0^1(\Pi_{S_D^2}K(s,t)-K(s,t))f(s)ds\Big)^2dt\\
	&\leq\int_0^1\left[\int_0^1(\Pi_{S_D^2}K(s,t)-K(s,t))^2ds\int_0^1f^2(s)ds\right]dt\\
	&\leq\|\Pi_{S_D^2}K-K\|^2\|f\|^2
\end{align*}
and then 
\[
\vvvert\Pi_D\Gamma\Pi_D-\Gamma\vvvert^2\leq\|\Pi_{S_D^2}K-K\|^2. 
\] 
Now, we take $(s,t)\in [0,1]^2$. Then there exists a unique couple $(\lambda,\lambda')\in\Lambda_D^2$ such that $s\in I_\lambda$ and $t\in I_{\lambda'}$.
Therefore, $\phi_{\lambda''}(s)=0$ for $\lambda''\not=\lambda$ and $\phi_{\lambda'''}(t)=0$ for $\lambda'''\not=\lambda'$ and then,
\begin{eqnarray*}
	\Pi_{S_D^2}K(s,t)-K(s,t)&=&\sum_{\lambda'',\lambda'''\in\Lambda_D}\int_0^1\int_0^1 K(s',t')\phi_{\lambda''}(s')\phi_{\lambda'''}(t')ds'dt' \phi_{\lambda''}(s)\phi_{\lambda'''}(t)-K(s,t)\\
	&=&\int_0^1\int_0^1 K(s',t')\phi_{\lambda}(s')\phi_{\lambda'}(t')ds'dt' \phi_{\lambda}(s)\phi_{\lambda'}(t)-K(s,t)\\
	&=&D^2\int_{I_\lambda}\int_{I_{\lambda'}} (K(s',t')-K(s,t))ds'dt'.
\end{eqnarray*}
Then, Eq.~\eqref{Kernelregularity}   gives
\begin{align*}
	\left|\Pi_{S_D^2}K(s,t)-K(s,t)\right|&\leq D^2\sqrt{L\|K\|_\infty}\int_{I_\lambda}\int_{I_{\lambda'}}\Big[ |s'-s|^\alpha + |t-t'|^{\alpha}\Big]ds'dt'\\
	&\leq \frac{4\sqrt{L\|K\|_\infty}}{\alpha+1}D^{-\alpha},
\end{align*}
meaning that 
\[
\vvvert\Pi_D\Gamma\Pi_D-\Gamma\vvvert^2\leq \frac{16L\|K\|_\infty}{(\alpha+1)^2} D^{-2\alpha}. 
\] 
For studying the terms $A_p^{(K)}(\phi,D)$ and $A_p^{(\sigma)}(\phi,D)$, we set  for any $h=0,\ldots,p-1$, $b_h=h/p$. Observe that $t_h=h/(p-1)\in[b_h,b_{h+1}]$. We also set for any $\lambda=0,\hdots,D-1$, 
\[J_\lambda=\{h=0,\ldots,p-1:\ Leb([b_h,b_{h+1}]\cap  I_\lambda)\not=0\}.\]
Remember that $m:=p/D$ is an integer, so that, $J_\lambda=\{m\lambda,\ldots,m\lambda+m-1\}$ and
\[I_\lambda=\Big[\frac{m\lambda}{p},\frac{m\lambda+m}{p}\Big]=\bigcup_{h\in J_\lambda}[b_h,b_{h+1}].\]
Then, since $\phi_\lambda(x)=\sqrt{D}1_{I_\lambda}(x)$ and $\text{card}(J_\lambda)=m=p/D$, for any $\lambda,\lambda'=0,\hdots,D-1$, 
\begin{eqnarray*}
	G^{(K)}_{\lambda,\lambda'}&:=&\frac1{p^2}\sum_{h,h'=0}^{p-1}K(t_h,t_{h'})\phi_\lambda(t_h)\phi_{\lambda'}(t_{h'})-\int_0^1\int_0^1K(s,t)\phi_{\lambda}(s)\phi_{\lambda'}(t)dsdt\\
	&=&\sum_{h,h'=0}^{p-1}\int_{b_h}^{b_{h+1}}\int_{b_{h'}}^{b_{h'+1}}\Big[K(t_h,t_{h'})-K(s,t)\Big]\phi_{\lambda}(s)\phi_{\lambda'}(t)dsdt\\
	&&+\sum_{h,h'=0}^{p-1}\int_{b_h}^{b_{h+1}}\int_{b_{h'}}^{b_{h'+1}}K(t_h,t_{h'})\Big[\phi_{\lambda}(t_h)\phi_{\lambda'}(t_{h'})-\phi_{\lambda}(s)\phi_{\lambda'}(t)\Big]dsdt\\
	&=&D\sum_{h\in J_\lambda}\sum_{h'\in J_{\lambda'}}\int_{b_h}^{b_{h+1}}\int_{b_{h'}}^{b_{h'+1}}\Big[K(t_h,t_{h'})-K(s,t)\Big]dsdt.
\end{eqnarray*}
Therefore,
\begin{eqnarray*}
	|G^{(K)}_{\lambda,\lambda'}|&\leq&D\sum_{h\in J_\lambda}\sum_{h'\in J_{\lambda'}}\int_{b_h}^{b_{h+1}}\int_{b_{h'}}^{b_{h'+1}}\sqrt{\|K\|_\infty L}\Big(|s-t_h|^\alpha+|t-t_{h'}|^\alpha\Big)dsdt\\
	&\leq&2\sqrt{\|K\|_\infty L}\times Dp^{-1}\text{card}(J_{\lambda'})\sum_{h\in J_\lambda}\int_{b_h}^{b_{h+1}}|s-t_h|^\alpha ds\\
	&\leq&2\sqrt{\|K\|_\infty L}\times Dp^{-1}\text{card}(J_{\lambda'})\text{card}(J_{\lambda})\times \frac{2}{\alpha+1}p^{-\alpha-1}\\
	&\leq&\frac{4\sqrt{\|K\|_\infty L}}{\alpha+1}D^{-1}p^{-\alpha}.
\end{eqnarray*}
Finally,
\[
A_p^{(K)}(\phi,D)=\|G^{(K)}\|_F^2=\sum_{\lambda,\lambda'\in\Lambda_D}\left(G_{\lambda,\lambda'}^{(K)}\right)^2\leq \frac{16\|K\|_\infty L}{(\alpha+1)^2}p^{-2\alpha}.
\]
Similarly, for any $\lambda,\lambda'=0,\hdots,D-1$, observing that for $\lambda\neq\lambda'$, $J_{\lambda}\cap J_{\lambda'}=\emptyset$,
\begin{eqnarray*}
	G_{\lambda,\lambda'}^{(\sigma)}&:=&\frac{\sigma^2}{p}\left(\frac1p\sum_{h=0}^{p-1}\phi_\lambda(t_h)\phi_{\lambda'}(t_{h})-\langle\phi_\lambda,\phi_{\lambda'}\rangle\right)\\
	&=&\frac{\sigma^2}{p}\left(\frac1p\sum_{h\in J_{\lambda}\cap J_{\lambda'}}D-\1_{\{\lambda=\lambda'\}}\right)=0
\end{eqnarray*}
and
\[A_p^{(\sigma)}(\phi,D)=\|G^{(\sigma)}\|_F^2=\sum_{\lambda,\lambda'\in\Lambda_D}\left(G_{\lambda,\lambda'}^{(\sigma)}\right)^2=0.
\]
Finally, for any $\lambda=0,\hdots,D-1$, 
\begin{eqnarray*}
	\sigma_\lambda^2+s_\lambda^2&=&\frac{\sigma^2}{p^2}\sum_{h=0}^{p-1}\phi_\lambda^2(t_h)+\frac1{p^2}\sum_{h,h'=0}^{p-1}K(t_h,t_{h'})\phi_\lambda(t_h)\phi_\lambda(t_{h'})\\
	&\leq&\frac{D\sigma^2}{p^2}\text{card}(J_{\lambda})+\frac{\|K\|_\infty D}{p^2}(\text{card}(J_{\lambda}))^2\\
	&\leq& \frac{\sigma^2}{p}+\frac{\|K\|_\infty }{D}
\end{eqnarray*}
and
\begin{eqnarray*}
	\sum_{\lambda\in\Lambda_D}\Big[\sigma_\lambda^2+s_\lambda^2\Big]&\leq&\|K\|_\infty+ \frac{\sigma^2 D}{p}.
\end{eqnarray*}
This ends the proof of Proposition~\ref{cor:boundH}.
%%%%%%%%%%%%%%%%%%%%%%%%
\subsection{Proof of Lemma~\ref{lemma:A2}}\label{Appendix:A2}
To prove Lemma~\ref{lemma:A2}, we just need to prove that for all $v=(v_h)_{h=0,\ldots,p}\in\R^p$,
\begin{equation}\label{A2}
E[\exp(tv^T{\bf Z})]\leq \exp\big(ct^2 M^2E[(v^T{\bf Z})^2]\big),\quad \forall\,t\in\R,
\end{equation}
with $c$ an absolute constant (see Proposition~2.5.2 of \cite{Vershynin2018}). 
%(see (5.12) of \cite{Vershynin2012}). 
For this purpose, we denote 
$$u_{v,d}:=\sum_{h=0}^{p-1}v_h\eta_d^*(t_h)\in\R.$$ 
Then, using \eqref{eq:KL}, we have 
\begin{align*}
E[(v^T{\bf Z})^2]&=\sum_{d \in \mathbb{N}^*} \mu^{*}_du_{v,d}^2
\end{align*}
and, %still by using (5.12) of \cite{Vershynin2012}, 
still by using Proposition~2.5.2 of \cite{Vershynin2018},
$\forall\,t\in\R$,
\begin{align*}
E[\exp(tv^T{\bf Z})]&=\prod_{d\in \mathbb{N}^*}E\Big[\exp\big(t\zeta^*_d \mu^{*1/2}_du_{v,d} \big)\Big]\\
&\leq\prod_{d\in \mathbb{N}^*}E\Big[\exp\big(ct^2 \mu^{*}_du_{v,d}^2\|\zeta^*_d \|_{\psi_2}^2\big)\Big],
\end{align*}
where $c$ is an absolute constant. 
By using \eqref{M}, we obtain
\begin{align*}
E[\exp(tv^T{\bf Z})]&\leq \prod_{d\in \mathbb{N}^*}E\Big[\exp\big(ct^2 \mu^{*}_du_{v,d}^2M^2\big)\Big]\\
&\leq \exp\big(ct^2M^2E[(v^T{\bf Z})^2]\big)
\end{align*}
and \eqref{A2} is satisfied. 
\end{document}